\documentclass[twoside]{article}

%
\usepackage[accepted]{aistats2021}
%


\setlength{\pdfpageheight}{11in}
\setlength{\pdfpagewidth}{8.5in}



\usepackage[T1]{fontenc}    
\usepackage{hyperref}       
\usepackage{url}            
\usepackage{amsfonts}       
\usepackage{microtype}      
\usepackage{lipsum}

\usepackage{amssymb}
\usepackage{amsmath}
\usepackage{amsthm}
\usepackage{xcolor}
\usepackage{xfrac}
\usepackage{bbm}

\usepackage{natbib}
\usepackage{booktabs}
\usepackage{enumitem}

\newtheorem{assumption}{Assumption}

\newtheorem{theorem}{Theorem}

\newtheorem{corollary}{Corollary}
\newtheorem{lemma}{Lemma}

\newtheorem{definition}{Definition}
\usepackage{apptools}
\AtAppendix{\counterwithin{lemma}{section}}
\AtAppendix{\counterwithin{theorem}{section}}
\AtAppendix{\counterwithin{corollary}{section}}

\makeatletter 
\def\blfootnote{\xdef\@thefnmark{}\@footnotetext}
\makeatother

\newcommand\var{\text{var}}

\newcommand{\norm}[1]{\left\lVert#1\right\rVert}

\newcommand{\X}{\mathbf{X}}
\newcommand{\SX}{\tilde{\X}}

\newcommand{\cc}{\mathbf{c}}
\newcommand{\CC}{\mathbf{C}}
\newcommand{\scc}{\tilde{\cc}}
\newcommand{\SC}{\tilde{\CC}}
\newcommand{\TC}{\tilde{C}}
\newcommand{\SL}{\mathbf{S}}

\newcommand{\x}{\mathbf{x}}
\newcommand{\ww}{\mathbf{w}}
\newcommand{\y}{\mathbf{y}}
\newcommand{\A}{\mathbf{A}}

\newcommand{\Y}{\mathbf{Y}}

\newcommand{\Z}{\mathbf{Z}}
\newcommand{\EE}{\mathbb{E}}
\newcommand{\E}{\mathcal{E}}
\newcommand{\PP}{\mathbb{P}}

\newcommand{\e}{\mathbf{e}}
\newcommand{\aaa}{\mathbf{a}}
\newcommand{\I}{\mathbf{I}}
\newcommand{\1}{\mathbf{1}}
\newcommand{\z}{\mathbf{z}}

\newcommand{\s}{\mathbf{s}}

\newcommand{\SE}{\tilde{\E}}
\newcommand{\SBE}{\tilde{\boldsymbol{\E}}}

\newcommand{\sx}{\tilde{\x}}
\newcommand{\sy}{\tilde{\y}}

\newcommand{\htau}{\hat{\tau}}
\newcommand{\hbeta}{\hat{\beta}}
\newcommand{\tbeta}{\tilde{\beta}}
\newcommand{\se}{\tilde{\boldsymbol{\epsilon}}}
\newcommand{\sepsilon}{\tilde{\epsilon}}

\newcommand{\sdelta}{\tilde{\delta}}
\newcommand{\sbdelta}{\tilde{\boldsymbol{\delta}}}

\usepackage{algorithmic}

\begin{document}
\twocolumn[
\aistatstitle{Designing Transportable Experiments Under S-Admissability}

\aistatsauthor{
My Phan\And David Arbour* \And
 Drew Dimmery* \And Anup B. Rao*  
}
\aistatsaddress{ University of Massachusetts\\ Amherst \\ \And  Adobe Research \And Facebook Core Data Science \And Adobe Research } ]

\begin{abstract}
We consider the problem of designing a randomized experiment on a source population to estimate the Average Treatment Effect (ATE) on a target population. We propose a novel approach which explicitly considers the target when designing the experiment on the source. Under the covariate shift assumption, we design an unbiased importance-weighted estimator for the target population's ATE. To reduce the variance of our estimator, we design a covariate balance condition (Target Balance) between the treatment and control groups based on the target population. We show that Target Balance achieves a higher variance reduction asymptotically than methods that do not consider the target population during the design phase.  Our experiments illustrate that Target Balance reduces the variance even for small sample sizes. 

\end{abstract}

\section{Introduction}
The problem of generalization is present everywhere that experiments are run.
In the online environment, tests are run with the users who show up on the product while the experiment is running (and are therefore highly active users), while inferences about user experience are most useful on the full set of users (both highly active and less active)~\citep{wang2019heavy}.
In clinical research, it is an omnipresent problem to recruit minorities into randomized trials~\citep{fisher2011challenging}, thus making it difficult to assume that the measured effects will generalize to the larger population of interest (e.g. the United States as a whole, or people afflicted with a particular health condition).
In lab experiments, the sample is often one of convenience such as undergraduates in rich countries or from pools of potential subjects available online~\citep{henrich2010most}. 
Field experiments in governance or development such as \citet{dunning2019voter} are conducted in particular countries or in particular communities, but the policy implications of such work stretch far beyond the borders of the study population.
As in \citet{dunning2019voter}, the desire is not just to understand how Burkina Faso voters respond to more information about their political leaders, but to understand how voters across the world might respond to similar informational treatments.
The same is true for experiments in development economics, such as microfinance~\citep{meager2019understanding} and for studies of internet phenomena~\citep{munger2018temporal}.
In these cases, it isn't a \emph{surprise} after running an experiment that generalizing the knowledge is important; indeed, generalization of knowledge to a broader population is core to the motivation for the experiment in the first place.

We pre-suppose that an experimenter knows ex-ante the population on which they wish to draw broader inferences.
The task we consider, therefore, is to design an experiment that best allows the generation of causal knowledge on this inferential target.
While previous work~\citep{hartman2015sate, dehejia2019local, stuart2011propensity, dugoff2014generalizing} has examined corrections on the analysis side to extrapolate estimates from sample to target population, the novelty of this work is in doing this through a \emph{design-based} solution.
That is, if you know your goal is to generalize to a target population, we consider how that should modify experimental design.
We focus in particular on the ``S-admissability'' condition for transportability, in which the outcome distribution conditional on a set of covariates is the same in both the source and target distributions~\citep{pearl2011transportability}.
\blfootnote{*Equal contribution.}

\textbf{Contributions. } Using the Mahalanobis distance and importance weighting, we design an estimator with a balancing condition for the target distribution's ATE that is unbiased and has low variance. 
\begin{itemize}[nosep,leftmargin=1em,labelwidth=*,align=left] 
    \item In Section~\ref{sec:design}, we introduce an importance-weighted estimator with a balance condition called Target Balance that explicitly considers the target distribution in the design phase. 
    \item In Section~\ref{sec:expectation} we show that using the importance-weighted estimator with Target Balance results in an unbiased estimator of the target distribution's ATE (Theorem~\ref{thm:unbiased})
    \item We analyze the variance assuming a linear model. In Section~\ref{sec:d1}, we show that when the dimension of the covariates $d=1$, for a finite sample size $n$, Target Balance reduces the variance (Corollary~\ref{col:var1d}). Moreover, among all balance criteria with rejection probability at most $\alpha$ (including balancing by only considering the source distribution, which we call Source Balance), Target Balance achieves the optimal variance reduction (Theorem~\ref{thm:optimality_1d}). When $d \ge 1$ (Section~\ref{sec:asymptotic}), when the sample size is large, Target Balance reduces the variance (Theorem~\ref{thm:var_reduction}) and achieves a lower variance than Source Balance (Theorem~\ref{thm:asymptotic_optimal}). 
    \item In Section~\ref{sec:simulation} we perform experiments\footnote{Code for this paper is available at~\url{https://github.com/myphan9/Designing_Transportable_Experiments}} to show that Target Balance has small mean-squared errors even for $d > 1$, small sample size and non-linear model. 
\end{itemize}

\section{Problem Setting}
\label{sec:problem_setting}
We first fix notation before proceeding to the problem setting. 
Upper-case letters are used to denote random variables, lower-case letters are used to denote values taken by them. We use bold-faced letters to denote $n$ samples and normal letters to denote a single sample. For example, $X_i \in \mathcal{R}^d$ is a random variable denoting the covariates of sample $i$. $\X = (X_1, ..., X_n)^T \in \mathcal{R}^{n \times d}$ is the random variables $X_1, .., X_n$ concatenated together. $x_i \in \mathcal{R}^d$ is a value of $X_i$, and $\x = (x_1, ..., x_n)^T \in \mathcal{R}^{n \times d}$ is a value of $\X$. 

Some random variables, like $X$, can have two different distributions, either source distribution or target distribution. In that case, we use $\EE^S X$, $\var^S X$ and $cov^S X$ to denote the expectation, variance and covariance with respect to the source distribution, and $\EE^T X$, $\var^T X$ and $cov^T X$ to denote with respect to the target distribution. We use no superscripts when there is no confusion. For example $\EE A_i$ is the expectation of the treatment assignment $A_i$ of sample $i$. For a random variable $R$, we use $\EE_R$, $\var_R$ and $cov_R$ to denote the expectation, variance and covariance over the randomness of $R$. For example, $\EE^S_{\X}$ denote the expectation over the randomness of $\X = (X_1, ..., X_n)^T$ according to the source distribution. We omit the  subscripts when it's clear. 

The problem considered in this paper is as follows. 
We assume that we are presented with two populations, referred to as the source and target populations, with corresponding densities $p_S$ and $p_T$, respectively.
We further assume that we observe a set of pre-treatment covariates from the source population, $x_1, \dots x_n \sim p_S$.
We assume that we are freely able to assign treatment, $a_1, \dots, a_{n} \in \{0, 1\}$ to individuals observed in the source population and observe their outcomes, $y_1, \dots, y_{n} \in \mathcal{R}$.
The estimand of interest is the average treatment effect for the \textit{target} population~(the population of individuals which were not subject to an experiment), 
\begin{align}
\label{eq:transportATE}
    \tau^T_Y =  \mathbb{E}^T [Y^{A=1} - Y^{A=0}] .
\end{align}
Where $Y^{A=0}$, $Y^{A=1}$ are the potential outcomes~\citep{rubin2011causal},  i.e., the values of $Y$ that would have been observed had treatment been observed at $A=1$ or $A=0$, respectively. We use $Y$ to denote $(Y^0, Y^1)$ and $Y^*$ to denote the \emph{observed} outcome.

In order to make this problem tractable we will assume the following throughout the remainder of the paper:
    \begin{assumption} 
    \label{ass:source_target}
    Equality of conditional densities, i.e., 
    $p_S(Y | X) = p_T(Y|X)$
    (note $p_S(X) \neq p_T(X)$ in general).
    \end{assumption}
This assumption places identification of the transportability of effects under the rubric of S-admissability~\citep{pearl2011transportability}.
    \begin{assumption}
    Overlap between source and target distributions, i.e., 
    $p_T(X) > 0 \implies p_S(X) > 0$.
    \end{assumption}
    \begin{assumption}
    \begin{align*}
        Y^1 = \psi(X)^T \beta_1 + \E_1 & \quad \quad
        Y^0 =  \psi(X)^T \beta_0 + \E_0
        \end{align*}
        where $\psi$ is a basis function and $\E_1, \E_0$ are mean zero random variables.
    \end{assumption}
To reduce notational clutter, and without loss of generality, we will assume that $\psi$ is the identity function for the remainder of paper so that we can write $X$ instead of $\psi(X)$. 
    \begin{assumption} 
    The ratio of the pdfs, $p_T(X)/p_S(X)$, is known. 
    \end{assumption}
In a nested trial design in which the sampling probabilities from the population are known \citep{dahabreh2019study}, this ratio can be calculated as $$\frac{p_T(X)}{p_S(X)}=\frac{p(X|S=0)}{p(X|S=1)} = \frac{p(S=0|X)}{p(S=1|X)}\frac{p(S=1)}{p(S=0)},$$ where $S=1$ indicates that the unit is selected to be in the source and $S=0$ indicates that the unit is not selected and is in the target. 

The assumptions, though nontrivial,  are common throughout the literature on transportability~\citep{stuart2011propensity,hartman2015sate,pearl2011transportability}.
We conjecture that similar results to those in this paper will hold in the case in which importance weights are estimated with parametric convergence rates.
We leave this extension as future work.

For a sample $i$, let $X_i, A_i$, $Y_i^a$ and $Y^*_i$ be the covariates, treatment, outcome of treatment $a$, and observed outcome. Let $n_0$ be the size of the control group (where $A_i=0$) and $n_1$ be the size of the treatment group (where $A_i=1$).  Similar to common practice~(c.f., \citet{stuart2011propensity,hartman2015sate,rudolph2017robust, buchanan2018generalizing}), we infer $\tau^T_Y$ with importance weights, 
\begin{align}
\label{eq:importance_sample}
\hat{\tau}^T_Y &:= \frac{1}{n_1} \sum_{A_i=1} W_i Y^{*}_i - \frac{1}{n_0} \sum_{A_i=0}^n W_i Y^{*}_i \nonumber \\
   &= \frac{1}{n_1} \sum_{i=1}^n W_iA_i Y^1_i - \frac{1}{n_0} \sum_{i=1}^n W_i (1-A_i) Y^0_i
\end{align}
where $W_i = \frac{p_T(X_i)}{p_S(X_i)} $.
While equation \ref{eq:importance_sample} is unbiased, the estimate can incur large variance in the presence of large importance weights.

For ease of notation we define $Z_i = 2A_i - 1 \in \{-1, 1\}$ and  let $\Z$ be the $n \times 1$ vector of random variables $Z_1, ..., Z_n$ and $\z$ be a value taken by $\Z$. $Y_i = (Y^0_i, Y^1_i)$ is a random variable denoting all possible outcomes of sample $i$. $\Y = (Y_1, ..., Y_n)^T \in \mathcal{R}^{n \times 2}$ is the random variables $Y_1, .., Y_n$ concatenated together. $y_i \in \mathcal{R}^2$ is a value of $Y_i$, and $\y = (y_1, ..., y_n)^T \in \mathcal{R}^{n \times 2}$ is a value of $\Y$. Let $w_i  = \frac{p_T(x_i)}{p_S(x_i)} $ and $\ww$ be the $n \times n$ diagonal matrix with $\ww(i,i) = w_i.$  For matrix $\aaa$, we use $\tilde{\aaa}$ to denote $\ww\aaa$ where each row $i$ of $\aaa$ is multiplied by $w_i$.
\section{Designing for Transportation}
\label{sec:design}


We consider $n_0 = n_1 = n/2$ throughout the paper. The core contribution of this work is a procedure to estimate equation \ref{eq:transportATE} which explicitly considers the target population when designing the experiment for the source population. 
We focus on adapting re-randomization, an experimental design procedure which optimizes balance, i.e., the difference in means of $X$ between treatment and control groups.
Specifically, rerandomization centers on a \textit{balance criterion},
\begin{definition}[Target Balance]
\label{def:target_balance}
With a rejection threshold $\alpha$, define the balance condition:
\begin{align*}
\phi_T^{\alpha} ( \x, \Z) =
    \begin{cases}
      1, & \text{if}\ M(\frac{2}{n} (\ww\x)^T\Z) < a(\x) \\
      0, & \text{otherwise}
    \end{cases}
\end{align*}
where $M(\frac{2}{n} (\ww\x)^T\Z)$ is a distance (defined below in Eq.~\ref{eq:M_def}) between the covariates associated with treatment and control given by $\Z$ and $a(\x)$ is chosen such that $\PP(\phi_T^{\alpha} = 1| \x) = 1- \alpha$. 
\end{definition}
We omit $\alpha$ and simply write $\phi_T$ when $\alpha$ is not necessary for exposition. We omit $\x$ and write $a$ when there is no confusion. 

The full assignment procedure is then
\begin{enumerate}[nosep,leftmargin=1em,labelwidth=*,align=left]
    \item Assign $A$ randomly for each person $1, \dots, n$ such that $\sum_i a_i = n_1$. There are $\binom{n}{n_1}$ ways to choose this, each of which is equally likely. 
    \item If $\phi_T(\x,\z) = 0$ return to step (1). 
    \item Conduct experiment with treatment assignments, $A$. 
\end{enumerate}

Following standard practice in rerandomization~\citep{morgan2012rerandomization}, we  will focus on a criterion based on Mahalanobis distance, but incorporating a weighting term to express our desire for balance in the \emph{target} distribution rather than in the source.
We refer to this weighted Mahalanobis distance as $M(\frac{2}{n} (\ww\x)^T\Z)$, where $M(\cdot)$ is defined as:
\begin{align}
\label{eq:M_def}
    \nonumber M(U)&:= 
    \left(U \right)^T Cov(U)^{-1}
     \left(U \right)\\
     &= \|B\|^2 \text{ where } B = U Cov(U)^{-1/2} 
\end{align}
Thus, the balance condition $M(\frac{2}{n} (\ww\x)^T\Z) < a$ is equivalent to truncating the square norm of $B$ to be less than $a$.
Note that this is a standardized measure of the difference in importance-weighted covariate-means between treatment and control, since
\begin{align*}
    \frac{2}{n} \sum_{Z_i = 1} w_i x_i - \frac{2}{n} \sum_{Z_i = -1} w_i x_i = \frac{2}{n} (\ww\x)^T\Z.
\end{align*}
Thus, rerandomization simply rejects designs with covariate imbalance larger than a pre-specified value.

The novelty in our proposed design is to reject samples based on imbalance in the the \emph{target} distribution rather than based on imbalance in the \emph{source} distribution. The standard in the rerandomization literature is to focus on balance in the \emph{source} distribution, which in our setup implies assuming that the target distribution is equal to the source distribution.
Therefore, importance weights in this case are all equal to one.
We call this balancing condition \textit{Source Balance}, which we denote by $\phi^{\alpha}_S(\x,\Z)$.

We explain the intuition behind using Target Balance rather than Source Balance. An unbiased estimator for the source's ATE is:
$ \hat{\tau}^S_Y := \frac{1}{n_1}\sum_i A_i Y_i^1 - \frac{1}{n_0}\sum_i A_i Y_i^0 .$ There are existing results \citep{li2018asymptotic, harshaw2019balancing} that can be applied to linear models to show variance reduction of $\hat{\tau}^S_Y$ with Source Balance defined by $\x$. 

By defining new variables $\tilde{Y}^a_i = W_i \cdot Y^a_i$ for $a \in \{0,1\}$, the importance weighted estimator can now be expressed in a similar form to $\hat{\tau}^S_Y$: $ \hat{\tau}^T_Y = \frac{1}{n_1}\sum_i A_i \tilde{Y}_i^1 - \frac{1}{n_0}\sum_i A_i \tilde{Y}_i^0 .$ 

We are no longer in a linear setting because
 $\tilde{Y}^a_i =  W_i ( \beta_a^T X_i  + \E_a).$ But by defining $\tilde{X}_i = W_i X_i$ and $\tilde{\E} =W_i \E$ we have:
$\tilde{Y}^a_i =  \beta_a^T \tilde{X}_i + \tilde{\E}_a,$
where  $\EE \left[ \tilde{Y}^a | X \right]$ is a linear function of $\tilde{X}$, which can be considered as a feature-transformed $X.$ \citep{li2018asymptotic, harshaw2019balancing} can now be applied to estimate $\hat{\tau}^T_Y$ with Target Balance defined by $\tilde{\x} = \ww \x$. 

Let $\rho(\x,\z) \in \{0,1\}$ be a function of $\x$ and $\z$ used in the re-randomization procedure. Note that since we re-sample $\Z$ until $\rho=1$, only the distribution of $\Z$ is affected by the balance condition. Let $\Z_{\rho}$ denote the distribution of $\Z$ after being accepted by the balance condition $\rho=1$. 
\section{Related Work}
\label{sec:related_work}
Our work relates to and ties together two distinct strands of research: (1) ex-post generalization of experimental results to population average effects and (2) ex-ante experimental design.
We will discuss each in turn.

\textbf{Generalization.}

Within the literature on methods for generalization, work has generally focused on ex-post adjustments to experiments previously run.

The foundational work of \citet{stuart2011propensity} provides an approach based on propensity scores for generalizing the results of experimental interventions to target populations.
Our work will leverage this general framework, but introduce methods for optimizing an experimental design to ensure effective generalization performance of resulting estimates.
\citet{hartman2015sate} similarly uses a combination of matching and weighting to generalize experimental results in-sample to a population average treatment effect on the treated.
Other work has also considered weighting-based approaches to generalization~\citep{buchanan2018generalizing}.

\citet{dehejia2019local} shows how to use an outcome-modeling approach to extrapolate effects estimated in one population to a population.
In contrast to \citet{hartman2015sate} and \citet{stuart2011propensity}, this approach relies on modeling the outcomes and then predicting effects in different locations rather than simply reweighting data observed in-sample.

\citet{dahabreh2018extending} provides a variety of estimation methods to generalize to a target population, including doubly-robust methods.
\citet{rudolph2017robust}, likewise, provides a doubly-robust targeted maximum likelihood estimator for transporting effects.

There has also been work focused particularly on identification in this setting. 
\citet{dahabreh2019study} defines a rigorous sampling framework for describing generalizability of experimental results and identifiability conditions through the g-formula.
\citet{pearl2011transportability} lays out a general framework for determining identifiability of effects generalized to new populations through.

\citet{miratrix2018worth} and \citet{coppock2018generalizability} challenge the premise of the necessity for generalization due to the rarity of heterogeneous treatment effects.
These studies specifically focused on survey experiments, however, and it isn't truly up for debate that many important objects of study have important heterogeneous components~\citep{allcott2015site,vivalt2015heterogeneous,dehejia2019local}.

\textbf{Experimental design.}

The standard practice for experimental design is blocking~\citep{greevy2004optimal}, in which units are divided into clusters and then a fixed number of units within each cluster are assigned to treatment.
This ensures balance on the cluster indicators within the sample.
\citet{higgins2016improving} provides a blocking scheme based on k-nearest-neighbors that can be calculated more efficiently than the ``optimal'' blocking of~\citep{greevy2004optimal}.

\citet{kallus2018optimal} takes an optimization approach to the problem of experimental design. This work optimizes treatment allocations based on in-sample measures of balance (particularly with respect to kernel means), showing how assumptions of smoothness are necessary to improve on simple Bernoulli randomization.

Rerandomization approaches simply draw allocations randomly until one is located which meets the pre-specified balance criteria.
This is also the basis of our proposed method.
\citet{morgan2012rerandomization} analyzes the rerandomization procedure of discarding randomized assignments that have more in-sample imbalance than a pre-specified criteria in terms of Mahalanobis distance.
\citet{li2018asymptotic} provides asymptotic results for rerandomization that does not rely on distributional assumptions on the covariates.

\citet{harshaw2019balancing} provides an efficient method for obtaining linear balance using a Gram-Schmidt walk.
Their algorithm includes a robustness-balance tradeoff tuneable by a parameter in their algorithm, and provides useful tools for analyzing experimental design which we use in our theoretical analyses in Section~\ref{sec:analysis}.

All aforementioned work on experimental design places as its objective estimation of effects on the sample (i.e. it optimizes for the sample average treatment effect).
This work departs by considering the alternative objective of prioritizing estimation on a target population (i.e. the population average treatment effect).



\section{Analysis}
\label{sec:analysis}
In this section we  will analyze the expectation and variance of our importance-weighted estimator in Eq.~\ref{eq:importance_sample} with Target Balance in Definition~\ref{def:target_balance}. 

Section~\ref{sec:expectation} shows that using the importance-weighted estimator with Target Balance results in an unbiased estimator of the target's ATE (Theorem~\ref{thm:unbiased}). 

In Section~\ref{sec:variance} we analyze the variance. In Section~\ref{sec:d1}, Corollary~\ref{col:var1d} shows that when the dimension of the covariates $d=1$, for a finite sample size $n$, Target Balance reduces the variance. Moreover, among all reasonable balance criteria with rejection probability at most $\alpha$ (including Source Balance), Target Balance achieves the optimal variance reduction (Theorem~\ref{thm:optimality_1d}). Section~\ref{sec:asymptotic} shows that when $d \ge 1$, when the sample size is large, Target Balance reduces the variance (Theorem~\ref{thm:var_reduction}) and achieves a lower variance than Source Balance (Theorem~\ref{thm:asymptotic_optimal}). 
\subsection{Expectation}
\label{sec:expectation}
In this section we will show that our importance-weighted estimator in Eq.~\ref{eq:importance_sample} is an unbiased estimator of the target's ATE with Target Balance: 

\begin{theorem}
\label{thm:unbiased}
Let $\hat{\tau}^T_Y $ be the importance-weighted estimator in Equation~\ref{eq:importance_sample}. When $n_0 = n_1 = n/2$:
$
     \mathbb{E}^S_{\X, \Y, \Z_{\phi_T}} \left[\hat{\tau}^T_Y   \right] = \tau^T_Y.
$
\end{theorem}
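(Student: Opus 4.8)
The plan is to peel the expectation off from the inside out, conditioning first on $\X$ and exploiting two structural facts: that the balance condition $\phi_T(\x,\Z)$ is a function of $(\x,\Z)$ alone, so that $\Z_{\phi_T}$ is independent of $\Y$ given $\X$; and that the weighted Mahalanobis criterion is invariant under the sign flip $\Z \mapsto -\Z$.

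First I would establish the symmetry of the accepted-assignment distribution. Since $M(U) = U^T Cov(U)^{-1} U$ is even in $U$ and $\frac{2}{n}(\ww\x)^T(-\Z) = -\frac{2}{n}(\ww\x)^T\Z$, the criterion satisfies $\phi_T(\x,\z) = \phi_T(\x,-\z)$. The flip $\z \mapsto -\z$ swaps treatment and control, preserves the constraint $\sum_i Z_i = 0$ (equivalently $n_0 = n_1 = n/2$), and is therefore an involution on the set of accepted assignments. Because the pre-rerandomization law of $\Z$ is uniform over the $\binom{n}{n_1}$ balanced assignments, $\Z_{\phi_T}$ is uniform over the accepted set, and the involution forces this law to be symmetric, $\PP(\Z_{\phi_T}=\z\mid\x)=\PP(\Z_{\phi_T}=-\z\mid\x)$. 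Consequently $\EE_{\Z_{\phi_T}}[Z_i\mid\x]=0$, i.e.\ $\EE_{\Z_{\phi_T}}[A_i\mid\x]=\tfrac12$ for every $i$.

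With this in hand the inner expectation is routine. Writing $A_i=(Z_i+1)/2$ and using linearity together with $\EE_{\Z_{\phi_T}}[A_i\mid\x]=\EE_{\Z_{\phi_T}}[1-A_i\mid\x]=\tfrac12$ and $n_0=n_1=n/2$ collapses $\EE_{\Z_{\phi_T}}[\hat{\tau}^T_Y\mid\X,\Y]$ to $\frac1n\sum_i W_i(Y^1_i-Y^0_i)$; here the independence of $\Z_{\phi_T}$ and $\Y$ given $\X$ is what lets the outcomes pass through as constants. Taking $\EE^S_{\Y\mid\X}$ next, the error terms $\tilde{\E}_a$ drop out since $\E_0,\E_1$ are mean zero, leaving $\frac1n\sum_i W_i\,\EE^S[Y^1_i-Y^0_i\mid X_i]$ with each $W_i$ a function of $X_i$.

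The final step is the change of measure. Using that the samples are i.i.d.\ from $p_S$ and that $W=p_T(X)/p_S(X)$, the outer expectation becomes $\EE^S_X\big[\tfrac{p_T(X)}{p_S(X)}\,\EE^S[Y^1-Y^0\mid X]\big]=\EE^T_X\big[\EE^S[Y^1-Y^0\mid X]\big]$, and Assumption~\ref{ass:source_target} ($p_S(Y\mid X)=p_T(Y\mid X)$) replaces the inner $\EE^S$ by $\EE^T$, yielding $\EE^T[Y^1-Y^0]=\tau^T_Y$. I expect the main obstacle to be the first step: correctly arguing that rerandomization under a sign-symmetric criterion preserves the marginal treatment probability $\tfrac12$, which is exactly what prevents the balance condition from introducing bias; everything after that is linearity of expectation and a standard importance-weighting identity.
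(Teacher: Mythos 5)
Your proof is correct and takes essentially the same route as the paper's: establish $\EE[A_i \mid \x, \phi_T = 1] = \tfrac12$ from the sign-symmetry of the Mahalanobis criterion (the paper imports this as Lemma~\ref{lem:helper} from \citet{morgan2012rerandomization}, while you prove it directly via the flip involution on accepted assignments), collapse the inner expectation to $\frac{1}{n}\sum_i W_i (Y^1_i - Y^0_i)$, and then change measure via the importance weights together with Assumption~\ref{ass:source_target}. The only cosmetic difference is that you integrate out $\Y$ given $\X$ first and reweight the marginal of $X$, whereas the paper rewrites $p_T(x)/p_S(x) = p_T(x,y)/p_S(x,y)$ and reweights the joint density in one step; both hinge on the same identity and on the overlap assumption guaranteeing the target support is contained in the source support.
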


The proof makes use of the fact that the conditional distributions of $Y$ given $X$ in both the source and the target are the same ($p_S(Y|X) = p_T(Y|X)$), and therefore  $\frac{p_T(X)}{p_S(X)} = \frac{p_T(X, Y)}{p_S(X, Y)} $.
\subsection{Variance}
\label{sec:variance}

In this section we analyze the variance. We use $\tilde{Y}^a_i$ and $\tilde{y}^a_i$ to denote $W_i Y^a_i$ and $w_i y^a_i$ for $a \in \{0,1\}$.

\subsubsection{Finite Sample Size Variance Reduction for $d=1$}
\label{sec:d1}

In this section we will show that when $X$ is a $1$-dimensional random variable and the sample size is finite, Target Balance reduces the variance compared to complete randomization. Moreover, among all symmetric balance conditions (defined below) with rejection probability at most $\alpha$ (including Source Balance), Target Balance achieves the optimal variance reduction. The variance can be decomposed into $2$ terms (Lemma~\ref{lem:decomposition_d1}) where the second term does not depend on the balance. The first term is the variance of a 1d symmetric random variable, and Target Balance corresponds to truncating the tail, which results in the largest variance reduction (Theorem~\ref{thm:optimality_1d}). 

Let $\rho(\x, \Z) \in \{0,1\}$ denote a function that depends on only $\x$ and $\Z$, and satisfies the symmetric condition $\rho(\x, \Z) = \rho(\x, -\Z)$. This definition captures all reasonable balance conditions (including Source Balance) where $\rho =1$ denotes acceptance and $\rho = 0$ denotes rejection. Note that the constant function $\rho(\x, \Z) = 1$ for all $\x, \Z$ also satisfies the criteria  $\rho(\x, \Z) = \rho(\x, -\Z)$, and $\rho = 1$ becomes the entire sample space. We proceed to compare Target Balance with any $\rho$ satisfying the criteria above. 

First we note that by the law of total variance: 
\begin{lemma}
\label{lem:unconditional_variance}
For any function $\rho(\x, \Z) \in \{0,1\}$ satisfying  $\rho(\x, \Z) = \rho(\x, -\Z)$: 
\begin{align*}
  \var^S_{\X, \Y, \Z_{\rho}} (\hat{\tau}^T_Y ) = &\EE^S_{\X}\left[ \var^S_{\Y, \Z_{\rho}} (\hat{\tau}^T_Y | \X) \right] \\
  & +  \var^S_{\X}(\frac{1}{n}\sum_{i=1}^n W_i (\beta_1 - \beta_0)^T X_i)\;.
\end{align*}
\end{lemma}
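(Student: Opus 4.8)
The plan is to apply the law of total variance to $\hat{\tau}^T_Y$ while conditioning on the covariates $\X$. This gives
\[
\var^S_{\X,\Y,\Z_{\rho}}(\hat{\tau}^T_Y) = \EE^S_{\X}\left[\var^S_{\Y,\Z_{\rho}}(\hat{\tau}^T_Y\mid\X)\right] + \var^S_{\X}\left(\EE^S_{\Y,\Z_{\rho}}[\hat{\tau}^T_Y\mid\X]\right).
\]
The first term already coincides with the first term in the claimed decomposition, so the whole problem reduces to showing that the inner conditional expectation equals $\frac{1}{n}\sum_{i=1}^n W_i(\beta_1-\beta_0)^T X_i$.

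To compute $\EE^S_{\Y,\Z_{\rho}}[\hat{\tau}^T_Y\mid\X]$, I would start from the second form of the estimator in Eq.~\ref{eq:importance_sample}, namely $\hat{\tau}^T_Y=\frac{2}{n}\sum_i W_i A_i Y_i^1-\frac{2}{n}\sum_i W_i(1-A_i)Y_i^0$ (using $n_0=n_1=n/2$). After conditioning on $\X$, the two remaining sources of randomness are the rerandomized assignment $\Z_{\rho}$ and the potential outcomes $\Y$. The key structural observation is that the balance criterion $\rho(\x,\Z)$ depends only on $\x$ and $\Z$ and never on the outcomes, so given $\X$ the accepted assignment $\Z_{\rho}$ is independent of $\Y$. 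Since $W_i$ is a fixed function of $X_i$, each summand factorizes as $\EE[W_i A_i Y_i^1\mid\X]=W_i\,\EE[A_i\mid\X]\,\EE[Y_i^1\mid\X]$, and the linear model with mean-zero noise gives $\EE[Y_i^a\mid\X]=\beta_a^T X_i$.

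The remaining ingredient, and the step I expect to be the crux, is to show that $\EE[A_i\mid\X]=\tfrac{1}{2}$ under the rerandomized distribution $\Z_{\rho}$. Here the symmetry hypothesis $\rho(\x,\Z)=\rho(\x,-\Z)$ does the work. Because $n_1=n_0=n/2$, both $\z$ and $-\z$ are valid complete-randomization assignments with equal base probability, and the symmetry of $\rho$ means they are accepted or rejected together; hence the law of $\Z_{\rho}$ is invariant under $\Z\mapsto-\Z$. This forces $\EE[Z_i\mid\X]=0$, and since $A_i=(Z_i+1)/2$ we obtain $\EE[A_i\mid\X]=\tfrac{1}{2}$ and likewise $\EE[1-A_i\mid\X]=\tfrac{1}{2}$. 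Substituting these values collapses the conditional expectation to $\frac{2}{n}\cdot\tfrac{1}{2}\sum_i W_i(\beta_1-\beta_0)^T X_i=\frac{1}{n}\sum_i W_i(\beta_1-\beta_0)^T X_i$, which is exactly the argument of the variance in the second term, completing the proof.
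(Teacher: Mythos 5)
Your proof is correct and follows essentially the same route as the paper: a law-of-total-variance decomposition conditioning on $\X$, followed by the observation that the symmetry $\rho(\x,\Z)=\rho(\x,-\Z)$ makes the accepted assignment distribution invariant under $\Z\mapsto-\Z$, so $\EE[A_i\mid\X]=\tfrac{1}{2}$ (the paper imports this as Lemma~\ref{lem:helper} from \citet{morgan2012rerandomization} and packages the resulting conditional-mean computation as Lemma~\ref{lem:expectation}). The only cosmetic difference is that you factor the expectation using the conditional independence of $\Z_{\rho}$ and $\Y$ given $\X$, while the paper first conditions on $(\X,\Y)$ and then averages over $\Y$; both are valid iterated-expectation arguments.
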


Note that the second term does not depend on $\rho$. Therefore we focus on analyzing the variance conditioned on $\X = \x$ in this section, and the result for $\var^S_{\X, \Y, \Z_{\rho}} (\hat{\tau}^T_Y )$ easily follows from $\var^S_{\Y, \Z_{\rho}} (\hat{\tau}^T_Y | \x)$.  

Let $C_i = \frac{Y^1_i  + {Y}^0_i}{2}$,  $c_i = \frac{y^1_i  + {y}^0_i}{2}$,  $\beta = \frac{\beta_1 + \beta_0}{2}$, ${\E} = \frac{\E_1 + \E_0}{2}$ and $\sigma^2_{\E} = \var(\E)$. 
The variance of the importance weighted estimator can be written as 
\begin{lemma}
\label{lem:d1_conditional_variance}
Let $n_0 = n_1 = n/2$. For any function $\rho(\x, \Z) \in \{0,1\}$ satisfying  $\rho(\x, \Z) = \rho(\x, -\Z)$:
\begin{align*}
    &\var_{\Z_{\rho}}(\hat{\tau}^T_Y |\x, \y) =    \frac{4}{n^2}\EE_{\Z_{\rho}} \left[ \left( \sum_{i=1}^n Z_i w_i c_i  \right)^2 \bigg | \x, \y \right] 
\end{align*}
\end{lemma}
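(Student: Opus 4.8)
The plan is to reduce the conditional-variance computation to a bare second-moment computation by exploiting the symmetry condition on $\rho$. First I would substitute $A_i = (Z_i+1)/2$ (so $1-A_i = (1-Z_i)/2$) into the importance-weighted estimator of Equation~\ref{eq:importance_sample}. With $n_0 = n_1 = n/2$ we have $1/n_1 = 1/n_0 = 2/n$, and using $y^1_i + y^0_i = 2c_i$ together with conditioning on $\X = \x$, $\Y = \y$ (so that every $w_i$, $c_i$, $y^a_i$ is a fixed constant), collecting terms yields
\begin{align*}
\hat{\tau}^T_Y = \frac{2}{n}\sum_{i=1}^n w_i c_i Z_i + \frac{1}{n}\sum_{i=1}^n w_i\,(y^1_i - y^0_i).
\end{align*}
The second sum carries no dependence on $\Z$, so it contributes nothing to the variance over $\Z_\rho$; hence
\begin{align*}
\var_{\Z_\rho}(\hat{\tau}^T_Y \mid \x, \y) = \frac{4}{n^2}\,\var_{\Z_\rho}\!\left(\sum_{i=1}^n w_i c_i Z_i \,\Big|\, \x, \y\right).
\end{align*}

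It then remains to show $\var_{\Z_\rho}(\sum_i w_i c_i Z_i) = \EE_{\Z_\rho}[(\sum_i w_i c_i Z_i)^2]$, i.e. that the conditional mean $\EE_{\Z_\rho}[\sum_i w_i c_i Z_i]$ vanishes. This is exactly where the symmetric condition $\rho(\x,\Z) = \rho(\x,-\Z)$ enters. The base assignment law — uniform over the $\binom{n}{n/2}$ balanced vectors $\z$ with $\sum_i z_i = 0$ — is invariant under the sign flip $\z \mapsto -\z$, since $-\z$ is again balanced and equally likely. I would then check that conditioning on acceptance preserves this invariance: writing the accepted law as $\PP_{\Z_\rho}(\z) \propto \mathbbm{1}[\rho(\x,\z)=1]\,\PP(\z)$, the symmetry of both $\rho(\x,\cdot)$ and $\PP(\cdot)$ gives $\PP_{\Z_\rho}(-\z) = \PP_{\Z_\rho}(\z)$, so that $\Z_\rho$ and $-\Z_\rho$ are identically distributed.

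Given this invariance, the zero-mean claim is immediate: the map $g(\z) = \sum_i w_i c_i z_i$ is odd, $g(-\z) = -g(\z)$, so $\EE_{\Z_\rho}[g(\Z)] = \EE_{\Z_\rho}[g(-\Z)] = -\EE_{\Z_\rho}[g(\Z)]$, forcing $\EE_{\Z_\rho}[\sum_i w_i c_i Z_i] = 0$. Substituting this back converts the variance into the second moment and completes the proof. The only genuinely delicate step is the invariance-under-negation argument for the post-rejection law $\Z_\rho$; once that symmetry is secured the remainder is bookkeeping, and it is precisely the condition $\rho(\x,\Z) = \rho(\x,-\Z)$ — satisfied by both Source Balance and Target Balance, since each thresholds a squared (Mahalanobis) norm that is even in $\Z$ — that makes it hold.
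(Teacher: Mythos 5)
Your proof is correct and follows essentially the same route as the paper's: you decompose $\hat{\tau}^T_Y$ into $\frac{2}{n}\sum_i w_i c_i Z_i$ plus a $\Z$-free constant (the paper obtains the identical identity by subtracting the conditional mean computed in Lemma~\ref{lem:expectation}), and you then kill the mean of the linear term using the sign-flip invariance of the accepted law $\Z_\rho$, which is exactly the content the paper imports from Lemma~\ref{lem:helper} (Morgan and Rubin's $\EE[A_i \mid \x,\y,\rho=1]=\tfrac{1}{2}$). Your direct verification that $\Z_\rho$ and $-\Z_\rho$ are identically distributed is a slightly more self-contained way of securing that same symmetry, but the argument is otherwise the same.
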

Using the law of total variance and the fact that $W_i C_i = W_i X_i \beta + W_i\E$ and $\EE[\E| \x] = 0$ we have: 
\begin{lemma}
\label{lem:decomposition_d1}
Let $n_0 = n_1 = n/2$. For any function $\rho(\x, \Z) \in \{0,1\}$ satisfying  $\rho(\x, \Z) = \rho(\x, -\Z)$:
\begin{align*}
&\var^S_{\Y, \Z_{\rho}} (\hat{\tau}^T_Y | \x) \nonumber \\
    &=  \frac{4}{n^2}\beta^2 \EE_{\Z_{\rho}} \left [ \left(\sum_{i=1}^n w_i x_i Z_i \right)^2 \bigg | \x \right ] + \frac{6}{n^2}\sigma^2_\E \sum_{i=1}^n w_i^2.
\end{align*}
\end{lemma}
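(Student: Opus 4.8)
The plan is to obtain the decomposition from a second application of the law of total variance, this time splitting off the randomness of the outcomes $\Y$ (we have already conditioned on $\X=\x$, as justified by Lemma~\ref{lem:unconditional_variance}). Since the balance acceptance depends on $(\x,\Z)$ only and not on $\Y$, the variables $\Y$ and $\Z_\rho$ are independent given $\x$, and I would write
\begin{align*}
\var^S_{\Y,\Z_\rho}(\hat{\tau}^T_Y\mid\x)
= \EE^S_{\Y}\!\left[\var_{\Z_\rho}(\hat{\tau}^T_Y\mid\x,\Y)\right]
+ \var^S_{\Y}\!\left[\EE_{\Z_\rho}(\hat{\tau}^T_Y\mid\x,\Y)\right].
\end{align*}
The strategy is to show that the first term produces the $\beta^2$ summand together with part of the $\sigma^2_\E$ contribution, while the second term produces the remaining $\sigma^2_\E$ contribution.

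For the first term, I would feed Lemma~\ref{lem:d1_conditional_variance} into the inner conditional variance and take the outer expectation over $\Y$, obtaining $\frac{4}{n^2}\EE^S_{\Y,\Z_\rho}[(\sum_i Z_i W_i C_i)^2\mid\x]$. Substituting $W_iC_i = W_iX_i\beta + W_i\E$ and expanding the square yields a signal part, a cross part, and a noise part. The cross part vanishes because $\EE[\E\mid\x]=0$ and $\Y\perp\Z_\rho$ given $\x$, so each cross summand factors as $\EE_{\Z_\rho}[Z_iZ_j]\,\EE^S_\Y[\E_j\mid\x]=0$. The signal part is exactly $\frac{4}{n^2}\beta^2\EE_{\Z_\rho}[(\sum_i w_i x_i Z_i)^2\mid\x]$, the first term of the claim. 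The noise part collapses to a multiple of $\sum_i w_i^2$ after using independence of the $\E_i$ across samples (so that $\EE^S_\Y[\E_i\E_j\mid\x]=\sigma^2_\E\,\delta_{ij}$) together with $Z_i^2=1$, which kills the off-diagonal $Z_iZ_j$ terms.

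For the second term, the symmetry hypothesis $\rho(\x,\Z)=\rho(\x,-\Z)$ makes $\Z_\rho$ invariant under $\Z\mapsto-\Z$, hence $\EE_{\Z_\rho}[Z_i]=0$; the assignment-dependent part of $\hat{\tau}^T_Y$ therefore has zero conditional mean and the inner expectation collapses to the assignment-free quantity $\frac{1}{n}\sum_i(\tilde{Y}^1_i-\tilde{Y}^0_i)$. Taking its variance over $\Y$ (the $\Y$-constant mean drops out, and sample-wise independence again diagonalizes the double sum) gives the remaining $\sum_i w_i^2$ contribution. Adding the two noise pieces gives $\frac{6}{n^2}\sigma^2_\E\sum_i w_i^2$. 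I expect the coefficient accounting in this final step to be the main obstacle: one must carefully relate the variance of $\tilde{Y}^1-\tilde{Y}^0$ appearing in the second term to $\sigma^2_\E=\var\big(\tfrac{\E_1+\E_0}{2}\big)$ appearing in the first, and verify that the diagonal reductions ($Z_i^2=1$ and cross-sample independence) are the only simplifications needed, so that no $\x$-dependent covariance survives.
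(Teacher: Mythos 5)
Your proposal is correct and follows essentially the same route as the paper's proof: the same law-of-total-variance split over $\Y$ given $\x$, the same use of Lemma~\ref{lem:d1_conditional_variance} with the substitution $W_iC_i = W_iX_i\beta + W_i\E_i$, the same vanishing of the cross term via $\EE[\E_i\mid\x]=0$, and the same two-piece accounting of the $\sigma^2_\E\sum_i w_i^2$ contribution (a $\tfrac{4}{n^2}$ piece from the inner conditional variance and a $\tfrac{2}{n^2}$ piece from the variance of $\tfrac{1}{n}\sum_i w_i(Y^1_i-Y^0_i)$ over $\Y$). The coefficient bookkeeping you flag as the remaining obstacle is exactly the step the paper carries out to arrive at $\tfrac{6}{n^2}\sigma^2_\E\sum_i w_i^2$.
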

We note that the design affects only the first term in the above decomposition. Let $V := \frac{2}{n} \sum_i Z_i w_i x_i =  \frac{2}{n}\tilde{\x}^T \Z$ and let $B: = V \var(V)^{-1/2}$. Recall that the Malahanobis distance $M(\frac{2}{n} (\ww\x)^T\Z) = ||B||^2$.  
Re-randomization procedure corresponds to truncating $B$ where $B$ is a mean zero random variable (as $Z_i$'s are random variables) that is symmetric about zero. 

It is easy to show that the best way to truncate a symmetric random variable $B$ to minimize the variance is to truncate the tail symmetrically $\| B\|^2 < a$ for some threshold $a$. Therefore Target Balance reduces the variance, and among all the balance conditions with rejection probability at most $\alpha$ (including Source Balance), Target Balance achieves the optimal variance reduction. 

\begin{theorem}
\label{thm:optimality_1d}
Let $n_0 = n_1 = n/2$ and $d=1$. 
   Let $\rho(\x, \Z)$ be a function satisfying $\rho(\x, \Z) = \rho(\x, -\Z)$ and $\PP(\rho = 1| \x) \ge 1 - \alpha$.  Then: 
    \begin{align*}
 \var^S_{\Y, \Z_{\phi^{\alpha}_T}} (\hat{\tau}^T_Y | \x,) &\le \var^S_{\Y, \Z_{\rho}} (\hat{\tau}^T_Y | \x).
\end{align*}
\end{theorem}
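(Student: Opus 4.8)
The plan is to reduce Theorem~\ref{thm:optimality_1d} to a statement about truncating a single symmetric scalar random variable, using the decomposition already established. By Lemma~\ref{lem:decomposition_d1}, the conditional variance $\var^S_{\Y, \Z_{\rho}}(\hat{\tau}^T_Y | \x)$ equals $\frac{4}{n^2}\beta^2\,\EE_{\Z_{\rho}}\!\left[\left(\sum_i w_i x_i Z_i\right)^2 \big| \x\right] + \frac{6}{n^2}\sigma^2_\E \sum_i w_i^2$, and the second term is independent of $\rho$. So it suffices to show that among all symmetric acceptance functions $\rho$ with $\PP(\rho=1|\x) \ge 1-\alpha$, the Target Balance condition $\phi^{\alpha}_T$ minimizes $\EE_{\Z_\rho}[V^2 | \x]$, where $V := \frac{2}{n}\sum_i Z_i w_i x_i = \frac{2}{n}\tilde{\x}^T\Z$. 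Writing $B := V\,\var(V)^{-1/2}$, this is equivalent to minimizing $\EE_{\Z_\rho}[B^2|\x]$, and Target Balance corresponds exactly to the acceptance event $\{M(\frac{2}{n}(\ww\x)^T\Z) < a\} = \{B^2 < a\}$.

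The core combinatorial fact I would establish is: \emph{to minimize the truncated second moment of a symmetric scalar random variable subject to keeping at least a $1-\alpha$ fraction of its mass, one should keep the mass on the smallest values of $B^2$.} First I would set up the right sample space. Since $\Z$ is uniform over the $\binom{n}{n_1}$ balanced assignments and $\rho$ is symmetric ($\rho(\x,\Z)=\rho(\x,-\Z)$), the admissible values of $B^2$ come in $\pm$ pairs $\{\z, -\z\}$ giving the same value $B^2(\z)$. A symmetric $\rho$ either accepts or rejects each such pair as a whole. Hence choosing $\rho$ amounts to choosing a subset $\mathcal{A}$ of these pairs, of total probability at least $1-\alpha$, and we wish to minimize $\EE[B^2 \mathbbm{1}_{\mathcal{A}}]/\PP(\mathcal{A})$.

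The argument itself is a standard exchange/rearrangement. I would sort the pairs by their value of $B^2$ and argue that the conditional expectation $\EE_{\Z_\rho}[B^2|\x] = \EE[B^2\mathbbm{1}_{\mathcal{A}}]/\PP(\mathcal{A})$ can only decrease if we swap a pair with large $B^2$ currently in $\mathcal{A}$ for a pair with smaller $B^2$ currently outside $\mathcal{A}$ (of comparable probability). Target Balance, which keeps precisely the pairs with $B^2 < a$ where $a$ is tuned so that $\PP(\rho=1|\x)=1-\alpha$, is exactly the outcome of this greedy ``keep the smallest'' procedure, so it attains the minimum. One subtlety worth handling carefully is that the threshold $a(\x)$ must be chosen to hit rejection probability \emph{exactly} $\alpha$ (so $\PP(\phi^\alpha_T=1|\x)=1-\alpha$), whereas a competing $\rho$ is only required to satisfy $\PP(\rho=1|\x)\ge 1-\alpha$; since a competitor may keep \emph{more} mass, I must confirm that keeping strictly more than the smallest $1-\alpha$ fraction cannot reduce the conditional average below that of the tightest truncation. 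This holds because any extra mass admitted lies at $B^2$ values at least as large as the threshold, which can only pull the average up.

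The step I expect to be the main obstacle is the discreteness of the problem: because $\Z$ takes finitely many values, $\PP(\rho=1|\x)$ can only hit certain discrete levels, so the threshold $a(\x)$ achieving \emph{exactly} $1-\alpha$ may require a tie-breaking convention on the boundary set $\{B^2 = a\}$. I would address this by allowing the acceptance region to include a randomized (or partial) fraction of the boundary pairs so the rejection probability is exactly $\alpha$, and then verify the rearrangement inequality still goes through — intuitively clear since boundary pairs all share the same value of $B^2$ and hence contribute identically to the average regardless of how ties are split. Modulo this bookkeeping, combining the minimization of the first term with the $\rho$-independence of the second term yields the claimed inequality $\var^S_{\Y, \Z_{\phi^{\alpha}_T}}(\hat{\tau}^T_Y | \x) \le \var^S_{\Y, \Z_{\rho}}(\hat{\tau}^T_Y | \x)$.
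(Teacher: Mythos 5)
Your proposal is correct and follows essentially the same route as the paper: reduce via Lemma~\ref{lem:decomposition_d1} to minimizing the truncated second moment of the symmetric scalar $B = V\,\var(V)^{-1/2}$, show tail truncation is optimal by a rearrangement argument (the paper's Lemma~\ref{lem:var_optimality_1d}), and handle competitors with acceptance probability strictly above $1-\alpha$ by noting the threshold only moves outward. Your discrete formulation over $\pm$ pairs of assignments, with explicit tie-breaking on the boundary, is if anything a more careful rendering of the paper's continuous-density version of the same exchange argument.
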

Applying Theorem~\ref{thm:optimality_1d} with $\rho$ being the constant function $\rho(\x,\Z)=1$ for all $\x,\Z,$ we have:
\begin{corollary}
\label{col:var1d}
When $d=1$ and $n_0 = n_1 = n/2$, using Target Balance reduces the variance compared to complete randomization: 
\begin{align*}
    \var^S_{\Y, \Z_{ \phi_T}} (\hat{\tau}^T_Y | \x) & \le \var^S_{\Y, \Z} (\hat{\tau}^T_Y | \x) 
    \end{align*}
\end{corollary}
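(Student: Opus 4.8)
The plan is to obtain Corollary~\ref{col:var1d} as an immediate specialization of Theorem~\ref{thm:optimality_1d}, instantiating the arbitrary symmetric acceptance rule $\rho$ with the trivial constant rule $\rho(\x,\Z)=1$ for every $\x,\Z$. Since the theorem already establishes that Target Balance dominates \emph{every} admissible $\rho$ in conditional variance, all that remains is to check that the constant rule is admissible in the sense required by the theorem and that the distribution it induces is precisely complete randomization.

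First I would verify the two hypotheses of Theorem~\ref{thm:optimality_1d} for $\rho\equiv 1$. The symmetry requirement $\rho(\x,\Z)=\rho(\x,-\Z)$ holds trivially, since both sides equal $1$; this is exactly the degenerate case already flagged in the text preceding the theorem. The rejection-probability condition is also satisfied for any $\alpha\in[0,1]$, because
\begin{align*}
\PP(\rho = 1 \mid \x) = 1 \ge 1 - \alpha.
\end{align*}
Thus the constant rule falls within the class of functions to which Theorem~\ref{thm:optimality_1d} applies.

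Next I would identify the induced distribution. Under $\rho\equiv 1$ the re-randomization loop in Section~\ref{sec:design} never returns to step~(1): no assignment is ever rejected, so $\Z_{\rho}$ is exactly the unconstrained, uniformly-drawn assignment distribution, i.e.\ complete randomization. Hence $\var^S_{\Y,\Z_{\rho}}(\hat{\tau}^T_Y\mid\x)=\var^S_{\Y,\Z}(\hat{\tau}^T_Y\mid\x)$. Substituting this equality into the conclusion of Theorem~\ref{thm:optimality_1d} yields
\begin{align*}
\var^S_{\Y,\Z_{\phi_T}}(\hat{\tau}^T_Y\mid\x)\le\var^S_{\Y,\Z_{\rho}}(\hat{\tau}^T_Y\mid\x)=\var^S_{\Y,\Z}(\hat{\tau}^T_Y\mid\x),
\end{align*}
which is the claim. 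There is no substantive obstacle here: the corollary is a direct corollary, and the only points meriting explicit mention are that the constant rule is an admissible symmetric $\rho$ and that its induced assignment law coincides with complete randomization, so that the right-hand side of the theorem collapses to the complete-randomization variance.
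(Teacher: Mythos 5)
Your proposal is correct and follows exactly the same route as the paper's own proof: instantiate Theorem~\ref{thm:optimality_1d} with the constant rule $\rho\equiv 1$, observe that it satisfies the symmetry and rejection-probability hypotheses, and note that its induced assignment distribution is complete randomization so that $\var^S_{\Y,\Z_{\rho}}(\hat{\tau}^T_Y\mid\x)=\var^S_{\Y,\Z}(\hat{\tau}^T_Y\mid\x)$. The only difference is that you spell out the admissibility checks that the paper leaves implicit.
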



\subsubsection{Asymptotic Variance Reduction for $d \ge 1$}
\label{sec:asymptotic}
In this section we show that when the sample size is large, Target Balance reduces the variance and achieves a lower variance than Source Balance. We discuss the case of finite sample size in the appendix. 

From~\citep{li2018asymptotic}, the importance weighted estimator can be decomposed into 2 components: part 1 is related to the covariates and part 2 is unrelated. Only part 1 is reduced by rerandomization while part 2 is unaffected. The covariates can be chosen to be the importance-weighted covariates (Target Balance) or the unweighted covariates (Source Balance). Since the importance-weighted covariates aligns better with the importance-weighted outcomes, part 1 will be larger and therefore the reduction by re-randomization will be larger. 

In this section we condition on $\x$ and $\y$ so the randomness only comes from $\Z$.  Similar to Section~\ref{sec:d1}, first we note that by the law of total variance: 
\begin{lemma}
\label{lem:unconditional_variance2}
For any function $\rho(\x, \Z) \in \{0,1\}$ satisfying $\rho(\x, \Z) = \rho(\x,- \Z)$:
\begin{align*}
  \var^S_{\X, \Y, \Z_{\rho}} (\hat{\tau}^T_Y) = & \EE^S_{\X, \Y} \var_{ \Z_{\rho}} (\hat{\tau}^T_Y | \X,\Y) \\ & + \var^S_{\X, \Y}(\sum_{i=1}^n W_i (Y^1_i - Y^0_i) )
\end{align*}
\end{lemma}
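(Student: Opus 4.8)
The plan is to apply the law of total variance to $\hat{\tau}^T_Y$, splitting the randomness into the ``outer'' variables $(\X,\Y)$ and the ``inner'' assignment $\Z_{\rho}$. Writing $\var^S_{\X,\Y,\Z_{\rho}}(\hat{\tau}^T_Y) = \EE^S_{\X,\Y}[\var_{\Z_{\rho}}(\hat{\tau}^T_Y \mid \X,\Y)] + \var^S_{\X,\Y}(\EE_{\Z_{\rho}}[\hat{\tau}^T_Y \mid \X,\Y])$ immediately reproduces the first term of the claim verbatim, so essentially all of the work lies in identifying the conditional mean $\EE_{\Z_{\rho}}[\hat{\tau}^T_Y \mid \X,\Y]$ and recognizing its $(\X,\Y)$-variance as the second term.

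For the conditional mean, recall $\hat{\tau}^T_Y = \frac{2}{n}\sum_i W_i A_i Y^1_i - \frac{2}{n}\sum_i W_i(1-A_i)Y^0_i$ with $A_i = (Z_i+1)/2$. Given $\X=\x$, the rerandomized assignment $\Z_{\rho}$ is generated from the assignment mechanism alone and is therefore independent of the potential outcomes $\Y$, so conditioning on $\Y$ in addition to $\x$ does not change its law. Consequently $\EE_{\Z_{\rho}}[\hat{\tau}^T_Y \mid \X,\Y] = \frac{2}{n}\sum_i W_i \EE_{\Z_{\rho}}[A_i \mid \x]\, Y^1_i - \frac{2}{n}\sum_i W_i \EE_{\Z_{\rho}}[1-A_i \mid \x]\, Y^0_i$ by linearity, and it remains only to show $\EE_{\Z_{\rho}}[A_i \mid \x] = 1/2$, equivalently $\EE_{\Z_{\rho}}[Z_i \mid \x] = 0$.

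This last step is the crux. Before rerandomization, $\Z$ is uniform over the set $\mathcal{S}$ of balanced assignments ($\sum_i A_i = n/2$), which is invariant under the global sign flip $\z \mapsto -\z$, since flipping every treatment label carries one balanced assignment to another. The accepted region $\{\z : \rho(\x,\z)=1\}$ is invariant under the same map by the hypothesis $\rho(\x,\Z)=\rho(\x,-\Z)$. Hence $\Z_{\rho}$, being uniform over $\mathcal{S}\cap\{\rho=1\}$, has the same distribution as $-\Z_{\rho}$, which forces $\EE_{\Z_{\rho}}[Z_i \mid \x]=0$. Substituting $\EE_{\Z_{\rho}}[A_i \mid \x]=1/2$ collapses the conditional mean to $\frac{1}{n}\sum_i W_i(Y^1_i - Y^0_i)$, carrying the same $1/n$ normalization that appears in the analogous decomposition of Lemma~\ref{lem:unconditional_variance}; its variance over $(\X,\Y)$ is precisely the second term of the statement.

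I expect the only genuine obstacle to be the bookkeeping around the joint conditioning: making precise that adding $\Y$ to the conditioning set does not perturb the accepted-assignment distribution (the design-based independence of $\Z$ from the potential outcomes), and verifying that the sign-flip symmetry argument is unaffected by conditioning on $\x$ through the weights $\ww$. Once $\EE_{\Z_{\rho}}[Z_i \mid \x]=0$ is established, the remainder is the mechanical law-of-total-variance split together with linearity of the conditional expectation.
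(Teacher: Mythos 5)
Your proof is correct and follows essentially the same route as the paper's: the law of total variance over $(\X,\Y)$ versus $\Z_{\rho}$, combined with the observation that the symmetry $\rho(\x,\Z)=\rho(\x,-\Z)$ forces $\EE_{\Z_{\rho}}[A_i\mid\x,\y]=1/2$ (the paper cites this as Lemma~\ref{lem:helper} from \citet{morgan2012rerandomization}, whereas you re-derive it via the sign-flip invariance of the accepted set — the same underlying idea). Note that your conditional mean correctly carries the $\frac{1}{n}$ factor, which the printed lemma statement omits but the paper's own appendix proof includes.
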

Since the second term does not depend on  $\rho$, we focus on analyzing the variance conditioned on $\X = \x, \Y = \y$ in this subsection. The result for $\var^S_{\X, \Y, \Z_{\rho}} (\hat{\tau}^T_Y)$ easily follows from $\var_{ \Z_{\rho}} (\hat{\tau}^T_Y | \X,\Y)$. 

Conditioning on $\x$ and $\y$, \cite{li2018asymptotic} state that if the following conditions (Condition $1$ in \citep{li2018asymptotic}) are satisfied, finite central limit theorem implies that $(\hat{\tau}^T_Y,  \frac{2}{n} \sx^T\Z)$
approaches a normal distribution as $n$ goes to infinity. Let $\text{avg}(\tilde{\y})$ and $\text{avg}(\sx)$ denote the average of the rows of $\sy$ and $\sx$. As $n \rightarrow \infty$:
\begin{itemize}[nosep,leftmargin=1em,labelwidth=*,align=left]
    \item The finite population variances and covariance $cov(\sx), cov(\sy^1), cov(\sy^0), cov(\sy^1-\sy^0), cov(\sy^1, \sx)$ and $cov(\sy^0,\sx)$ have limiting values. 
    \item $\max_{1\le i \le n} |\tilde{y}^a_i - \text{avg}(\tilde{\y})^a|^2/n \rightarrow 0$ for $a \in \{0,1\}$ and $\max_{1 \le i \le n} \norm{\tilde{x}_i - \text{avg}(\sx)}^2_2/n \rightarrow 0$ 
\end{itemize}
We apply Corollary 2 in
\cite{li2018asymptotic} to give the expression for the asymptotic variance of $\hat{\tau}^T_Y$ under Mahalanobis balance condition. Let $\text{as-var}$ denote the variance of the asymptotic sampling distribution of a sequence of random variables. 
Applying Corollary 2 in \citep{li2018asymptotic} to our case with covariates $\sx$ and $\x$ and the weighted outcome $\sy$  directly yields the following result showing both Target Balance and Source Balance reduce the variance. 
\begin{theorem}[Corollary 2 in \citep{li2018asymptotic}]
\label{thm:var_reduction}
When $n_0 = n_1 = n/2$:
\begin{align*}
& \text{as-var}_{\Z_{\phi_S}} \left( \hat{\tau}^T_Y  | \x, \y \right)\\&      = \lim_{n\rightarrow \infty} \var_{\Z}(\hat{\tau}^T_Y | \x, \y) (1 - (1-v_{d,a}) R_{\x}^2 ), \\
& \text{as-var}_{\Z_{\phi_T}}  \left( \hat{\tau}^T_Y  | \x, \y \right)\\&      = \lim_{n\rightarrow \infty} \var_{\Z}(\hat{\tau}^T_Y | \x, \y) (1 - (1-v_{d,a}) R_{\sx}^2 ),
 \end{align*}
where $R_{\sx}^2=Corr(\hat{\tau}^T_Y,  \frac{2}{n} \sx^T\Z)$, $R_{\x}^2=Corr(\hat{\tau}^T_Y,  \frac{2}{n} \x^T\Z)$ and $v_{d,a} = \frac{P(\chi^2_{d+2} \le a)}{P(\chi^2_d \le a)}$. 
\end{theorem}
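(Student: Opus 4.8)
The plan is to reduce the statement to a direct invocation of the finite-population rerandomization theory of \citet{li2018asymptotic}, exploiting the reformulation introduced in Section~\ref{sec:design}. Conditioning on $\x$ and $\y$ fixes the transformed potential outcomes $\tilde{y}_i^a = w_i y_i^a$ together with the (possibly weighted) balance covariates, so the only randomness left is in the assignment vector $\Z$; this is exactly the setting of \citet{li2018asymptotic}, in which potential outcomes are regarded as fixed finite-population quantities and the design is the sole source of randomness. First I would rewrite $\hat{\tau}^T_Y$ as an ordinary difference-in-means estimator on the transformed outcomes $\sy$, namely $\hat{\tau}^T_Y = \frac{2}{n}\sum_i A_i \tilde{y}_i^1 - \frac{2}{n}\sum_i (1-A_i)\tilde{y}_i^0$, and identify the rerandomization statistic: for Target Balance the balance vector is the weighted mean difference $\frac{2}{n}\sx^T\Z$, while for Source Balance it is the unweighted $\frac{2}{n}\x^T\Z$. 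In both cases the covariate entering the Mahalanobis criterion has dimension $d$, so the relevant chi-square degrees of freedom are $d$ and $d+2$, giving the same $v_{d,a}=P(\chi^2_{d+2}\le a)/P(\chi^2_d\le a)$ for the two designs.

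Next I would verify Condition~1 of \citet{li2018asymptotic}, which in our notation is precisely the two bulleted regularity requirements stated above: the finite-population variances and covariances $cov(\sx),cov(\sy^1),cov(\sy^0),cov(\sy^1-\sy^0),cov(\sy^1,\sx),cov(\sy^0,\sx)$ (and their unweighted analogues for Source Balance) converge, and the normalized maximal squared deviations of $\tilde{y}_i^a$ and $\tilde{x}_i$ vanish. Under these conditions their finite-population central limit theorem yields joint asymptotic normality of the pair $(\hat{\tau}^T_Y,\frac{2}{n}\sx^T\Z)$ under complete randomization. Applying Corollary~2 of \citet{li2018asymptotic} then gives the asymptotic variance under Mahalanobis rerandomization with threshold $a$ as the complete-randomization asymptotic variance scaled by $1-(1-v_{d,a})R^2$, where $R^2$ is the squared multiple correlation between $\hat{\tau}^T_Y$ and the balance vector. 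Substituting the two balance vectors yields $R_{\sx}^2=Corr(\hat{\tau}^T_Y,\frac{2}{n}\sx^T\Z)$ for Target Balance and $R_{\x}^2=Corr(\hat{\tau}^T_Y,\frac{2}{n}\x^T\Z)$ for Source Balance, which is the claimed expression.

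The main obstacle is bookkeeping rather than a deep argument: the result is a corollary of \citet{li2018asymptotic} once the mapping is established, so the real work is confirming that the importance weighting does not break the regularity conditions. Concretely, I would need the weights $w_i$ to be well-behaved enough (for example bounded, or with suitably converging empirical moments) that the weighted quantities $\tilde{y}_i^a=w_i y_i^a$ and $\tilde{x}_i=w_i x_i$ still satisfy both the convergence of finite-population moments and the Lindeberg-type maximum condition; this is the one place where the covariate shift enters nontrivially, since large importance weights could inflate the normalized maximal deviations and violate the second bullet. The remaining care is to keep the two designs' $R^2$ quantities distinct and to confirm that weighting preserves the covariate dimension $d$, so that $v_{d,a}$ is identical across the two expressions.
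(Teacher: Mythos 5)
Your proposal matches the paper's treatment: the paper does not give an independent proof of this theorem but states it as a direct application of Corollary~2 of \citet{li2018asymptotic} to the transformed outcomes $\sy$ with balance covariates $\sx$ (Target Balance) or $\x$ (Source Balance), after listing Condition~1 of that paper as the assumed regularity conditions. Your additional remark that the importance weights must not break the Lindeberg-type maximum condition is a fair caveat, but it is exactly what the paper's bulleted conditions already assume, so the two arguments coincide.
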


We now show that Target Balance has a smaller variance than Source Balance. 
We use the following equivalent expressions for $R_{\x}^2$ and $R_{\sx}^2$. Let $Q = \frac{n}{n-1}\left(\I_d - \frac{1}{n}\1\1^T \right)$ where $\I_d$ is an identity matrix of dimension $d.$ Recall that $c_i = \frac{y^0_i + y^1_i}{2}$. Let $\cc := (c_1, \cdots, c_n)$ and $\scc = \ww \cc$. We will show that: 
\begin{lemma}
\label{lem:R2} 
When $n_0 = n_1 = n/2$:
\begin{align*}
R_{\x}^2 &= 
\sqrt{\frac{\|Q\scc \|^2 - \min_{\hbeta} \|Q\scc  - Q\x \hbeta \|^2}{\|Q\scc \|^2} }.\\
R_{\sx}^2 &= \sqrt{\frac{\|Q\scc \|^2 - \min_{\hbeta} \|Q\scc  - Q\sx \hbeta \|^2}{\|Q\scc \|^2}} .
\end{align*}
\end{lemma}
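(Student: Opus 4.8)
The plan is to reduce everything to a computation about the covariance matrix of the assignment vector $\Z$, and then recognize the claimed ratio as the coefficient of determination of a least-squares regression.

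First I would isolate the part of $\hat{\tau}^T_Y$ that is random given $\x,\y$. Writing $A_i = (1+Z_i)/2$ and using $n_0 = n_1 = n/2$, the estimator in Eq.~\ref{eq:importance_sample} collapses to $\hat{\tau}^T_Y = \frac{1}{n}\sum_i w_i(y_i^1 - y_i^0) + \frac{2}{n}\scc^T\Z$, where $\scc = \ww\cc$ and $c_i = (y_i^0+y_i^1)/2$. The first term is constant given $\x,\y$, so the only randomness enters through $\frac{2}{n}\scc^T\Z$; since correlation is invariant to additive constants and positive scaling, $R_{\x}^2 = \mathrm{Corr}(\scc^T\Z,\ \x^T\Z)$ and $R_{\sx}^2 = \mathrm{Corr}(\scc^T\Z,\ \sx^T\Z)$, read as the multiple correlation coefficient when $d>1$.

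Second, I would compute $\mathrm{Cov}(\Z)$. Because exactly $n/2$ coordinates equal $+1$ and the rest $-1$, we always have $\sum_i Z_i = 0$ and $Z_i^2 = 1$; symmetry gives $\EE Z_i = 0$ and $\var(Z_i)=1$, while expanding $\var(\sum_i Z_i)=0$ forces $cov(Z_i,Z_j) = -1/(n-1)$ for $i\neq j$. Hence $\mathrm{Cov}(\Z) = \frac{n}{n-1}(\I - \frac{1}{n}\1\1^T) = Q$. The key structural fact I will exploit is that $Q = \frac{n}{n-1}P$ with $P = \I - \frac{1}{n}\1\1^T$ an orthogonal projection, so $Q^2 = \frac{n}{n-1}Q$. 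Using $\mathrm{Cov}(\Z)=Q$, the squared multiple correlation between the scalar $\scc^T\Z$ and the vector $\x^T\Z$ is the standard quadratic form $\frac{\scc^T Q\x\,(\x^T Q\x)^{-1}\,\x^T Q\scc}{\scc^T Q\scc}$, valid whenever the centered columns of $\x$ are linearly independent and $\scc$ is non-constant, so that the relevant Gram matrices are invertible.

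To match the statement, I would then rewrite this through $Q^2 = \frac{n}{n-1}Q$: the least-squares residual $\min_{\hbeta}\norm{Q\scc - Q\x\hbeta}^2$ equals $\norm{Q\scc}^2 - (Q\scc)^T(Q\x)[(Q\x)^T(Q\x)]^{-1}(Q\x)^T(Q\scc)$, and every Gram term here reduces, via $Q^2=\frac{n}{n-1}Q$, to the corresponding $Q$-bilinear form with a common factor $\frac{n}{n-1}$ that cancels in the ratio. This yields $\frac{\norm{Q\scc}^2 - \min_{\hbeta}\norm{Q\scc-Q\x\hbeta}^2}{\norm{Q\scc}^2} = \frac{\scc^T Q\x(\x^T Q\x)^{-1}\x^T Q\scc}{\scc^T Q\scc}$, whose square root is exactly $R_{\x}^2$; repeating verbatim with $\sx$ in place of $\x$ gives the second identity. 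The main obstacle, and the step to get exactly right, is this algebraic bridge: showing that the Euclidean least-squares problem in $Q\scc,Q\x$ reproduces the $Q$-weighted correlation. It hinges entirely on $Q$ being a multiple of a projection, which both collapses $Q^2$ to $Q$ and guarantees $\mathrm{col}(Q\x)$ equals the span of the centered covariates, so the projection picks out the correct regression. I would also flag the mild rank condition needed for $(\x^T Q\x)^{-1}$ to exist and treat $\mathrm{Corr}$ as the (nonnegative) multiple correlation coefficient throughout so the signs are unambiguous.
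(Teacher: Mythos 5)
Your proposal is correct and follows essentially the same route as the paper's proof: both decompose $\hat{\tau}^T_Y$ into a constant plus $\frac{2}{n}\scc^T\Z$, compute $\EE[\Z\Z^T]$ as a multiple of the centering projection, and reduce the correlation to the least-squares regression of $Q\scc$ on $Q\s$ (the paper organizes the algebra via an explicit $\beta^*$ and a Pythagoras/orthogonality step, while you invoke the standard quadratic form for the multiple correlation, but these are the same computation). Your explicit flagging of the rank condition and of the convention that ``$R^2$'' here denotes the nonnegative multiple correlation coefficient (hence the square root in the statement) is consistent with, and slightly more careful than, the paper.
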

Intuitively $R^2_{\sx}$ and $R^2_{\x}$ describe how well $\scc$ is described by a linear function of $\sx$ and $\x$, respectively. Because of our model, a linear model in terms of $\tilde{\x} = \ww\x$ fits $\scc = \ww\cc$ better than a linear model in terms of $\x$. Therefore, $R^2_{\tilde{\x}}$ will be larger than $R^2_{\x}$ and using $\phi_T$ will result in a smaller variance than $\phi_S$. 


Therefore with the same rejection probability $\alpha$, Target Balance has a lower variance than Source Balance. 

\begin{theorem}
\label{thm:asymptotic_optimal}
When $n_0 = n_1 = n/2$, if $X_i$, $Y_i$ and $W_i$ have finite eighth moment according to the source distribution, with the same rejection probability $\alpha$:
\begin{align*}
    \text{as-var}_{\Z_{\phi^{\alpha}_T}}  \left( \hat{\tau}^T_Y |\X, \Y \right) \le \text{as-var}_{\Z_{\phi^{\alpha}_S}}  \left(\hat{\tau}^T_Y  | \X, \Y \right)
\end{align*}
almost surely. 
\end{theorem}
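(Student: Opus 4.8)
The plan is to reduce the claimed variance inequality to the single scalar comparison $R_{\sx}^2 \ge R_{\x}^2$, and then to establish that comparison from the structure of the working model together with a law-of-large-numbers argument.

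First I would invoke Theorem~\ref{thm:var_reduction}, which writes both asymptotic variances as the common complete-randomization variance $\lim_{n\to\infty}\var_{\Z}(\hat{\tau}^T_Y\mid\x,\y)$ times a shrinkage factor $1-(1-v_{d,a})R^2$. The leading factor is identical and nonnegative in both expressions, so it plays no role in the comparison. The constant $v_{d,a}=P(\chi^2_{d+2}\le a)/P(\chi^2_d\le a)$ is also common to both: Source Balance and Target Balance each truncate a $d$-dimensional weighted Mahalanobis distance, which converges to a $\chi^2_d$ law, so fixing the common rejection probability $\alpha$ forces the same limiting threshold $a$ (the $(1-\alpha)$-quantile of $\chi^2_d$) and hence the same $v_{d,a}$. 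Since $\chi^2_{d+2}$ stochastically dominates $\chi^2_d$, we have $v_{d,a}<1$, i.e. $1-v_{d,a}>0$. Consequently the shrinkage factor is decreasing in $R^2$, and it suffices to prove $R_{\sx}^2\ge R_{\x}^2$ almost surely.

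Next I would apply Lemma~\ref{lem:R2} to convert this into a comparison of least-squares residuals. Both $R^2$-expressions share the normalizer $\|Q\scc\|^2$, and each $R^2$ is an increasing function of the explained fraction, so $R_{\sx}^2\ge R_{\x}^2$ is equivalent to
\[
\min_{\hbeta}\bigl\|Q\scc - Q\sx\hbeta\bigr\|^2 \;\le\; \min_{\hbeta}\bigl\|Q\scc - Q\x\hbeta\bigr\|^2 .
\]
(For large $n$ the normalizer is strictly positive almost surely whenever $\var^S(WC)>0$, so the ratios are well defined.) To evaluate these residuals in the limit I would divide by $n$ and apply the strong law of large numbers, under which the centered Gram matrices and cross-products $\tfrac1n\scc^TQ\scc$, $\tfrac1n\sx^TQ\sx$, $\tfrac1n\sx^TQ\scc$, $\tfrac1n\x^TQ\x$ and $\tfrac1n\x^TQ\scc$ converge almost surely to the corresponding population (co)variances. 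This is where the finite eighth-moment hypothesis enters: $\sx=\ww\x$ and $\scc=\ww\cc$ are products and the residuals involve their fourth moments, so $\EE^S[W^8]<\infty$, $\EE^S\|X\|^8<\infty$ and $\EE^S|Y|^8<\infty$ guarantee finite fourth moments of $\sx,\scc$, which also verifies Condition~1 of \citep{li2018asymptotic} almost surely so that Theorem~\ref{thm:var_reduction} applies. Thus $\tfrac1n\min_{\hbeta}\|Q\scc-Q\sx\hbeta\|^2$ converges almost surely to the population residual variance of the best linear predictor of $WC$ from $WX$, and similarly for the unweighted regressor $X$.

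The analytic core is then immediate from the model. Writing $C=X^T\beta+\E$ with $\EE[\E\mid X]=0$, the weighted response satisfies $WC=(WX)^T\beta + W\E$, and since $W=p_T(X)/p_S(X)$ is a function of $X$ we have $\EE[W\E\mid X]=0$; hence $W\E$ is mean-zero and uncorrelated with every square-integrable function of $X$, in particular with both $WX$ and $X$. Therefore the population residual of regressing $WC$ on $WX$ equals exactly $\var^S(W\E)$ (the projection recovers the true coefficient $\beta$), whereas regressing $WC$ on $X$ gives
\[
\min_{\hbeta}\var^S\!\bigl(WC - X^T\hbeta\bigr) \;=\; \var^S(W\E) + \min_{\hbeta}\var^S\!\bigl((WX)^T\beta - X^T\hbeta\bigr) \;\ge\; \var^S(W\E),
\]
the cross term vanishing by the uncorrelatedness just noted. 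This yields the required residual inequality and hence $R_{\sx}^2\ge R_{\x}^2$ almost surely, which completes the proof. The substantive content is the clean identity $WC=(WX)^T\beta+W\E$, which makes $WX$ the ``correct'' design matrix and forces the extra nonnegative term $\min_{\hbeta}\var^S((WX)^T\beta - X^T\hbeta)$ for the unweighted regression; the genuinely delicate part is not this inequality but the measure-theoretic bookkeeping around it, namely showing that the finite-sample $R^2$'s converge almost surely to their population counterparts and that Li and Ding's regularity conditions (including the maximal condition $\max_i\|\tilde x_i-\text{avg}(\sx)\|_2^2/n\to0$ and the limiting finite-population covariances) hold almost surely along the sampled sequence, which is precisely what the eighth-moment assumption secures.
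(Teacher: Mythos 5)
Your proposal is correct and follows essentially the same route as the paper's proof: reduce to comparing $R_{\sx}^2$ and $R_{\x}^2$ via Theorem~\ref{thm:var_reduction} (after matching the limiting thresholds through the common $\chi^2_d$ limit), convert to least-squares residuals via Lemma~\ref{lem:R2}, pass to population quantities by the strong law using the eighth-moment hypothesis, and exploit the identity $\TC = \tilde{X}^T\beta + \SE$ to show the weighted regressor attains the minimal residual $\EE[\SE^2]$ while the unweighted one pays the extra term $\min_{\hbeta}\var(\tilde{X}^T\beta - X^T\hbeta)$. The only cosmetic difference is that the paper proves the residual bound for an arbitrary regressor $S$ with finite fourth moment and then substitutes $S=\X$, whereas you compare $\sx$ and $\x$ directly.
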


\section{Simulations}
\label{sec:simulation}
\begin{figure*}
    \centering
    \includegraphics[width=0.9\textwidth]{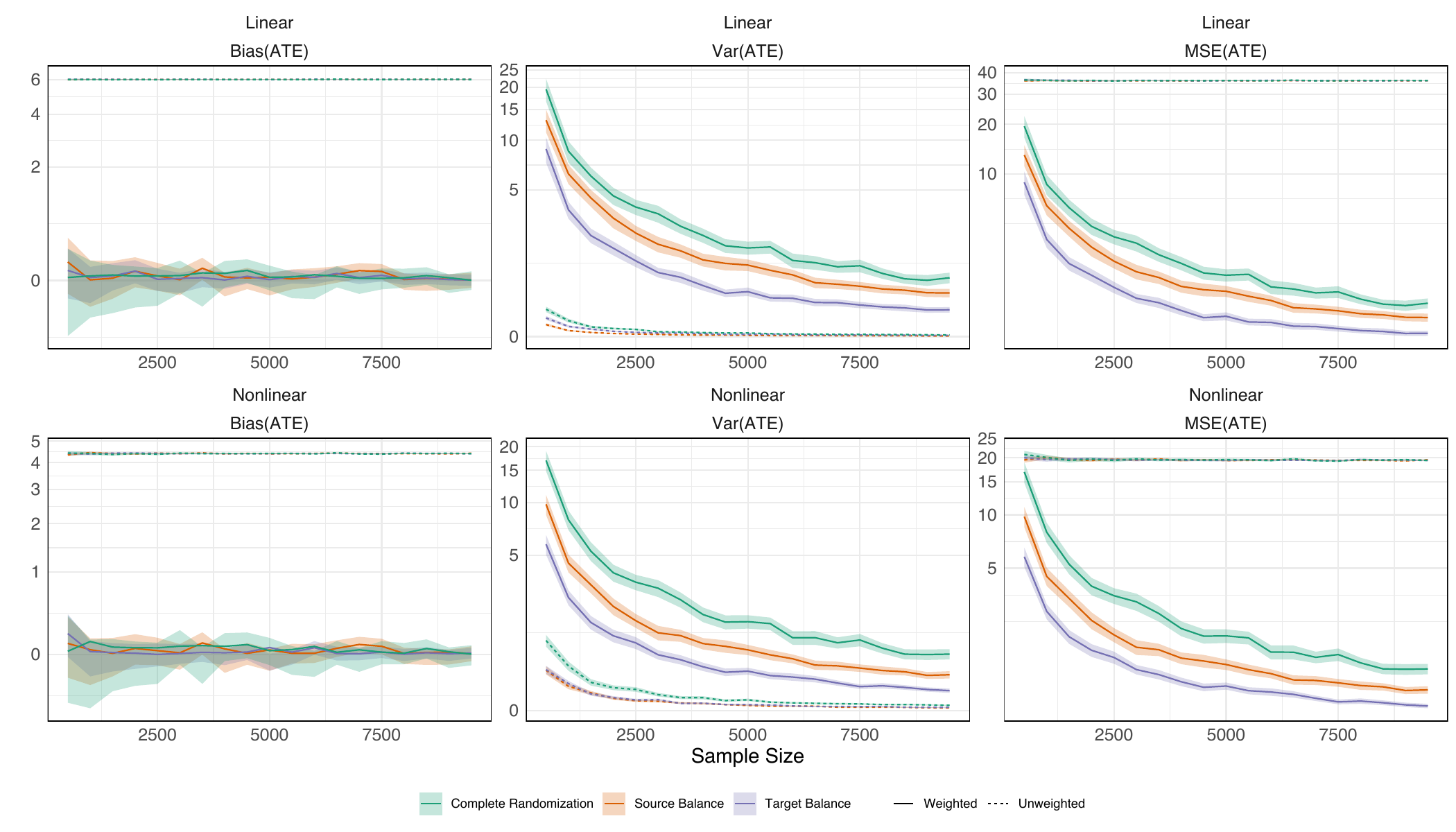}
    \caption{Bias, Variance and MSE as a function of the sample size.  All unweighted estimators are biases because they measure the ATE of the source distribution. As there is no importance weight threshold, all weighted estimators are unbiased (Theorem~\ref{thm:unbiased}) but the weighted estimator with Target Balance has the lowest variance. \emph{The $y$ axes are in log scale.}}
    \label{fig:sample_size}
\end{figure*}

\begin{figure*}
    \centering
    \includegraphics[width=0.9\textwidth]{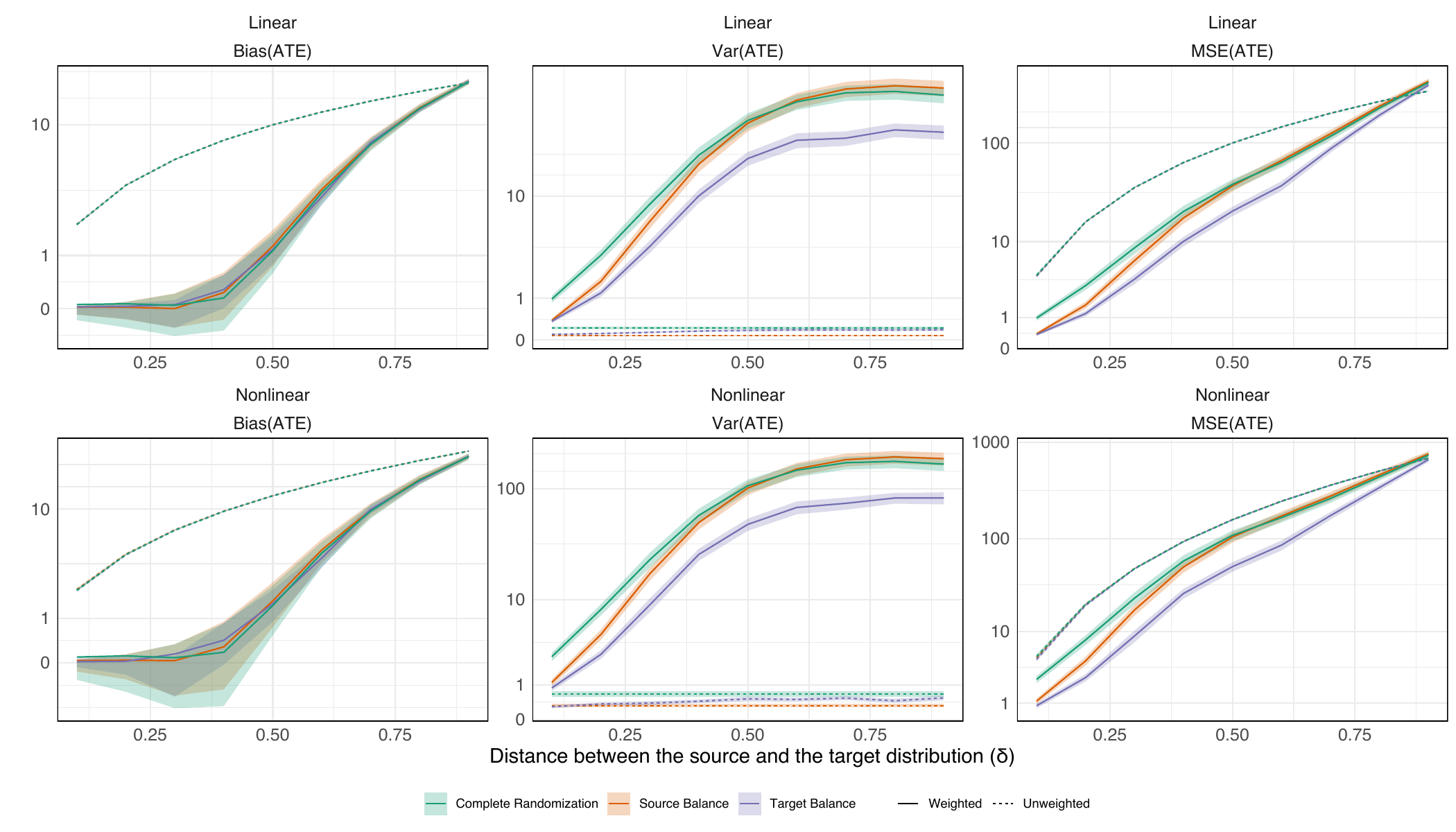}
    \caption{Bias, Variance and MSE as a function of the distance $\delta$ (defined in Section~\ref{sec:simulation}) between the source and the target distribution. Because of the importance weight threshold, the biases of the importance weighted methods increase as the $\delta$ increase. If the distance is too large, the bias of the importance weighted estimators is large, leading to high MSE. However when the distance is not too large, the weighted estimator with Target Balance has the lowest MSE. \emph{The $y$ axes are in log scale. }}
    \label{fig:distance}
\end{figure*}

\begin{figure*}
    \centering
    \includegraphics[width=0.8\textwidth]{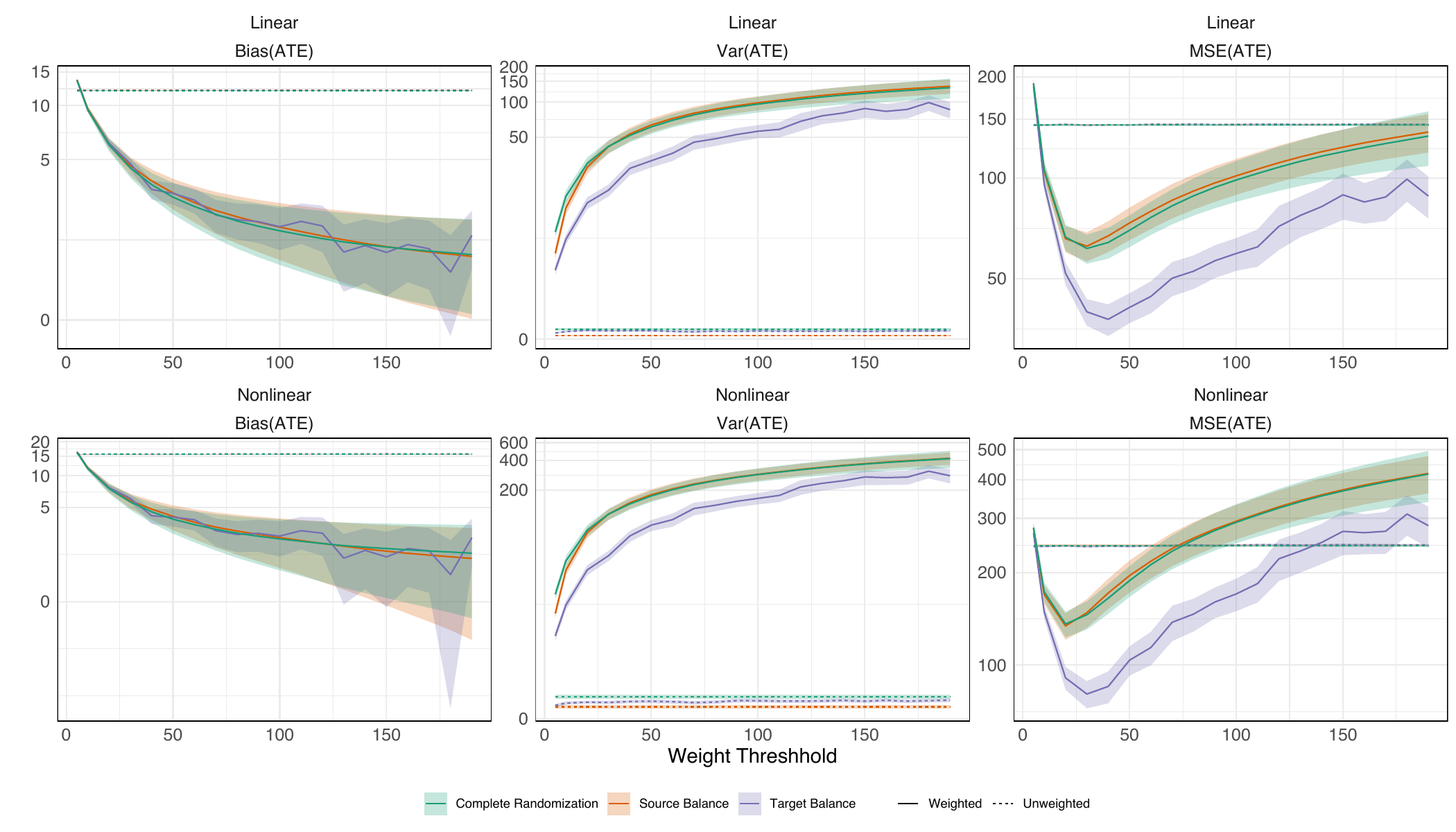}
    \caption{Bias, Variance and MSE as a function of importance weight threshold. As the threshold increases, the bias of the weighted methods decreases and the variance of the weighted methods increases.  Therefore there is a threshold when the MSE is minimized. The weighted estimator with Target Balance has the lowest MSE for a reasonably good threshold. \emph{The $y$ axes are in log scale. }}
    \label{fig:threshold}
\end{figure*}
We perform simulations on the two following models:

\textbf{Linear Model}
    \begin{align*}
    Y^0 = X + Norm(0,1) ; Y^1 = 3X + Norm(0,1)
    \end{align*}
\textbf{Nonlinear Model}
    \begin{align*}
    &Y^0 = X^TX + Norm(0,1) ; Y^1 = 2X^TX + Norm(0,1)
\end{align*}
We use the following source and target distributions for $X$. In the source distribution, $X \sim \text{MultivariateNorm}(\mathbf{1}, \I)$ where $\I$ is the identity matrix. In the target distribution, $X \sim \text{MultivariateNorm}(\mathbf{1} + \delta, \I)$ where $\delta$ is a parameter that will be specified later.

We randomly choose an assignment such that $n_1= n_0=n/2$. To select the random assignment with the top balance, instead of choosing a fixed threshold $\alpha$, we select the rejection probability $\alpha=0.99$ as in Def.~\ref{def:target_balance}. To implement this, we draw $100/(1-\alpha)$ assignments at random, calculate their Mahalanobis distance and pick one among the smallest $100$ uniformly at random. 

If the source and the target distributions are far away, importance weighting can induce large variance. We use the weight clipping technique, in which if the importance weight is larger than a threshold, it will be set to that threshold. It will induce bias but reduce variance, and therefore reduce mean square error (MSE). 

We compare 6 methods (WE, CR), (WE, SB), (WE, TB), (UE, CR), (UE, SB) and (UE, TB) by combining the following 2 properties:

\textbf{Weighted and Unweighted}. 
    \begin{itemize}[nosep,leftmargin=1em,labelwidth=*,align=left]  
        \item \emph{Weighted Estimator (WE).} We consider the importance weighted estimator in Eq.~\ref{eq:importance_sample}.
        \item \emph{Unweighted Estimator (UE).} We consider the unweighted estimator which is equivalent to Eq.\ref{eq:importance_sample} with all weights set to one.
\end{itemize}

 \textbf{Complete Randomization, Source Balance and Target Balance}. 
    \begin{itemize}[nosep,leftmargin=1em,labelwidth=*,align=left]  
        \item \emph{Complete Randomization (CR).} This is the randomized assignment without balancing.
         \item \emph{Source Balance (SB).} This is the rerandomization algorithm seeking Source Balance. 
    
    \item \emph{Target Balance (TB).} This is the rerandomization algorithm seeking Target Balance as in Definition~\ref{def:target_balance}. 
    \end{itemize}

We study the MSE of our methods in relation to the 3 following parameters: the sample size $n$, the importance weights threshold and the distance $\delta$.  Recall that in the source distribution, $X \sim \text{MultivariateNorm}(\mathbf{1}, \I)$ where $\I$ is the identity matrix and in the target distribution, $X \sim \text{MultivariateNorm}(\mathbf{1} + \delta, \I)$.

    \textbf{Sample Size.} In this experiment for both models we vary the sample size from $500$ to $9500$ with step size $500$ and set the number of covariates to $10$. For the linear model, $\delta = 0.3$. For the nonlinear model, $\delta = 0.2$. $\delta$ is chosen to be small enough so that we do not need weight clipping. For each sample size we repeat the experiment $500$ times. There is no importance weight threshold. The results are shown and discussed in Figure~\ref{fig:sample_size}.

    \textbf{Threshold.} In this experiment for both models we vary the importance weight threshold from $5$, then $10$ to $190$ with step size $10$. We set the number of covariates to $10$ and the sample size to be $1000$ and $\delta = 0.6$. $\delta$ is chosen to be large enough so that weight clipping is necessary. For each threshold we repeat the experiment $500$ times. The results are shown and discussed in Figure~\ref{fig:threshold}. 

    \textbf{Distance $\delta$.} In this experiment for both models we vary $\delta$ from $0.1$ to $0.9$ with step size $0.1$. We set the number of covariates to $10$, the sample size to be $1000$ and the importance weight threshold to be $40$. From the weight threshold experiment, we know that if the weight threshold is too large, the variance is too high while if the weight threshold is too small, the bias will be too high. Therefore we pick the value $40$ as a reasonable weight threshold. For each threshold we repeat the experiment $500$ times. The results are shown and discussed in Figure~\ref{fig:distance}. 
Across all simulations, Target Balance with the Weighted Estimator substantially reduces the MSE. 


\section{Conclusion}
\label{sec:conclusion}

In this work, we've shown that a desire for generalizability should change the way experiments are designed and run.
In particular, we argue that balance should be sought on the target population rather than the samples in which randomization will actually be performed.
We present a method for designing an experiment along these lines, show theoretically that it is unbiased and more efficient than sample balancing.
\bibliographystyle{plainnat}  
\bibliography{references}
\clearpage
\appendix
\onecolumn


\begin{center}
{\Large Supplement to "Designing Transportable Experiments Under S-Admissability"}
\end{center}
In Section~\ref{apx:additional_results} we discuss the variance reduction for $d \ge 1$ when the sample size is finite. In Section~\ref{apx:expectation} we show the proofs of Section~\ref{sec:expectation}. In Section~\ref{apx:d1} we show the proofs of Section~\ref{sec:d1}. In Section~\ref{apx:asymptotic} we show the proofs of Section~\ref{sec:asymptotic}. In Section~\ref{apx:additional_results_proofs} we show the proofs of Appendix~\ref{apx:additional_results}.

For a random variable $R$ with value $r$, we write the expectation, variance and covariance conditioning on $r$ as a short-hand for conditioning on $R=r$. On the other hand, the expectation, variance and covariance conditioning on $R$ are functions of $R$ and therefore are random variables. For example,$\EE[\htau^T_Y | \X, \Y]$ is a function of $\X$ and $\Y$, $\EE[\htau^T_Y | \X, \y] = \EE[\htau^T_Y | \X, \Y= \y] $ is a function of $\X$, while $ \EE[\htau^T_Y | \x,  \y] =\EE[\htau^T_Y | \X = \x,\Y=  \y]$ is a value. 

Conditioning on $\x$ and $\y$, the randomness only comes from $\Z$. Therefore  $\var_{\Z_{\rho}} (. | \x, \y), Cov_{\Z_{\rho}} (. | \x, \y)$ and $\EE_{\Z_{\rho}} (. | \x, \y)$ can be written as $\var_{\Z} (. | \x, \y, \rho = 1), Cov_{\Z} (. | \x, \y, \rho = 1)$ and $\EE_{\Z} (. | \x, \y, \rho = 1)$ respectively. We use both notations in the proofs. 

For a random variable $R$, we use $Cov(R)^{-1/2}$ to denote the Cholesky square root of $Cov(R)^{-1}$. 

We restate the model and some notations here for convenience. Let the model be: 
  \begin{align*}
        Y^1_i = X_i^T \beta_1 + \E^1_i & \quad \quad
        Y^0_i =  X_i^T \beta_0 + \E^0_i
        \end{align*}
Let $\epsilon^1_i$ and $\epsilon^0_i$ be the values taken by random variables $\E^1_i$ and $\E^0_i$. 
Let $C_i = \frac{Y^0_i + Y^1_i}{2}$, $\TC_i = W_i C_i$, $\CC := (C_1, \cdots, C_n)$ and $\SC = (\TC_1, \cdots, \TC_n)$. Let $c_i, \tilde{c}_i , \cc$ and $\scc$ be the values taken by $C_i,  \TC_i, \CC$ and $\SC$. Then
\begin{align*}
    C_i &= X_i^T \beta + \E_i  & \quad \quad c_i = x_i^T \beta + \epsilon_i\\
    \TC_i &= \tilde{X}_i^T \beta + \SE_i & \quad \quad \tilde{c}_i = \tilde{x}_i^T \beta + \tilde{\epsilon}_i
\end{align*}
where $\beta = \frac{\beta_1 + \beta_0}{2}, \E_i= \frac{\E^1_i + \E^0_i}{2}, \tilde{X}_i = W_i X_i$ and $\SE_i = W_i \E_i$. Let $\epsilon_i$  and $\tilde{\epsilon}_i = w_i \epsilon_i$ be the value taken by $\E_i$ and $\SE_i$. Let $\SBE = (\SE_1, \cdots, \SE_n)$. 
\section{Additional Results: Finite Sample Size Variance Reduction for $d \ge 1$}
\label{apx:additional_results}
In this section we discuss the finite sample case when $X$ is a multivariate random variable, which is a generalization of the result in Section~\ref{sec:d1} when $d=1$. We show that when the sample size is finite, if $\beta$ points to all directions with equal probability, then a balance condition which also consider the target population and is similar to Target Balance achieves the optimal variance reduction in expectation over $\beta$. The proofs are in Appendix~\ref{apx:additional_results_proofs}. 

We will use the variance decomposition in the matrix form similar to \citep{harshaw2019balancing} and provide intuition about the effect of balancing on the variance.
The following lemma is the general case when $d \ge 1$ of Lemma~\ref{lem:decomposition_d1} in Section~\ref{sec:d1}. 

\begin{lemma}
\label{lem:variance_decomposition}
For any function $\rho(\x, \Z) \in \{0,1\}$ satisfying  $\rho(\x, \Z) = \rho(\x, -\Z)$:
    \begin{align*}
    \var^S_{\Y, \Z_{\rho}} (\hat{\tau}^T_Y | \x) =   \beta^T Cov_{\Z_{\rho}} ( V |\x) \beta  + \frac{6}{n^2}\sigma_{\E}^2 \sum_{i=1}^n w_i^2, 
\end{align*}
for $V := \frac{2}{n}(\ww \cdot \x)^T\Z  = \frac{2}{n}\tilde{\x}^T\Z.$
\end{lemma}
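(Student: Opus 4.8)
The plan is to reduce to the already-established conditional-variance identity and then integrate out the outcomes, mirroring the $d=1$ argument behind Lemma~\ref{lem:decomposition_d1}. The only genuinely new ingredient is that the design-dependent term, which was the scalar $\frac{4}{n^2}\beta^2\EE_{\Z_\rho}[(\sum_i w_i x_i Z_i)^2\mid\x]$ when $d=1$, must now be packaged as the quadratic form $\beta^T Cov_{\Z_\rho}(V\mid\x)\beta$; everything involving the noise is dimension-free and carries over verbatim.

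First I would note that the outcomes $Y_i^a$ are scalars for every $d$, so Lemma~\ref{lem:d1_conditional_variance} holds unchanged in the multivariate case, giving $\var_{\Z_\rho}(\hat{\tau}^T_Y\mid\x,\y) = \frac{4}{n^2}\EE_{\Z_\rho}[(\sum_i Z_i w_i c_i)^2\mid\x,\y]$. I would then apply the law of total variance over $\Y$ conditional on $\X=\x$, writing $\var^S_{\Y,\Z_\rho}(\hat{\tau}^T_Y\mid\x) = \EE^S_\Y[\var_{\Z_\rho}(\hat{\tau}^T_Y\mid\x,\Y)\mid\x] + \var^S_\Y[\EE_{\Z_\rho}(\hat{\tau}^T_Y\mid\x,\Y)\mid\x]$. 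For the first term I substitute the model $W_i C_i = \tilde{x}_i^T\beta + W_i\E_i$, swap the order of $\EE_\Y$ and $\EE_{\Z_\rho}$ (valid since $\Y$ and $\Z$ are independent given $\x$), and expand $(\sum_i Z_i W_i C_i)^2 = (\Z^T\tilde\x\beta)^2 + 2(\Z^T\tilde\x\beta)(\sum_i Z_i W_i\E_i) + (\sum_i Z_i W_i\E_i)^2$. Taking $\EE_\Y$ and using $\EE[\E_i\mid\x]=0$ plus independence of the errors across units kills the cross term and turns the last square into $\sigma^2_\E\sum_i w_i^2$ (since $Z_i^2=1$), leaving a pure $\beta$-term plus a noise constant.

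The key step is to identify the $\beta$-term as the claimed quadratic form. The symmetry hypothesis $\rho(\x,\Z)=\rho(\x,-\Z)$ makes the accepted law $\Z_\rho$ invariant under $\Z\mapsto-\Z$, so $\EE_{\Z_\rho}[\Z\mid\x]=0$, hence $\EE_{\Z_\rho}[V\mid\x]=0$ and $Cov_{\Z_\rho}(V\mid\x)=\EE_{\Z_\rho}[VV^T\mid\x]$. Since $V^T\beta = \frac{2}{n}\Z^T\tilde\x\beta$, this yields $\frac{4}{n^2}\EE_{\Z_\rho}[(\Z^T\tilde\x\beta)^2\mid\x] = \EE_{\Z_\rho}[(V^T\beta)^2\mid\x] = \beta^T Cov_{\Z_\rho}(V\mid\x)\beta$, which specializes to $\frac{4}{n^2}\beta^2\EE_{\Z_\rho}[(\sum_i w_i x_i Z_i)^2\mid\x]$ when $d=1$, confirming consistency with Lemma~\ref{lem:decomposition_d1}.

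Finally I would collect the noise contributions. The design $\rho$ affects only the quadratic form above; the two remaining pieces — the $\sigma^2_\E\sum_i w_i^2$ coming out of the first total-variance term, and $\var^S_\Y[\EE_{\Z_\rho}(\hat{\tau}^T_Y\mid\x,\Y)\mid\x]$, which reduces to the outcome-noise variance of $\frac1n\sum_i W_i(Y^1_i-Y^0_i)$ since the $\Z$-mean of $\hat{\tau}^T_Y$ is $\frac1n\sum_i W_i(Y^1_i-Y^0_i)$ by symmetry — involve only the scalars $w_i$ and $\E^a_i$. They are therefore identical to the computation already performed for $d=1$ and assemble to the stated constant $\frac{6}{n^2}\sigma^2_\E\sum_i w_i^2$. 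The main obstacle is the bookkeeping that cleanly separates the single design-dependent term from the two design-independent noise contributions and verifies, via the $\Z\mapsto-\Z$ symmetry, that only the former survives as the quadratic form $\beta^T Cov_{\Z_\rho}(V\mid\x)\beta$; the error-variance accounting is dimension-free and can be imported unchanged from the scalar case.
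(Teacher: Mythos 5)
Your proposal is correct and follows essentially the same route as the paper: law of total variance over $\Y$ given $\x$, the dimension-free conditional-variance identity $\var_{\Z_\rho}(\hat{\tau}^T_Y\mid\x,\y)=\frac{4}{n^2}\EE_{\Z_\rho}[(\sum_i Z_i w_i c_i)^2\mid\x,\y]$, substitution of the model to kill the cross term and extract $\sigma^2_\E\sum_i w_i^2$, and the $\Z\mapsto-\Z$ symmetry to identify $\frac{4}{n^2}\EE_{\Z_\rho}[(\Z^T\sx\beta)^2\mid\x]$ with $\beta^T Cov_{\Z_\rho}(V\mid\x)\beta$. The paper packages the middle step as a covariance decomposition of $Cov(\scc^T\Z\mid\rho=1)$ (its Lemma on the conditional variance in matrix form), but the computation is the same.
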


Since the design affects only the first term in the above expression, we focus on the the random variable $V$. $V$ is now a $d$-dimensional vector and  $\beta$ is unknown. 

To understand the first term, we use the same decomposition of $\beta^T Cov_{\Z_{\rho}}(V|\x) \beta$ as in~\citep{harshaw2019balancing}.  Let $\e_1, ..., \e_n$ and $\lambda_1, .., \lambda_n$ be the normalized eigenvectors and corresponding eigenvalues of  matrix $Cov_{\Z_{\rho}} (V | \x) $. Since  $Cov_{\Z_{\rho}} (V | \x) $ is symmetric, the eigenvectors form an orthonormal basis so we can write $\beta$ as a linear combination of $\e_1, .., \e_n$ and get:
\begin{align*}
   \beta = \norm{\beta}\sum_{i=1}^n \eta_i \e_i 
\end{align*}
where $\eta_i = \langle \beta, \e_i \rangle/\norm{\beta}$ is the coefficient that captures the alignment of the weighted outcome $\beta$ with respect to the eigenvector $\e_i$. Therefore:
\begin{align*}
   \beta^T  Cov_{\Z_{\rho}} (V | \x)  \beta = \norm{\beta}^2\sum_{i=1}^n \eta_i^2 \lambda_i
\end{align*}

In the worst case, $\beta$ can align with the eigenvector of   $Cov_{\Z_{\rho}} (V | \x)$ with the largest eigenvalue. Therefore a good design is one with $\rho$ that minimize the largest eigenvalue of $Cov_{\Z_{\rho}} (V | \x)$. We leave this for future works. In this work we  consider the average case direction - when $\beta$ with norm $\|\beta\| = l $ can point in any direction with equal probability. In that case, we have  
\begin{lemma}
\label{lem:expected_beta}
\begin{align}
\EE_{\|\beta\|= l}  \beta^T Cov_{\Z_{\rho}} (V | \x)  \beta  = \frac{l^2}{2} \text{Trace}(Cov_{\Z_{\rho}} (V | \x) ).
\end{align}
\end{lemma}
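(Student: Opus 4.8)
The plan is to use the fact that $Cov_{\Z_{\rho}}(V|\x)$ carries no dependence on $\beta$: it is determined by $\x$, the diagonal weight matrix $\ww$, and the law of $\Z_{\rho}$ (the balance condition $\rho(\x,\Z)$ depends only on $\x$ and $\Z$), none of which involve $\beta$. Hence, under $\EE_{\|\beta\|=l}$ the matrix $A := Cov_{\Z_{\rho}}(V|\x)$ is a fixed symmetric matrix and all randomness sits in the outer product $\beta\beta^T$. Rewriting the quadratic form as a trace, $\beta^T A \beta = \text{Trace}(A\,\beta\beta^T)$, and pulling the constant matrix outside the expectation gives
\begin{align*}
\EE_{\|\beta\|=l}\left[\beta^T A \beta\right] = \text{Trace}\left(A\,\EE_{\|\beta\|=l}[\beta\beta^T]\right).
\end{align*}

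The key step is then to evaluate the second-moment matrix $\EE_{\|\beta\|=l}[\beta\beta^T]$ of a vector drawn uniformly on the sphere of radius $l$. By the rotational invariance of that distribution, $O\,\EE[\beta\beta^T]\,O^T = \EE[\beta\beta^T]$ for every orthogonal $O$, which forces $\EE[\beta\beta^T] = c\,\I_d$ for a scalar $c$. Taking the trace and using $\|\beta\| = l$ pins down $c$ through $c\,d = \text{Trace}(\EE[\beta\beta^T]) = \EE[\|\beta\|^2] = l^2$. Substituting $\EE[\beta\beta^T] = (l^2/d)\,\I_d$ back into the trace identity yields the expectation as $\frac{l^2}{d}\,\text{Trace}(A)$.

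Equivalently, I could argue directly through the eigenbasis already introduced: writing $\beta = l\sum_i \eta_i \e_i$ in the orthonormal eigenbasis of $A$ gives $\beta^T A \beta = l^2 \sum_i \eta_i^2 \lambda_i$, where $\eta_i = \langle \beta, \e_i\rangle/\|\beta\|$ are the coordinates of the unit direction $\beta/l$. Since $\sum_i \eta_i^2 = 1$ identically and the coordinates of an isotropic unit vector are exchangeable, $\EE[\eta_i^2] = 1/d$ for every $i$, so the expectation is $\frac{l^2}{d}\sum_i \lambda_i = \frac{l^2}{d}\,\text{Trace}(A)$.

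There is no substantive obstacle here; the entire content is the isotropy identity $\EE_{\|\beta\|=l}[\beta\beta^T] = (l^2/d)\,\I_d$, after which linearity of the trace finishes the argument. The two points I would verify carefully are (i) that $Cov_{\Z_{\rho}}(V|\x)$ may legitimately be treated as a constant under $\EE_{\|\beta\|=l}$, and (ii) the dimensional constant produced by the computation: the isotropic average yields the factor $1/d$, which coincides with the stated $1/2$ precisely when $d=2$, suggesting the general-$d$ statement should carry a $1/d$ in place of $1/2$.
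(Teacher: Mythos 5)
Your argument is correct, and in its second form (expanding $\beta$ in the orthonormal eigenbasis of $A = Cov_{\Z_{\rho}}(V|\x)$ and averaging the squared coordinates $\eta_i^2$) it is essentially the same route the paper takes. The difference is in the final constant, and here your version is the right one. The paper writes $\EE_{\beta}[\eta_i^2] = \EE_{\theta}\cos^2(\theta)$ with $\theta$ ``uniformly distributed in $[0,2\pi]$'' and concludes $\EE[\eta_i^2]=1/2$; but the angle between an isotropic direction in $\mathcal{R}^d$ and a fixed unit vector is uniform on the circle only when $d=2$. For general $d$ the correct computation is exactly the one you give: since $\sum_{i=1}^d \eta_i^2 = 1$ identically and the coordinates of an isotropic unit vector are exchangeable, $\EE[\eta_i^2]=1/d$, equivalently $\EE_{\|\beta\|=l}[\beta\beta^T] = (l^2/d)\,\I_d$, so the lemma should read $\frac{l^2}{d}\,\text{Trace}(Cov_{\Z_{\rho}}(V|\x))$ rather than $\frac{l^2}{2}$. (A quick sanity check at $d=1$, where $\beta=\pm l$ and $\beta^2 A = l^2 A$ deterministically, confirms the factor must be $1/d$.) Your premise (i) is also fine: $\rho$ depends only on $\x$ and $\Z$, so $Cov_{\Z_{\rho}}(V|\x)$ is indeed a constant matrix under the average over $\beta$. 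The good news is that this correction is harmless downstream: Theorem~\ref{thm:expected_beta_optimal} and Corollary~\ref{col:var_d_greater_1} only use the fact that the averaged quadratic form is a fixed positive multiple of the trace, so replacing $l^2/2$ by $l^2/d$ leaves every inequality in those proofs intact.
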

We can then ask for the balance event $\Omega$ which results in minimizing the trace of  $Cov_{\Z}(V |\x, \Omega)$, which is shown in the following lemma. Note that when $d=1$, the trace of  $Cov_{\Z}(V |\x, \Omega)$ is the variance $\var_{\Z}(V |\x, \Omega)$, and this result is the general case of minimizing the variance of a $1$-dimensional random variable in Section~\ref{sec:d1}.
\begin{lemma}
\label{lem:trace_optimality}
Let $U \in \mathcal{R}^d$ be a random variable such that $\EE[U] = 0$. 
Let $u_{\alpha}$ be such that $\PP(\| U \|^2 < u_{\alpha}) = 1 - \alpha$.
     
     Let $\Omega$ be an event such that $\PP(\Omega ) \ge 1- \alpha$ and $\EE[U | \Omega] = 0$. Then:
\begin{align*}
     Trace( Cov(U |\| U \|^2 < u_{\alpha}) \le Trace( Cov(U | \Omega) )\end{align*}
\end{lemma}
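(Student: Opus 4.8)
The plan is to reduce the trace of the conditional covariance to a conditional second moment and then prove a ``keep the smallest values'' inequality for a scalar random variable. First I would record that for any event $E$,
$Trace(Cov(U \mid E)) = \EE[\,\|U\|^2 \mid E\,] - \|\EE[U\mid E]\|^2$,
since the trace of a covariance matrix is the sum of its coordinate variances. The hypothesis $\EE[U \mid \Omega] = 0$ then gives $Trace(Cov(U\mid\Omega)) = \EE[\,\|U\|^2\mid\Omega\,]$ exactly, whereas for the truncation event $T := \{\|U\|^2 < u_\alpha\}$ we always have $Trace(Cov(U\mid T)) \le \EE[\,\|U\|^2 \mid T\,]$, regardless of whether $\EE[U\mid T]$ vanishes. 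Hence it suffices to prove the scalar inequality $\EE[\,\|U\|^2 \mid T\,] \le \EE[\,\|U\|^2\mid\Omega\,]$, and the whole statement reduces to a one-dimensional problem about the nonnegative random variable $R := \|U\|^2$, for which $\PP(R < u_\alpha) = \PP(T) = 1-\alpha \le \PP(\Omega)$.

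The core claim is therefore that, writing $m := \EE[R\mid T] = \EE[R \mid R < u_\alpha]$, every event $\Omega$ with $\PP(\Omega) \ge 1-\alpha$ satisfies $\EE[R\mid\Omega]\ge m$. I would prove this by showing $\EE[(R-m)\mathbbm{1}_\Omega] \ge 0$. The key observations are that (i) $\EE[(R-m)\mathbbm{1}_T] = 0$ by definition of $m$; (ii) on $T$ one has $R < u_\alpha$, so in particular $m \le u_\alpha$; and (iii) on $T^c$ one has $R \ge u_\alpha$. Splitting $\Omega = (\Omega\cap T)\cup(\Omega\setminus T)$ and using $\EE[(R-m)\mathbbm{1}_{\Omega\cap T}] = -\EE[(R-m)\mathbbm{1}_{T\setminus\Omega}]$ (from (i)), one gets $\EE[(R-m)\mathbbm{1}_\Omega] = \EE[(R-m)\mathbbm{1}_{\Omega\setminus T}] - \EE[(R-m)\mathbbm{1}_{T\setminus\Omega}]$. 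On $\Omega\setminus T$ we have $R-m \ge u_\alpha - m$, and on $T\setminus\Omega$ we have $R-m < u_\alpha - m$, so the expression is bounded below by $(u_\alpha - m)\bigl[\PP(\Omega\setminus T) - \PP(T\setminus\Omega)\bigr]$. The bracket equals $\PP(\Omega)-\PP(T)$ by inclusion--exclusion, and since $u_\alpha - m \ge 0$ and $\PP(\Omega) \ge \PP(T)$, the product is nonnegative.

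Chaining the pieces then gives $Trace(Cov(U\mid T)) \le \EE[R\mid T] = m \le \EE[R\mid\Omega] = Trace(Cov(U\mid\Omega))$, which is the claim. I expect the main obstacle to be the scalar inequality, and within it the sign bookkeeping that shows the two ``exchange'' regions $T\setminus\Omega$ and $\Omega\setminus T$ contribute in the right direction; once the identity $\EE[(R-m)\mathbbm{1}_T]=0$ is in hand, the sign of $u_\alpha - m$ together with the probability comparison $\PP(\Omega)\ge\PP(T)$ closes the argument. The only regularity point worth flagging is the existence of $u_\alpha$ with $\PP(\|U\|^2 < u_\alpha) = 1-\alpha$, which is guaranteed by the hypothesis (and holds automatically when $\|U\|^2$ is continuous, as in the asymptotically normal setting of Section~\ref{sec:asymptotic}); notably, no symmetry of $U$ is needed anywhere in the proof.
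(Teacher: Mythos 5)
Your proof is correct and follows essentially the same route as the paper's: both reduce the trace of the conditional covariance to the conditional second moment $\EE[\|U\|^2 \mid \cdot]$ and then argue that conditioning on the sublevel set $T = \{\|U\|^2 < u_\alpha\}$ minimizes that quantity among events of probability at least $1-\alpha$. Where the paper justifies that last step informally (choose the sub-density $f \le p$ of total mass $1-\alpha$ concentrated on the smallest values of $u^T u$), you supply an explicit exchange argument handling $\PP(\Omega) \ge 1-\alpha$ rather than equality, and you also correctly flag that $\EE[U \mid T]$ need not vanish, so one only gets $Trace(Cov(U \mid T)) \le \EE[\|U\|^2 \mid T]$ --- a point the paper glosses over but which only strengthens the inequality.
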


It follows from Lemma~\ref{lem:variance_decomposition}, Lemma~\ref{lem:expected_beta} and Lemma~\ref{lem:trace_optimality} that we can minimize $\EE_{\beta} \var^S_{\Y, \Z} (\hat{\tau}^T_Y | \x, \Omega) $ by defining the following balance condition:
\begin{definition}[Alternate Target Balance]
With a rejection threshold  $\alpha$, define the balance condition 
$$
\phi'^{\alpha}_T =
    \begin{cases}
      1, & \text{if } \| V \|^2 < a \\
      0, & \text{otherwise}
    \end{cases}
$$
where $a$ be such that $\PP(\phi'^{\alpha}_T=1 | \x) = 1 - \alpha$.
\end{definition}
Recall that Target Balance use the condition $\|B\|^2 < a$ where $B = V Cov_{\Z}(V)^{-1/2}$ is the normalized random variable of $V$. Note since that $V = \frac{2}{n}\tilde{\x}^T\Z$, Alternate Target Balance also considers the target population in the design phase. However Alternate Target Balance is not invariant under linear transformations of the covariates $x_i$'s while Target Balance is. 

We have the following Theorem which is a generalization of Theorem~\ref{thm:optimality_1d} in Section~\ref{sec:d1}. 
\begin{theorem}
\label{thm:expected_beta_optimal}
 Let $\|\beta\| = l$ and $\beta$ points in any direction with equal probability and $n_0 = n_1 = n/2$. 

Let $\rho(\X, \Z)$ be a function satisfying $\rho(\X, \Z) = \rho(\X, -\Z)$ and $\PP(\rho = 1| \x) \ge 1 - \alpha$. Then
\begin{align*}
     \EE_{\beta} \var^S_{\Y, \Z_{\phi'^{\alpha}_T}} (\hat{\tau}^T_Y | \x) &\le \EE_{\beta} \var^S_{\Y, \Z_{\rho}} (\hat{\tau}^T_Y | \x)
\end{align*}
\end{theorem}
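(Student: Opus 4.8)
The plan is to chain the three lemmas that have been assembled for precisely this purpose (Lemma~\ref{lem:variance_decomposition}, Lemma~\ref{lem:expected_beta}, and Lemma~\ref{lem:trace_optimality}), reducing the statement to a trace comparison and then invoking the trace-optimality of a symmetric tail truncation. The only genuine work is a symmetry argument verifying the hypotheses of Lemma~\ref{lem:trace_optimality}.

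First I would apply Lemma~\ref{lem:variance_decomposition} to both $\phi'^{\alpha}_T$ and $\rho$, writing each conditional variance as $\beta^T Cov_{\Z_{\rho}}(V|\x)\beta + \frac{6}{n^2}\sigma^2_{\E}\sum_{i=1}^n w_i^2$ with $V = \frac{2}{n}\sx^T\Z$. The additive error term is identical for both designs and independent of the balance condition, so it cancels from the desired inequality and it remains to compare the quadratic forms. Taking $\EE_{\beta}$ with $\|\beta\| = l$ fixed and uniformly distributed in direction, Lemma~\ref{lem:expected_beta} turns each averaged quadratic form into $\frac{l^2}{2}\,\text{Trace}(Cov_{\Z_{\rho}}(V|\x))$, so the theorem reduces to the trace inequality
\[
  \text{Trace}\big(Cov_{\Z_{\phi'^{\alpha}_T}}(V|\x)\big)\;\le\;\text{Trace}\big(Cov_{\Z_{\rho}}(V|\x)\big).
\]

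Finally I would invoke Lemma~\ref{lem:trace_optimality} with $U = V$, identifying the truncation event $\{\|V\|^2 < a\}$ of $\phi'^{\alpha}_T$ with $\{\|U\|^2 < u_{\alpha}\}$ (the thresholds $a$ and $u_{\alpha}$ coincide since both are fixed by the common rejection probability $1-\alpha$) and taking $\Omega = \{\rho = 1\}$. To apply the lemma I must verify its mean-zero hypotheses. Since $V$ is linear in $\Z$ we have $V(-\Z) = -V(\Z)$, and under complete randomization with $n_0 = n_1$ the law of $\Z$ is symmetric ($\Z$ and $-\Z$ are equally likely because the sign flip preserves $\sum_i Z_i = 0$), giving $\EE[V] = 0$. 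The assumption $\rho(\x,\Z) = \rho(\x,-\Z)$ makes the accepted region closed under $\Z \mapsto -\Z$, so the conditional law of $\Z$ given $\rho = 1$ stays symmetric and $\EE[V \mid \rho = 1] = 0$, which is exactly the condition $\EE[U\mid\Omega]=0$; the acceptance set $\{\|V\|^2 < a\}$ of $\phi'^{\alpha}_T$ is sign-symmetric for the same reason. Because $\PP(\rho = 1\mid \x) \ge 1-\alpha = \PP(\|V\|^2 < a \mid \x)$, Lemma~\ref{lem:trace_optimality} yields the trace inequality; multiplying through by $l^2/2$ and re-adding the common error term gives the claim.

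The main obstacle is not computational but the bookkeeping needed to confirm that an arbitrary symmetric $\rho$ actually satisfies $\EE[V\mid\rho=1]=0$, which is where the sign-flip symmetry of both the randomization distribution and the acceptance region is indispensable; without it the reduction to Lemma~\ref{lem:trace_optimality} would break down, since that lemma's conclusion requires the conditioned variable to remain centered.
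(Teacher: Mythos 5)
Your proposal is correct and follows essentially the same route as the paper: apply Lemma~\ref{lem:variance_decomposition} to isolate the design-dependent quadratic form, average over $\beta$ via Lemma~\ref{lem:expected_beta} to reduce to a trace comparison, and conclude with Lemma~\ref{lem:trace_optimality}, using the sign-symmetry of $\rho$ and of the randomization distribution to verify $\EE[V\mid\rho=1]=0$. The only cosmetic difference is that the paper handles $\PP(\rho=1\mid\x)\ge 1-\alpha$ by an intermediate threshold $v_{\eta}$ with $\eta\le\alpha$, whereas you invoke the trace lemma directly in its stated $\ge 1-\alpha$ form; both are fine.
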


Similar to Section~\ref{sec:d1}, applying Theorem~\ref{thm:optimality_1d} with $\rho$ being the constant function $\rho(\x,\Z)=1$ for all $\x,\Z,$ we have:
\begin{corollary}
\label{col:var_d_greater_1}
Let $\|\beta\| = l$ and $\beta$ points in any direction with equal probability. When  $n_0 = n_1 = n/2$, using Alternate Target Balance reduces the variance compared to complete randomization in expectation over $\beta$. 
\begin{align*}
     \EE_{\beta} \var^S_{ \Z_{\phi'_T}, \Y} (\hat{\tau}^T_Y | \x) &\le \EE_{\beta} \var^S_{ \Z, \Y} (\hat{\tau}^T_Y | \x)
\end{align*}
\end{corollary}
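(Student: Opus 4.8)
The plan is to derive this directly from Theorem~\ref{thm:expected_beta_optimal} by specializing the arbitrary symmetric balance function $\rho$ to the trivial one that never rejects. First I would take $\rho(\x, \Z) = 1$ for all $\x$ and $\Z$ and verify that it meets the two hypotheses of Theorem~\ref{thm:expected_beta_optimal}. Symmetry is immediate, since a constant function trivially satisfies $\rho(\x, \Z) = \rho(\x, -\Z)$, and the rejection-probability condition holds because $\PP(\rho = 1 \mid \x) = 1 \ge 1 - \alpha$ for every $\alpha \in [0,1]$.

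Next I would observe that conditioning on the sure event $\{\rho = 1\}$ has no effect on the law of $\Z$, so the accepted distribution $\Z_{\rho}$ coincides with the unconstrained $\Z$ drawn uniformly over the $\binom{n}{n_1}$ equally likely assignments, i.e.\ complete randomization. Consequently $\var^S_{\Y, \Z_{\rho}}(\htau^T_Y \mid \x) = \var^S_{\Y, \Z}(\htau^T_Y \mid \x)$, and the same identity survives after taking $\EE_{\beta}$ of both sides.

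Finally, applying Theorem~\ref{thm:expected_beta_optimal} with this choice of $\rho$ yields
\[
\EE_{\beta} \var^S_{\Y, \Z_{\phi'_T}}(\htau^T_Y \mid \x) \le \EE_{\beta} \var^S_{\Y, \Z_{\rho}}(\htau^T_Y \mid \x) = \EE_{\beta} \var^S_{\Y, \Z}(\htau^T_Y \mid \x),
\]
which is exactly the claimed inequality.

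Since every step is a direct specialization, I do not expect a genuine obstacle here; this is an easy corollary in the same spirit as Corollary~\ref{col:var1d} follows from Theorem~\ref{thm:optimality_1d}. The only point requiring any care is recognizing that the never-rejecting $\rho$ produces precisely the complete-randomization distribution of $\Z$, so that the general optimality statement of Theorem~\ref{thm:expected_beta_optimal} can be invoked verbatim with no further computation.
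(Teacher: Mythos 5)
Your proposal is correct and follows exactly the paper's own argument: take $\rho \equiv 1$, check the symmetry and rejection-probability hypotheses, note that conditioning on the sure event leaves the law of $\Z$ unchanged so $\Z_\rho$ is complete randomization, and invoke Theorem~\ref{thm:expected_beta_optimal}. No differences worth noting.
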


Recall that the first term in the decomposition in Lemma~\ref{lem:variance_decomposition}  is equal to: 
\begin{align*}
     &\beta^T Cov_{\Z_{\rho}}(V|\x) \beta    = \gamma^T Cov_{\Z_{\rho}} (B| \x) \gamma = \gamma^T Cov_{\Z}(B | \x, \rho = 1) \gamma
\end{align*}
where $\gamma =  \beta^T Cov_{\Z}(V)^{1/2}$ and $B = VCov_{\Z}(V)^{-1/2}$. 

When the sample size is large, $B$ converges to a standard normal distribution. Recall that Target Balance is equal to truncating $\|B\|^2 < a$. So $ Cov_{\Z_{\phi_T}}(B | \x) $ is the covariance of a standard normal random variable $B$ truncated by $\|B\|^2 < a$. From Theorem 3.1 in \citep{morgan2012rerandomization} when $B$ is a standard normal distribution, $ Cov(B | \x, \phi_T = 1) = v Cov(B| \x)$ for some $v < 1$, so the variance is reduced. However we do not need to go through this analysis because \citep{li2018asymptotic} already has variance reduction results for the case when the sample size is large. In Section~\ref{sec:asymptotic} we use the result from \citep{li2018asymptotic} directly to show that Target Balance achieves a smaller variance than Source Balance. 

\section{Proofs of Section~\ref{sec:expectation}}
\label{apx:expectation}
In this section we prove Theorem~\ref{thm:unbiased}. We made use of the following lemma from~\cite{morgan2012rerandomization}:
 \begin{lemma}[from the proof of Theorem 2.1 in \cite{morgan2012rerandomization}]
\label{lem:helper}
Let  $\A := (A_1, ..., A_n)^T \in \mathcal{R}^n$. Let $n_1 = n_0 = n/2$. For any function $\rho(\x, \A) \in \{0,1\}$ satisfying $\rho(\x, \A) = \rho(\x,1- \A)$:
\begin{align*}
    \EE^S_{\A} [ A_i | \x, \y, \rho = 1] = \frac{1}{2} 
\end{align*}
\end{lemma}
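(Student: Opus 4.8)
The plan is to prove the claim by a label-flipping symmetry (pairing) argument, exploiting the hypothesis $\rho(\x,\A) = \rho(\x, \1 - \A)$. First I would observe that the conditioning on $\y$ is innocuous: the assignment $\A$ is drawn uniformly over the set of balanced assignments $\mathcal{A} := \{\aaa \in \{0,1\}^n : \sum_i a_i = n_1\}$ independently of $(\x,\y)$, and the acceptance indicator $\rho$ is a function of $(\x,\A)$ only. Hence, conditioned on $\x$, $\y$, and $\rho = 1$, the law of $\A$ is just the uniform prior restricted to the acceptance region, so that $\PP(\A = \aaa \mid \x, \y, \rho = 1) = \mathbbm{1}[\rho(\x,\aaa) = 1] / \sum_{\aaa' \in \mathcal{A}} \mathbbm{1}[\rho(\x,\aaa')=1]$ for $\aaa \in \mathcal{A}$.

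The core step is to introduce the involution $\aaa \mapsto \1 - \aaa$ on $\mathcal{A}$, which flips every treatment/control label. Because $n_0 = n_1 = n/2$, whenever $\aaa \in \mathcal{A}$ we also have $\sum_i(1 - a_i) = n - n_1 = n_1$, so $\1 - \aaa \in \mathcal{A}$; thus the map is a bijection of $\mathcal{A}$ onto itself. The symmetry hypothesis $\rho(\x,\aaa) = \rho(\x, \1 - \aaa)$ says the acceptance region is invariant under this involution, and since the restricted-uniform law above depends on $\aaa$ only through $\mathbbm{1}[\rho(\x,\aaa)=1]$, the conditional distribution is itself invariant: $\PP(\A = \aaa \mid \x,\y,\rho=1) = \PP(\A = \1 - \aaa \mid \x,\y,\rho=1)$ for every $\aaa$.

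Finally I would compute the conditional expectation and apply the change of variables $\aaa \mapsto \1 - \aaa$. Under this substitution the probabilities are preserved (by the invariance just established) while the $i$-th coordinate is sent to $1 - a_i$, giving
\[
\EE^S_{\A}[A_i \mid \x,\y,\rho=1] = \sum_{\aaa \in \mathcal{A}} (1 - a_i)\, \PP(\A = \aaa \mid \x,\y,\rho=1) = 1 - \EE^S_{\A}[A_i \mid \x,\y,\rho=1],
\]
and solving yields $\EE^S_{\A}[A_i \mid \x,\y,\rho=1] = \tfrac{1}{2}$.

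I do not anticipate a genuine obstacle, since the argument is purely a combinatorial symmetry. The only point requiring care is the reduction in the first paragraph: one must justify that conditioning on $\y$ does not disturb the law of $\A$ (because $\rho$ ignores $\y$ and $\A$ is sampled independently of the covariates and outcomes), so that the conditional law collapses to the uniform measure on the accepted assignments. Once that is in place, the equal-group-size requirement guarantees that the flip $\aaa \mapsto \1 - \aaa$ preserves $\mathcal{A}$, and the conclusion is immediate.
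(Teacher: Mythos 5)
Your proof is correct: the paper does not reprove this lemma but cites it from the proof of Theorem 2.1 in Morgan and Rubin (2012), and your label-flipping involution argument (using $n_0=n_1$ so that $\aaa \mapsto \1-\aaa$ preserves the set of balanced assignments, plus the symmetry of the acceptance region, to get $\EE[A_i\mid\cdot]=1-\EE[A_i\mid\cdot]$) is exactly the standard argument used there. Your care in noting that conditioning on $\y$ is harmless because $\rho$ depends only on $(\x,\A)$ and $\A$ is drawn independently of the outcomes is the right justification and completes the reduction.
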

We also prove the following lemma in order to prove Theorem~\ref{thm:unbiased}:
\begin{lemma}
\label{lem:expectation}
For any function $\rho(\x, \A) \in \{0,1\}$ satisfying $\rho(\x, \A) = \rho(\x,1- \A)$:
\begin{align*}
\mathbb{E}_{\A| \rho = 1}[\hat{\tau}^T_Y | \X, \Y ]
&= \frac{1}{n} \sum_{i=1}^n W_i  (Y^1_i  - Y^0_i) \\
 \EE^S_{\Y, \A| \rho =1}[\hat{\tau}^T_Y | \X]
&= \frac{1}{n} \sum_{i=1}^n W_i  (\beta_1- \beta_0)^T X_i 
\end{align*}
\end{lemma}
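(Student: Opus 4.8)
The plan is to prove the two identities in sequence, exploiting the nested (tower) structure of the claim. For the first identity I would start from the second form of the estimator in Eq.~\ref{eq:importance_sample}, which with $n_0 = n_1 = n/2$ reads $\hat{\tau}^T_Y = \frac{2}{n}\sum_{i=1}^n W_i A_i Y^1_i - \frac{2}{n}\sum_{i=1}^n W_i(1-A_i)Y^0_i$. Conditioning on $\X$ and $\Y$ fixes every $W_i$ (a function of $X_i$) and every $Y^a_i$, so the only remaining randomness is in $\A$. By linearity of expectation I would pull $\EE_{\A|\rho=1}[\,\cdot\mid\X,\Y]$ through the sums, which leaves $\EE_{\A|\rho=1}[A_i\mid\X,\Y]$ as the single quantity to evaluate.

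The key step is to apply Lemma~\ref{lem:helper}: since $\rho$ satisfies the symmetry $\rho(\x,\A)=\rho(\x,1-\A)$, we have $\EE^S_\A[A_i\mid \x,\y,\rho=1]=\tfrac{1}{2}$ for every value $\x,\y$, hence $\EE_{\A|\rho=1}[A_i\mid\X,\Y]=\tfrac{1}{2}$ as a random variable. Substituting gives $\frac{2}{n}\sum_i W_i\cdot\tfrac{1}{2}\cdot Y^1_i - \frac{2}{n}\sum_i W_i\cdot\tfrac{1}{2}\cdot Y^0_i = \frac{1}{n}\sum_{i=1}^n W_i(Y^1_i-Y^0_i)$, which is the first identity.

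For the second identity I would take the outer expectation over $\Y$ under the source distribution, conditioning on $\X$. Because $\rho$ depends only on $(\x,\A)$ and not on $\Y$, the accepted-assignment distribution is conditionally independent of $\Y$ given $\X$, so the tower property yields $\EE^S_{\Y,\A|\rho=1}[\hat{\tau}^T_Y\mid\X] = \EE^S_\Y\!\left[\frac{1}{n}\sum_{i=1}^n W_i(Y^1_i-Y^0_i)\mid\X\right]$. I would then invoke the linear model $Y^a_i = X_i^T\beta_a + \E^a_i$ with mean-zero errors: conditioning on $\X$ fixes $W_i$ and $X_i$, and $\EE[\E^a_i\mid\X]=0$, so $\EE^S[Y^1_i-Y^0_i\mid\X] = X_i^T(\beta_1-\beta_0) = (\beta_1-\beta_0)^T X_i$. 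This gives $\frac{1}{n}\sum_{i=1}^n W_i(\beta_1-\beta_0)^T X_i$, completing the proof.

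The argument is essentially bookkeeping, and I do not anticipate a genuine obstacle. The only point requiring care is the conditioning subtlety: Lemma~\ref{lem:helper} is stated for fixed values $\x,\y$, and I must note that a quantity equal to $\tfrac{1}{2}$ for every realization is exactly the constant random variable $\tfrac{1}{2}$, so it transfers to conditioning on the random $\X,\Y$. I would also verify that the symmetry hypotheses match: the $\A$-symmetry $\rho(\x,\A)=\rho(\x,1-\A)$ used here corresponds, via $Z_i = 2A_i-1$, to the $\Z$-symmetry $\rho(\x,\Z)=\rho(\x,-\Z)$ used elsewhere, so Lemma~\ref{lem:helper} applies directly.
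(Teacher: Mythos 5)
Your proposal is correct and follows essentially the same route as the paper's proof: both apply Lemma~\ref{lem:helper} to get $\EE_{\A}[A_i \mid \X,\Y,\rho=1]=\tfrac12$, pull the expectation through the sums by linearity, and then obtain the second identity via the tower property together with the linear model and $\EE[\E^a_i\mid\X]=0$. Your added remarks on the value-versus-random-variable conditioning and the equivalence of the $\A$- and $\Z$-symmetry conditions are sound and consistent with how the paper uses these facts.
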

\begin{proof}
 From Lemma~\ref{lem:helper}, $\EE [ A_i | \X, \Y, \rho = 1] = \EE [ A_i | \X, \rho = 1]= \frac{1}{2}$. Therefore:
\begin{align*}
\mathbb{E}_{\A| \rho = 1}[\hat{\tau}^T_Y | \X, \Y ]
&=  \frac{1}{n_1} \sum_{i=1}^n \mathbb{E}_{\A} \left [W_i A_i Y^1_i \bigg | \X, \Y, \rho = 1 \right] - \frac{1}{n_0} \sum_{i=1}^n \mathbb{E}_{\A} \left[W_i  (1-A_i) Y^0_i \bigg | \X, \Y, \rho = 1 \right] \\
&=  \frac{1}{n_1} \sum_{i=1}^n W_i  Y^1_i  \EE_{\A}  \left [A_i \big | \X, \Y, \rho = 1 \right]- \frac{1}{n_0} \sum_{i=1}^n W_i  Y^0_i  \EE_{\A} \left [1- A_i \big | \X, \Y, \rho = 1 \right]\\
&=  \frac{1}{n} \sum_{i=1}^n W_i  (Y^1_i  - Y^0_i) \\
   \mathbb{E}^S_{\A| \rho = 1, \Y}[\hat{\tau}^T_Y | \X ]  &=  \mathbb{E}^S_{\Y} \left[ \mathbb{E}_{\A}[\hat{\tau}^T_Y | \X, \Y,  \rho = 1 ] | \X\right ] \\
    &= \EE^S_{\Y} \left[ \frac{1}{n} \sum_{i=1}^n W_i  (Y^1_i  - Y^0_i) | \X\right ] \\
    &=  \frac{1}{n} \sum_{i=1}^n W_i  (\beta_1- \beta_0)^T X_i 
\end{align*}
\end{proof}
\begin{proof}[Proof of Theorem~\ref{thm:unbiased}]
 Let $D_S$ and $D_T$ be the supports of the source and target distributions. Since $p_T(X) > 0 \rightarrow p_S(X) > 0 $ and $p_T(Y|X) = p_S(Y|X)$, we have $D_T \subseteq D_S$. Using Lemma~\ref{lem:expectation}:
\begin{align*}
    \mathbb{E}^S_{\X, \Y, \Z_{\phi_T} } \left[\hat{\tau}^T_Y  \right] 
   = & 
    \mathbb{E}^S_{\X, \Y} \mathbb{E}_{\A_{\phi_T} }[\hat{\tau}^T_Y | \X, \Y ] \\
     = & \frac{1}{n} \sum_{i=1}^n \mathbb{E}^S_{\X, \Y} \left [  W_i  (Y^1_i  - Y^0_i) \right] \\
    = & \frac{1}{n} \sum_{i=1}^n \int_{(x,y) \in D_S} \left(  \frac{p_T(x)}{p_S(x)} (y^1  - y^0) \right) p_S(x,y) dxy \\
    = & \frac{1}{n} \sum_{i=1}^n \int_{(x,y) \in D_S} \left(  \frac{p_T(y|x) p_T(x)}{p_S(y|x)p_S(x)} (y^1  - y^0) \right) p_S(x,y) dxy \text{ because } p_T(y|x) = p_S(y|x) \\
    = & \frac{1}{n} \sum_{i=1}^n \int_{(x,y) \in D_S} \left(  \frac{p_T(y, x)}{p_S(y,x)} (y^1  - y^0) \right) p_S(x,y) dxy \\
    = & \frac{1}{n} \sum_{i=1}^n \int_{(x,y) \in D_S} {p_T(x,y)}(y^1  - y^0)   dxy \\
     = & \frac{1}{n} \sum_{i=1}^n \int_{(x,y) \in D_T} {p_T(x,y)}(y^1  - y^0)   dxy \text{ because $D_T \subseteq D_S$}\\
    = & \tau^T_Y
\end{align*}
\end{proof}

\section{Proofs of Section~\ref{sec:d1}}
\label{apx:d1}
In this section we prove Lemma~\ref{lem:unconditional_variance}, Lemma~\ref{lem:d1_conditional_variance}, Lemma~\ref{lem:decomposition_d1}, Theorem~\ref{thm:optimality_1d} and Corollary~\ref{col:var1d}. Note that the results in this section are the special case when $d=1$ of the results in Section~\ref{apx:additional_results}.  Lemma~\ref{lem:d1_conditional_variance} is a special case when $d=1$ of Lemma ~\ref{lem:variance_decomposition_condition_on_y}. Lemma~\ref{lem:decomposition_d1} is a special case of Lemma~\ref{lem:variance_decomposition} and Theorem~\ref{thm:optimality_1d} is a special case of Theorem~\ref{thm:expected_beta_optimal}. However in this section we state the full proofs for the case $d=1$ so that the readers do not need to read the proofs of Section~\ref{apx:additional_results} in order to understand Section~\ref{sec:d1} in the main paper. 

\begin{proof}[Proof of Lemma~\ref{lem:unconditional_variance}]
By law of total variance:
\begin{align*}
   \var^S_{\Z_{\rho}, \X, \Y} (\hat{\tau}^T_Y ) = \EE^S_{\X} \var^S_{\Y, \Z_{\rho}} (\hat{\tau}^T_Y | \X)  + \var^S_{\X} \left (\EE^S_{\Y, \Z_{\rho} }[\hat{\tau}^T_Y | \X] \right)
\end{align*}
Since $\rho(\x, \Z) = \rho(\x, -\Z)$, from Lemma~\ref{lem:expectation}: 
\begin{align*}
 \EE^S_{\Y, \Z_{\rho}}[\hat{\tau}^T_Y | \X]
= \frac{1}{n} \sum_{i=1}^n W_i  (\beta_1- \beta_0)^T X_i 
\end{align*}
Therefore:
\begin{align*}
     \var^S_{\X} \left (\EE^S_{\Y, \Z_{\rho} }[\hat{\tau}^T_Y | \X]\right) = \var^S_{\X} \left(\frac{1}{n} \sum_{i=1}^n W_i  (\beta_1- \beta_0)^T X_i)   \right) 
\end{align*}
\end{proof}
\begin{proof}[Proof of Lemma~\ref{lem:d1_conditional_variance}]
By definition:
\begin{align*}
\var_{\Z}(\hat{\tau}^T_Y |  \x, \y, \rho=1) 
&= \EE_{\Z} \left[(\hat{\tau}^T_Y - \mathbb{E}_{\Z}[\hat{\tau}^T_Y | \x, \y, \rho = 1 ] )^2 | \x, \y, \rho = 1 \right] 
\end{align*}
From Lemma~\ref{lem:expectation}
\begin{align*}
\EE_{\Z} [\hat{\tau}^T_Y | \x , \y, \rho = 1] = \frac{1}{n} \left(\sum_{i=1}^n  w_i y^1_i - \sum_{i=1}^n w_i y^0_i \right) 
\end{align*}
On the other hand conditioning on $\X = \x$ and $\Y = \y$ and let $y^*_i$ denote the observed outcome of sample $i$: 
\begin{align*}
    \hat{\tau}^T_Y &= \frac{2}{n}(\sum_{Z_i=1} w_i y^*_i - \sum_{Z_i = -1} w_i y^*_i) \\
    &= \frac{2}{n} \sum_{i=1}^n w_i A_i y^1_i - \frac{2}{n} \sum_{i=1}^n w_i (1-A_i) y^0_i
\end{align*}
Therefore: 
\begin{align*}
\var_{\Z}(\hat{\tau}^T_Y |  \x, \y, \rho = 1)  &= \EE_{\Z} \left[ \left( \frac{2}{n}(\sum_{i=1}^n w_i A_i y^1_i -  \sum_{i=1}^n w_i (1-A_i) y^0_i) - \frac{1}{n} \sum_{i=1}^n w_i (y^1_i - y^0_i) \right)^2 \bigg| \x, \y, \rho = 1 \right] \\
&= \EE_{\Z} \left[ \left( \frac{1}{n}(\sum_{i=1}^n w_i (2A_i - 1) y^1_i + \frac{1}{n} \sum_{i=1}^n w_i (2A_i-1) y^0_i) \right)^2 \bigg| \x, \y, \rho = 1 \right] \\
&= \frac{4}{n^2} \EE_{\Z} \left[ \left ( \sum_{i=1}^n w_i Z_i \frac{y^1_i + y^0_i}{2}\right)^2 \bigg | \x, \y, \rho = 1\right] \\
&= \frac{4}{n^2} \EE_{\Z} \left[ \left ( \sum_{i=1}^n  Z_i w_i c_i\right)^2 \bigg | \x, \y, \rho = 1 \right]
\end{align*}
where ${c}_i = \frac{y^1_i + y^0_i}{2}$. 
\end{proof}

\begin{proof}[Proof of Lemma~\ref{lem:decomposition_d1}]
By law of total variance:
\begin{align*}
\var^S_{\Y, \Z_{\rho}} (\hat{\tau}^T_Y | \x )  
&= \EE^S_{\Y} \left[ \var_{\Z_{\rho}} (\hat{\tau}^T_Y | \x, \Y)  | \x \right] + \var^S_{\Y} ( \EE_{\Z_{\rho}} [\hat{\tau}^T_Y | \x, \Y] |\x) \\
&=  \EE^S_{\Y} \left[ \var_{\Z_{\rho}} (\hat{\tau}^T_Y | \x, \Y)  | \x \right]  + \var^S_{\Y} \left( \frac{1}{n} \sum_{i=1}^n w_i (Y^1_i  - Y^0_i) | \x  \right) \\
&=  \EE^S_{\Y} \left[ \var_{\Z_{\rho}} (\hat{\tau}^T_Y | \x, \Y)  | \x \right]  + \frac{1}{n^2} \sum_{i=1}^n w_i^2 \var (\E^{1}_i  - \E^{0}_i)  \\
&=  \EE^S_{\Y} \left[ \var_{\Z_{\rho}} (\hat{\tau}^T_Y | \x, \Y)  | \x \right]  + \frac{2}{n^2}\sigma_{\E}^2 \sum_{i=1}^n w_i^2 
\end{align*}

Recall that $\tilde{C}_i =  \beta \tilde{X}_i + \SE_i$. From Lemma~\ref{lem:d1_conditional_variance}: 
\begin{align}
\label{eq:decomposition1}
    &\var_{\Z}(\hat{\tau}^T_Y |\x, \Y, \rho =1) \notag \\
    &=   \frac{4}{n^2}\EE_{\Z} \left[ \left( \sum_{i=1}^n Z_i \tilde{C}_i  \right)^2 \bigg | \x, \Y, \rho = 1 \right] \notag \\
    &=   \frac{4}{n^2}\EE_{\Z} \left[ \left(  \Z^T \SC  \right)^2 \bigg | \x, \Y, \rho = 1 \right] \notag \\
    &= \frac{4}{n^2}\EE_{\Z} \left[ \left( \Z^T \beta \sx  + \Z^T  \SBE \right)^2 \bigg | \x, \Y, \rho = 1 \right] \notag\\
    &= \frac{4}{n^2}\beta^2\EE_{\Z} \left[ \left( \Z^T\sx \right)^2 \bigg | \x,  \rho = 1 \right] + \frac{4}{n^2}\EE_{\Z} \left[ \left( \Z^T\SBE \right)^2 \bigg | \x, \Y, \rho = 1 \right] + \frac{4}{n^2}2 \EE_{\Z} \left[ \sx^T\Z\Z^T \SBE \bigg | \x, \Y, \rho = 1 \right]
\end{align}
Now we consider $\EE^S_{\Y} \left[ \var_{\Z} (\hat{\tau}^T_Y | \x, \Y, \rho = 1)   | \x \right] $. The third term in Eq.~\ref{eq:decomposition1} becomes: 
\begin{align*}
  \frac{4}{n^2} 2\EE^S_{\Y} \left[ \EE_{\Z} \left[ \sx^T\Z\Z^T \SBE \bigg | \x, \Y, \rho = 1 \right] \bigg | \x \right] &= \frac{8}{n^2}  \EE_{\Z} \left[ \sx^T\Z\Z^T \bigg | \x,  \rho = 1 \right] \EE^S_{\Y}[\SBE | \x] \\
  &= 0\text{ because } \EE^S_{\Y}[\SBE | \x] = \mathbf{0}
\end{align*}

The second term in Eq.~\ref{eq:decomposition1} becomes:
\begin{align*}
&\frac{4}{n^2} \EE^S_{\Y} \left[ \EE_{\Z} \left[ \left( \Z^T \SBE \right)^2 \bigg | \x, \Y, \rho = 1 \right] \bigg | \x \right ] \\
  &=\frac{4}{n^2} \EE^S_{\Y} \left [\EE_{\Z} \left[ \left( \sum_{i=1}^n Z_i w_i \E_i \right)^2 \big | \x, \Y, \rho = 1 \right] \bigg | \x \right] \\
  &=\frac{4}{n^2}\EE^S_{\Y} \left [\EE_{\Z} \left[\sum_{i=1}^n (Z_i w_i \E_i)^2 \big | \x, \Y, \rho = 1 \right] \bigg | \x \right] + \frac{4}{n^2}\EE^S_{\Y} \left [\EE_{\Z} \left[\sum_{i \ne j } (Z_i w_i \E_i)  (Z_j w_j \E_j) \big | \x, \Y, \rho = 1 \right] \bigg | \x \right] \\
    &=\frac{4}{n^2}\EE^S_{\Y} \left [\EE_{\Z} \left[\sum_{i=1}^n (Z_i w_i \E_i)^2 \big | \x, \Y, \rho = 1 \right] \bigg | \x \right] +  \frac{4}{n^2}\sum_{i \ne j } \EE_{\Z}[Z_i Z_j | \x,  \rho = 1 ] w_i  w_j   \EE^S_{\Y} \left [\E_i  \E_j | \x \right]\\
  &=\frac{4}{n^2}\EE^S_{\Y} \left [\EE_{\Z} \left[\sum_{i=1}^n (Z_i w_i \E_i)^2 \big | \x, \Y, \rho = 1 \right] \bigg | \x\right] + 0 \text{                  because } \EE^S_{\Y}[\E_i\E_j | \x ] = \EE^S_{\Y}[\E_i | \x ]\EE^S_{\Y}[\E_j | \x ]=0 \\
  &= \frac{4}{n^2}\EE^S_{\Y} \left[\sum_{i=1}^n (w_i \E_i)^2 ] \big | \x \right] \text{ because $Z_i^2 = 1$} \\
  &= \frac{4}{n^2} \sigma_{\E}^2 \sum_{i=1}^n w_i^2
\end{align*}
The first term in Eq.~\ref{eq:decomposition1} becomes: 
\begin{align*}
    \frac{4}{n^2} \EE^S_{\Y} \left[\beta^2\EE_{\Z} \left[ \left( \Z^T\sx \right)^2 \bigg | \x,  \rho = 1 \right]\bigg| \x \right]  &= \frac{4}{n^2}\beta^2\EE_{\Z} \left[ \left( \Z^T\sx \right)^2 \bigg | \x,  \rho = 1 \right] \\
    &=\frac{4}{n^2}\beta^2 \EE_{\Z} \left[ ( \sum_{i=1}^n Z_i w_i  x_i )^2 \big | \x, \rho = 1 \right] 
\end{align*}
Putting all $3$ terms together: 
\begin{align*}
   \EE^S_{\Y} \left[ \var_{\Z} (\hat{\tau}^T_Y | \x, \Y,\rho = 1)  | \x \right]  = \frac{4}{n^2}\beta^2 \EE_{\Z} \left[ ( \sum_{i=1}^n Z_i w_i  x_i )^2 \big | \x,  \rho = 1 \right] + \frac{4}{n^2} \sigma_{\E}^2 \sum_{i=1}^n w_i^2
\end{align*}
Therefore:
\begin{align*}
\var^S_{\Y, \Z_{\rho} } (\hat{\tau}^T_Y | \x )  &=  \EE^S_{\Y} \left[ \var_{\Z} (\hat{\tau}^T_Y | \x, \Y, \rho = 1)  | \x \right] + \frac{2}{n^2}\sigma_{\E}^2 \sum_{i=1}^n w_i^2 \\
&= \frac{4}{n^2}\beta^2 \EE_{\Z} \left[ ( \sum_{i=1}^n Z_i w_i  x_i )^2 \big | \x, \rho = 1 \right] + \frac{6}{n^2} \sigma_{\E}^2 \sum_{i=1}^n w_i^2
\end{align*}
\end{proof}
In order to prove Theorem~\ref{thm:optimality_1d}, we will show that for a random variable $U$ with $\EE[U] = 0$, among events $\Omega$ preserve the expectation $\EE[U| \Omega] = 0$, truncating the tail results in the smallest variance. Note that if $\rho(\x, \Z) = \rho(\x, -\Z)$ it follows from Lemma~\ref{lem:helper} that $\EE[\frac{2}{n}\sx^T\Z | \rho = 1] = \EE[\frac{2}{n}\sx^T\Z] = 0$. 

In order to prove Theorem 1 we show how to minimize the variance of a random variable: 
\begin{lemma}
\label{lem:var_optimality_1d}
Let $U \in \mathcal{R}$ be a random variable such that $\EE[U]=0$. 
Let $u_{\alpha}$ be such that $\PP(U^2 < u_{\alpha}) = 1 - \alpha$.
Let $\Omega$ be an event such that $\PP(\Omega ) \ge 1- \alpha$ and $\EE[U | \Omega] = 0$. Then:
\begin{align*}
     \EE(U^2 | U^2 < u_{\alpha}) \le \EE(U^2 | \Omega ) 
\end{align*}
\end{lemma}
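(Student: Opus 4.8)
The plan is to recast the statement as a tail-truncation optimality: I would show that the event $A^\ast := \{U^2 < u_\alpha\}$ minimizes the normalized conditional second moment $\EE[U^2 \mid A] = \EE[U^2 \mathbbm{1}_A]/\PP(A)$ over \emph{all} events $A$ with $\PP(A) \ge 1-\alpha$, not just the mean-preserving ones. Since the given $\Omega$ satisfies $\PP(\Omega) \ge 1-\alpha$, the claim $\EE[U^2 \mid A^\ast] \le \EE[U^2 \mid \Omega]$ follows immediately. In particular the hypothesis $\EE[U \mid \Omega] = 0$ is not actually needed for this inequality; it is recorded only so that $A^\ast$ — which for the symmetric $U = \tfrac{2}{n}\sx^T\Z$ relevant to Theorem~\ref{thm:optimality_1d} is itself mean-preserving — lies in the same admissible class as $\Omega$. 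By the choice of $u_\alpha$ we have $\PP(A^\ast) = 1-\alpha$, so every competitor satisfies $\PP(A) \ge \PP(A^\ast)$.

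The key structural fact I would exploit is the threshold (``bathtub'') shape of the cost $U^2$ relative to $u_\alpha$: on $A^\ast$ we have $U^2 < u_\alpha$, while on its complement $U^2 \ge u_\alpha$. First I would compare the two events in \emph{unnormalized} form. Writing the indicator difference and splitting into the two symmetric differences $A \setminus A^\ast$ and $A^\ast \setminus A$ gives
\begin{align*}
\EE[U^2 \mathbbm{1}_A] - \EE[U^2 \mathbbm{1}_{A^\ast}] = \EE[U^2 \mathbbm{1}_{A \setminus A^\ast}] - \EE[U^2 \mathbbm{1}_{A^\ast \setminus A}] \ge u_\alpha \big( \PP(A \setminus A^\ast) - \PP(A^\ast \setminus A) \big),
\end{align*}
where the inequality uses $U^2 \ge u_\alpha$ on $A \setminus A^\ast \subseteq (A^\ast)^c$ and $U^2 < u_\alpha$ on $A^\ast \setminus A \subseteq A^\ast$. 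The probabilities telescope, $\PP(A \setminus A^\ast) - \PP(A^\ast \setminus A) = \PP(A) - \PP(A^\ast) \ge 0$, so $\EE[U^2\mathbbm{1}_A] \ge \EE[U^2\mathbbm{1}_{A^\ast}] + u_\alpha\big(\PP(A) - \PP(A^\ast)\big)$.

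To pass from the unnormalized to the normalized statement I would divide by $\PP(A)$ and invoke the elementary bound $\EE[U^2 \mid A^\ast] \le u_\alpha$, which again follows from $U^2 < u_\alpha$ on $A^\ast$. A short rearrangement then reduces the desired inequality $\EE[U^2 \mid A^\ast] \le \EE[U^2 \mid A]$ to exactly $\EE[U^2 \mid A^\ast] \le u_\alpha$ in the case $\PP(A) > \PP(A^\ast)$ (after dividing through by the positive factor $\PP(A)-\PP(A^\ast)$), and to the already-established unnormalized bound when $\PP(A) = \PP(A^\ast)$.

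I expect the normalization step to be the only real subtlety: the unnormalized comparison is almost immediate from the threshold structure, but one must use that the ``extra'' mass admitted when $\PP(A) > 1-\alpha$ carries cost at least $u_\alpha$, which in turn is at least the current average $\EE[U^2\mid A^\ast]$, so enlarging $A$ can only raise the conditional mean. A minor bookkeeping point is the treatment of atoms of $U^2$ at the threshold and the existence of a value with $\PP(U^2 < u_\alpha) = 1-\alpha$ exactly; I would handle this by allowing boundary ties to be split so that $\PP(A^\ast) = 1-\alpha$ holds as assumed, which leaves all of the inequalities above intact.
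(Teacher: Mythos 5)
Your proof is correct, and it proves a slightly stronger statement than the lemma (the hypothesis $\EE[U\mid\Omega]=0$ is indeed never used; the paper's own proof does not use it either). The route differs in execution from the paper's: the paper recasts the problem as a constrained optimization over sub-densities $f(u)=p(U=u,\Omega)$ with $0\le f\le p$ and $\int f = 1-\alpha$, and asserts that the minimizer of $\int u^2 f(u)\,du$ saturates $f=p$ on the smallest values of $u^2$ — a ``fill from the bottom'' rearrangement argument stated somewhat informally and implicitly normalized by $1-\alpha$ rather than by $\PP(\Omega)$. You instead make a direct pairwise comparison of events via the symmetric-difference decomposition $\mathbbm{1}_A-\mathbbm{1}_{A^\ast}=\mathbbm{1}_{A\setminus A^\ast}-\mathbbm{1}_{A^\ast\setminus A}$ together with the threshold bounds $U^2\ge u_\alpha$ off $A^\ast$ and $U^2<u_\alpha$ on $A^\ast$, and then handle the normalization by $\PP(A)\ge\PP(A^\ast)$ using $\EE[U^2\mid A^\ast]\le u_\alpha$. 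Both rest on the same underlying idea (truncating the tail of $U^2$ is optimal), but your version is more careful on two points the paper glosses over: it does not require $U$ to have a density, and it treats the case $\PP(\Omega)>1-\alpha$ explicitly rather than implicitly assuming equality. The paper's version, in exchange, generalizes immediately to the multivariate trace statement (its Lemma on $\mathrm{Trace}(Cov(U\mid\Omega))$) by replacing $u^2$ with $u^Tu$ — an extension your event-based argument also admits verbatim. Your closing remark about splitting boundary ties is the right fix for atoms of $U^2$ at the threshold; in the paper's density setting this issue does not arise.
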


\begin{proof}
Let $p(u)$ be the pdf of $U$. Define $f(u)$ as follow:
\begin{align*}
    f(u) =p(U = u ,  \Omega) 
\end{align*}
then:
\begin{align*}
   p(u | \Omega) = \frac{p(U=u, \Omega)}{\PP(\Omega)}  = \frac{f(u)}{1-\alpha}\;.
\end{align*}

Therefore:
\begin{align*}
\EE[U^2 | \Omega] = \int_u u^2 \frac{f(u)}{1-\alpha} du\;.
\end{align*}
We want to minimize $\EE(U^2| \Omega)$:
\begin{align*}
   \int_u u^2 \frac{f(u)}{1-\alpha} du
\end{align*}
subject to:
\begin{align*}
   & 0 \le f(u) \le p(u) ~\forall u \\
& \PP(\Omega) = \int_u f(u) du = 1 - \alpha
\end{align*}
This can be done by maximize $f(u)$ so that $f(u) = p(u)$ for the smallest $u^2$, which is equal to set $\Omega$ to be the event $U^2 < u_{\alpha}$. 
\end{proof}
 \begin{proof}[Proof of Theorem~\ref{thm:optimality_1d}]

 Let $V: = \frac{2}{n}\sum_i w_i x_i Z_i$ and $B = V\var(V)^{-1/2}$. From Lemma~\ref{lem:decomposition_d1}:
\begin{align*}
\var^S_{\Y, \Z_{\rho} } (\hat{\tau}^T_Y | \x)
    &= \beta^2 \EE_{\Z} \left [ V^2 \bigg | \x, \rho =1 \right ] + \frac{6}{n^2}\sigma^2_\E \sum_{i=1}^n w_i^2. \\
    &= \beta^2 \var(V) \EE_{\Z} \left [ B^2 \bigg | \x, \rho =1 \right ] + \frac{6}{n^2}\sigma^2_\E \sum_{i=1}^n w_i^2. 
\end{align*}
Since $\rho(\x, \Z) = \rho(\x, -\Z)$, from Lemma~\ref{lem:helper} we have $\EE_{\Z}[B | \x, \rho =1] = 0$, which satisfies the criteria in Lemma~\ref{lem:var_optimality_1d}. 

Let $\eta : = 1- \PP(\rho =1 | \x)$. Then $\eta \le \alpha$. Let $b_{\eta}$ be such that $\PP( B^2 <b_{\eta} | \x) = 1- \eta$ and $b_{\alpha}$ be such that $\PP( B^2 <b_{\alpha} | \x) = 1- \alpha$. From Lemma~\ref{lem:var_optimality_1d}:
\begin{align*}
    \EE_{\Z} \left [ B^2 \big | \x, \rho =1 \right  ] &\ge  \EE_{\Z} \left [ B^2 \big | \x, B^2 <b_{\eta} \right  ] \\
    &\ge  \EE_{\Z} \left [ B^2 \big | \x, B^2 <b_{\alpha} \right  ] \text{ because $b_{\eta} \ge b_{\alpha}$}\\
    &\ge \EE_{\Z} \left [ B^2 \big | \x, \phi^{\alpha}_T = 1 \right  ]
\end{align*}
 \end{proof}
 \begin{proof}[Proof of Corollary~\ref{col:var1d}]
Let $\rho$ being the constant function $\rho(\x,\Z)=1$ for all $\x,\Z$. Then:
\begin{align*}
   \var^S_{\Y, \Z_{\rho} } (\hat{\tau}^T_Y | \x) = \var^S_{\Y, \Z} (\hat{\tau}^T_Y | \x)
\end{align*}
From Theorem~\ref{thm:optimality_1d} we have: 
    \begin{align*}
\var^S_{\Y, \Z_{\phi^{\alpha}_T}} (\hat{\tau}^T_Y | \x) \le \var^S_{\Y, \Z_{\rho} } (\hat{\tau}^T_Y | \x) = \var^S_{\Y, \Z} (\hat{\tau}^T_Y | \x)
\end{align*}
 \end{proof}
 \section{Discussion on Section \ref{sec:asymptotic}}
 \label{apx:asymptotic}
\begin{proof}[Proof of Lemma~\ref{lem:unconditional_variance2}]
By law of total variance:
\begin{align*}
\var^S_{\X, \Y, \Z_{\rho}  } (\hat{\tau}^T_Y)  = \EE^S_{\X, \Y} \left[\var_{ \Z_{\rho} } (\hat{\tau}^T_Y | \X, \Y)\right] + \var^S_{\X, \Y} \left (\EE_{\Z_{\rho} }[\hat{\tau}^T_Y | \X, \Y] \right)
\end{align*}
Since $\rho(\x, \Z) = \rho(\x, -\Z)$, from Lemma~\ref{lem:expectation}: 
\begin{align*}
   \mathbb{E}_{\Z}[\hat{\tau}^T_Y | \X, \Y,  \rho = 1 ] 
= \frac{1}{n} \sum_{i=1}^n W_i  (Y^1_i - Y^0_i) 
\end{align*}
Therefore:
\begin{align*}
    \var^S_{\X, \Y} \left (\EE_{\Z}[\hat{\tau}^T_Y | \X, \Y,  \rho = 1] \right)
     =\var^S_{\X, \Y} \left (\frac{1}{n} \sum_{i=1}^n W_i  (Y^1_i - Y^0_i)  \right) 
\end{align*}
\end{proof}
We now prove Lemma~\ref{lem:R2}. 
We use the following result in \cite{harshaw2019balancing} to prove Lemma~\ref{lem:R2}. 
\begin{lemma}[Lemma A1 in \cite{harshaw2019balancing}]
\label{lem:decomposition1}
Let $y^*_i$ denote the observed outcome of sample $i$: 
\begin{align*}
    \frac{2}{n}(\sum_{z_i = 1}  y^*_i - \sum_{z_i = -1} y^*_i) - \frac{1}{n} \sum_{i=1}^n  (y^1_i - y^0_i) = \frac{2}{n} \cc^T \z
\end{align*}
where $c_i= \frac{y_i^1 + y_i^0}{2}$ and $\cc := (c_1, \cdots, c_n)$. 
\end{lemma}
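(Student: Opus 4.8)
The plan is to prove this as a purely deterministic algebraic identity in the fixed quantities $y^1_i$, $y^0_i$, $z_i$; no probabilistic argument and, in fact, no appeal to the $n_1 = n_0 = n/2$ balance is needed. The single working ingredient is that $z_i \in \{-1,1\}$, so $z_i^2 = 1$.

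First I would write the observed outcome in one closed form. Since $z_i = 1$ marks a treated unit and $z_i = -1$ a control unit, and $A_i = (1+z_i)/2$, we have $y^*_i = \frac{1+z_i}{2} y^1_i + \frac{1-z_i}{2} y^0_i$. Next I would collapse the signed sum on the left into a single inner product: each treated unit contributes $+y^*_i$ and each control unit $-y^*_i$, so $\sum_{z_i=1} y^*_i - \sum_{z_i=-1} y^*_i = \sum_{i=1}^n z_i y^*_i$. The crucial step is then to simplify $z_i y^*_i$ using $z_i^2 = 1$:
\begin{align*}
z_i y^*_i = \frac{z_i + z_i^2}{2} y^1_i + \frac{z_i - z_i^2}{2} y^0_i = \frac{z_i+1}{2} y^1_i + \frac{z_i-1}{2} y^0_i = z_i c_i + \tfrac{1}{2}\left(y^1_i - y^0_i\right),
\end{align*}
where the last equality regroups the terms using $c_i = (y^1_i + y^0_i)/2$. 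Summing over $i$ and multiplying by $2/n$ yields $\frac{2}{n}\sum_i z_i y^*_i = \frac{2}{n}\cc^T\z + \frac{1}{n}\sum_i (y^1_i - y^0_i)$, and subtracting $\frac{1}{n}\sum_i (y^1_i - y^0_i)$ cancels the average-treatment-effect term exactly, leaving $\frac{2}{n}\cc^T\z$. That is the claim.

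There is no real obstacle here; the only point requiring care is that it is $z_i^2 = 1$ (rather than $z_i$) that turns the observed-outcome decomposition into the symmetric split between $z_i c_i$ and the half-difference $\tfrac12(y^1_i - y^0_i)$, and this is precisely why the identity is insensitive to the relative sizes of the treatment and control groups. The remaining manipulations are three lines of bookkeeping, which I would present directly.
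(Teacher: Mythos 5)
Your proof is correct: the identity $z_i y^*_i = z_i c_i + \tfrac{1}{2}(y^1_i - y^0_i)$ via $z_i^2 = 1$ is exactly the right regrouping, and your observation that the identity needs no balance condition $n_1 = n_0$ is also accurate. The paper itself does not prove this lemma (it imports it as Lemma A1 of Harshaw et al.), but your derivation is the standard one and coincides with the algebra the paper carries out in its weighted analogue (the manipulation of $\frac{2}{n}\sum_i w_i A_i y^1_i - \frac{2}{n}\sum_i w_i(1-A_i)y^0_i$ inside the proof of Lemma~\ref{lem:d1_conditional_variance}), so there is no substantive difference in approach.
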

We will also use the following lemma:
\begin{lemma}
\label{lem:Qhelper}
Let $Q:= \frac{n-1}{n}\EE [\Z\Z^T] $. Let $\I_n$ denote the $n \times n$ identity matrix and $\1$ denote the $n$ dimensional vector of $1$. Then: 
\begin{align*}
Q &= \I_n - \frac{1}{n}\1\1^T.\\
Q &= Q^T\\
Q &= Q^2 = Q^T Q = QQ^T. 
\end{align*}
Let $\s \in \mathcal{R}^{n \times d}$ be a matrix. Then  $$Q\s = \s - \text{avg}(\s)$$ where $\text{avg}(\s) \in \mathcal{R}^d$ is the average of rows of $\s$. 
\end{lemma}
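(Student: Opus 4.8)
The plan is to prove the four claims in the order listed, with everything resting on one computation: the matrix $\EE[\Z\Z^T]$ under the base (complete) randomization. The key structural fact is that the assignment is an equal split, $\sum_i A_i = n_1 = n/2$, so that $\sum_{i=1}^n Z_i = \sum_i (2A_i - 1) = 2n_1 - n = 0$ holds \emph{deterministically}, not merely in expectation. By symmetry every unit is treated with probability $1/2$, so $\EE[Z_i] = 0$ and $\EE[Z_i^2] = 1$ (since $Z_i \in \{-1,1\}$), and by exchangeability $\EE[Z_iZ_j]$ takes a common value for all $i \ne j$.

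First I would pin down that common off-diagonal value by squaring the deterministic constraint:
\begin{align*}
0 = \EE\!\left[\Big(\textstyle\sum_{i=1}^n Z_i\Big)^2\right] = \sum_{i=1}^n \EE[Z_i^2] + \sum_{i \ne j}\EE[Z_iZ_j] = n + n(n-1)\,\EE[Z_1Z_2],
\end{align*}
which forces $\EE[Z_iZ_j] = -\tfrac{1}{n-1}$ for $i \ne j$. Writing the result in the form $\EE[\Z\Z^T] = a\,\I_n + b\,\1\1^T$ with diagonal entries $a+b = 1$ and off-diagonal entries $b = -\tfrac{1}{n-1}$ gives $a = \tfrac{n}{n-1}$, so $\EE[\Z\Z^T] = \tfrac{n}{n-1}\I_n - \tfrac{1}{n-1}\1\1^T$. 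Scaling by $\tfrac{n-1}{n}$ yields $Q = \I_n - \tfrac{1}{n}\1\1^T$, which is the first claim.

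The remaining claims are pure linear algebra on $Q = \I_n - \tfrac1n\1\1^T$. Symmetry is immediate since both $\I_n$ and $\1\1^T$ are symmetric. For idempotency I would expand, using $\1^T\1 = n$:
\begin{align*}
Q^2 = \I_n - \tfrac{2}{n}\1\1^T + \tfrac{1}{n^2}\1(\1^T\1)\1^T = \I_n - \tfrac{2}{n}\1\1^T + \tfrac{1}{n}\1\1^T = Q,
\end{align*}
and then $Q^TQ = QQ = Q^2 = Q$ and $QQ^T = QQ = Q^2 = Q$ follow at once from symmetry. Finally, for $\s \in \mathcal{R}^{n\times d}$ I would write $Q\s = \s - \tfrac1n\1(\1^T\s)$, observe that $\1^T\s$ is the $1\times d$ row of column sums (equivalently the sum of the rows of $\s$), so $\tfrac1n\1(\1^T\s)$ is the $n\times d$ matrix each of whose rows equals $\text{avg}(\s)$, giving $Q\s = \s - \text{avg}(\s)$.

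I do not expect any genuine obstacle here; the computation is short. The one point worth flagging is that the off-diagonal correlation $-\tfrac{1}{n-1}$ arises precisely because the equal-split constraint is deterministic (for independent Bernoulli assignment it would vanish), and this negative correlation is exactly what makes $Q$ collapse to the clean centering projector $\I_n - \tfrac1n\1\1^T$. All three algebraic identities are then just confirmations that $Q$ is the orthogonal projection onto the mean-zero subspace.
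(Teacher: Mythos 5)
Your proposal is correct and follows essentially the same route as the paper's proof: both establish $\EE[Z_i^2]=1$ and derive $\EE[Z_iZ_j]=-\tfrac{1}{n-1}$ by expanding $\EE[(\sum_i Z_i)^2]=0$ from the deterministic equal-split constraint, then verify symmetry, idempotency, and the centering identity by direct expansion. No gaps.
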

\begin{proof}
First we will show that:
\begin{align*}
    \EE[\Z\Z^T] = \frac{n}{n-1} \left( \I_n - \frac{1}{n}\1\1^T \right) 
\end{align*}
by showing that $\EE[Z_i^2] = 1$ and $\EE[Z_i Z_j] = -\frac{1}{n-1}$ when $i \neq j$. 
First we have that $\EE[Z_i^2] = 1$ because $Z_i^2 = 1$. Since there are exactly $n/2$ samples with value $Z_i = 1$ and $n/2$ samples with values $Z_i = -1$, note that $(\sum_{i=1}^n Z_i)^2 = 0$ and:
\begin{align*}
    \EE[(\sum_{i=1}^n Z_i)^2] = \EE[ \sum_{i=1}^n Z_i^2] + \sum_{i \neq j} \EE[Z_i Z_j]\;.
\end{align*}
Since all pairs $(i, j)$ where $i \neq j$ have equal roles and there are $n(n-1)$ such pairs: 
\begin{align*}
    \EE[Z_i Z_j] &= \frac{\EE[(\sum_{i=1}^n Z_i)^2] - \EE[ \sum_{i=1}^n Z_i^2] }{n(n-1)} \\
    &= \frac{0- n }{n(n-1)} \\
    & = \frac{-1}{n-1}
\end{align*}
Since $Q$ is symmetric, $Q = Q^T$. We will show that $Q=Q^2$:
\begin{align*}
    Q^2 &= (\I_n - \frac{1}{n}\1\1^T) (\I_n - \frac{1}{n}\1\1^T) \\
    &= \I_n - \frac{1}{n} \1\1^T\I_n - \frac{1}{n} \I_n \1\1^T + \frac{1}{n^2} \1\1^T\1\1^T \\
    &= \I_n - \frac{1}{n}\1\1^T   = Q
\end{align*}
Since $Q = Q^T$, we have $Q = Q^2 = QQ^T = Q^TQ$. 
For the last property:
$$Q\s = \I_n \s - \frac{1}{n}\1\1^T\s = \s - \text{avg}(\s)$$
because $\I_n \s = \s$ and $\frac{1}{n}\1\1^T\s = \text{avg}(\s)$
\end{proof}
\begin{proof}[Proof of Lemma~\ref{lem:R2}]
For any matrix $\s \in \mathcal{R}^{n \times d}$ we will compute $R^2_{\s} := Corr(\hat{\tau}^T_Y, \frac{2}{n} \Z^T\s)$ where for any $Y \in \mathcal{R}, X \in \mathcal{R}^d$, $Corr(Y, X)$ is defined as: 
\begin{align*}
    Corr(Y, X) &= Corr(Y, X^T {\beta^*}^) \\
    &= \frac{Cov(Y, X^T{\beta^*})}{\sqrt{\var(Y)} \sqrt{\var(X^T{\beta^*})}}
\end{align*}
where $\beta^* = \arg\min_{\hbeta} \EE \| Y -  X^T\hbeta \|^2 $. Substituting $\s = \x$ and $\s = \sx$ will give us $R^2_{\x}$ and $R^2_{\sx}$. 

Let $\sdelta_i = \tilde{y}^1_i - \tilde{y}^0_i$ and $\sbdelta := (\sdelta_1 , \cdots, \sdelta_n)$. From Lemma~\ref{lem:decomposition1}, we have: $$\hat{\tau}^T_Y = \frac{2}{n}\Z^T\scc+ \frac{1}{n}\1^T \sbdelta$$ where $\1 \in \mathcal{R}^n$ is a vector of $1$. 

We note that conditioning on $\y$, $\1^T \sbdelta$ is a constant independent of $\Z$. Let $Q:= \frac{n-1}{n}\EE [\Z\Z^T]$ and note that $Q = Q^T$ and $Q = Q^2$. First, let us compute $\beta^* = \arg\min_{\hbeta} \EE_{\Z} \|\htau^T_Y - \frac{2}{n} \Z^T \s \hbeta \|^2  .$ We have, 
\begin{align*}
\beta^*& = \arg\min_{\hbeta} \EE_{\Z} \|\htau^T_Y - \frac{2}{n} \Z^T \s \hbeta  \|^2 \\
    &=\arg \min_{\hbeta} \EE_{\Z} \|  \frac{2}{n}  \Z^T\scc +   \frac{1}{n}  \1^T \sbdelta -   \frac{2}{n} \Z^T \s \hbeta \|^2 \\
    &= \arg \min_{\hbeta} \EE_{\Z} \|\Z^T\scc  - \Z^T \s \hbeta  \|^2 \\
    &= \arg \min_{\hbeta} (\scc - \s \hbeta)^T \EE[\Z\Z^T] (\scc - \s \hbeta) \\
    &= \arg \min_{\hbeta} (\scc - \s \hbeta)^T Q (\scc - \s \hbeta) \\
    &= \arg \min_{\hbeta} (\scc - \s \hbeta)^T Q^T Q (\scc - \s \hbeta) \\
    &=\arg \min_{\hbeta} \|Q\scc - Q\s \beta\|^2 .
\end{align*}
Using the fact that $Q = Q^T Q$, we have $\beta^* = (\s^T Q \s)^{-1} \s^T Q  \scc.$ 
By definition, we have
\begin{align*}
     Corr(\hat{\tau}^T_Y, \frac{2}{n}\Z^T \s) &= \frac{ \EE_{\Z} \left[\hat{\tau}^T_Y  \frac{2}{n}\Z^T \s \beta^*\right] -  \EE_{\Z} \left[\hat{\tau}^T_Y \right] \EE_{\Z} \left[ \frac{2}{n}\Z^T \s \beta^*\right]}{\sqrt{\var_{\Z}(\hat{\tau}^T_Y) \var_{\Z}(\frac{2}{n}\Z^T \s  \beta^*)}} \\
    &=  \frac{ \EE_{\Z} \left[\hat{\tau}^T_Y  \Z^T \s \beta^*\right]}{\sqrt{\var_{\Z}(\hat{\tau}^T_Y) \var_{\Z}(\Z^T \s  \beta^*)}} \text{ because $\EE[\Z]=0$}\\
    &=  \frac{ \EE_{\Z} \left[\left( \frac{2}{n}\scc^T\Z+ \frac{1}{n}\1^T \sbdelta\right)  \Z^T \s \beta^*\right]}{\sqrt{\var_{\Z}\left( \frac{2}{n}\Z^T\scc+ \frac{1}{n}\1^T \sbdelta\right)  \var_{\Z}(\Z^T \s  \beta^*)}} \\
    &=  \frac{ \EE_{\Z} \left[\left( \frac{2}{n}\scc^T\Z \right)  \Z^T \s \beta^*\right]}{\sqrt{\var_{\Z}\left( \frac{2}{n}\Z^T\scc\right)  \var_{\Z}(\Z^T \s  \beta^*)}} \\
      &=  \frac{ \EE_{\Z} \left[\scc^T\Z  \Z^T \s \beta^*\right]}{\sqrt{\var_{\Z}\left( \Z^T\scc\right)  \var_{\Z}(\Z^T \s  \beta^*)}} 
\end{align*}
For the numerator we have:
\begin{align*}
    \EE_{\Z} \left[\scc^T\Z  \Z^T \s \beta^*\right] &= \scc^T Q \s \beta^* \\
    &=\frac{n}{n-1} \scc^T Q \s (\s^T Q \s)^{-1} \s^T Q  \scc \\
     &= \frac{n}{n-1}\scc^T Q \s (\s^T Q \s)^{-1} \s^T Q \s (\s^T Q \s)^{-1}\s^T Q  \scc \\
     &= \frac{n}{n-1}\left(\scc^T Q \s (\s^T Q \s)^{-1} \s^T Q \right) \left(Q \s (\s^T Q \s)^{-1}\s^T Q  \scc \right) \\
     &= \frac{n}{n-1}({\beta^*}^T \s^TQ ) (Q\s \beta^*) \\
     & = \frac{n}{n-1} \| Q\s\beta^*\|^2
\end{align*}
Let $u = Q\s\beta$ and $v = Q\scc -  Q\s\beta$. We will show that $u$ and $v$ are orthogonal, therefore $\|Q\s\beta^*\|^2 = \|Q\scc\|^2 - \|Q\scc -  Q\s\beta\|^2$: 
\begin{align*}
    u^T v &=  (Q\scc -  Q\s\beta^*)^T (Q\s\beta^*) \\
    &= \scc^T Q\s\beta^* - {\beta^*}^T \s^T Q \s\beta^* \\
    &= \scc^T Q\s\beta^* - \| Q\s\beta^*\|^2\\
    &= 0 \;.
\end{align*}
Therefore $\|Q\s\beta^*\|^2 = \|Q\scc\|^2 - \|Q\scc -  Q\s\beta\|^2$. 

For the denominator, since $\EE[\Z]=0$ we have:
\begin{align*}
    \var_{\Z}\left( \Z^T\scc\right)  \var_{\Z}(\Z^T \s  \beta^*) &= \EE_{\Z} [\scc^T\Z\Z^T\scc] \EE_{\Z}[{\beta^*}^T \s^T \Z\Z^T \s\beta^*] \\
    &= \frac{n^2}{(n-1)^2}(\scc^T Q \scc ) ({\beta^*}^T \s^T Q \s\beta^*)\\
    &= \frac{n^2}{(n-1)^2}(\scc^T Q^T Q \scc ) ({\beta^*}^T \s^T Q^TQ \s\beta^*) \\
    &= \frac{n^2}{(n-1)^2}\|Q\scc\|^2 \|Q \s\beta^*\|^2
\end{align*}
Putting the numerator and denominator together we have:
\begin{align*}
    R^2_{\s} &= Corr(\hat{\tau}^T_Y, \frac{2}{n}\Z^T \s) \\
    &=\frac{\| Q\s\beta^*\|^2}{\|Q\scc\| \|Q \s\beta^*\|} \\
    &=\frac{\| Q\s\beta^*\|}{\|Q\scc\| } \\
    &=\frac{\sqrt{\|Q\scc\|^2 - \|Q\scc -  Q\s\beta\|^2}}{\|Q\scc\| }
\end{align*}
Substituting $\s = \x$ and $\s = \sx$ gives us the expression for $R^2_{\x}$ and $R^2_{\sx}$. 
\end{proof}

\begin{proof}[Proof of Theorem~\ref{thm:asymptotic_optimal}]

We have
\begin{align*}
    \tilde{C} = \tilde{X}^T \beta + \tilde{\E}
\end{align*}
where $C = \frac{Y^0+Y^1}{2}$,  $\E = \frac{\E_0+\E_1}{2}$, $\beta = \frac{\beta_0+\beta_1}{2}$, $\tilde{C} = \frac{p_T(X)}{p_S(X)} C, \tilde{X} = \frac{p_T(X)}{p_S(X)} X$ and $\tilde{\E} = \frac{p_T(X)}{p_S(X)} \E$.
Since $Y_i$, $X_i$ and $W_i$ have finite $8$th moment, $\tilde{C}_i$ and $\tilde{X}_i$ have finite $4$th moment using Cauchy-Schwartz inequality.
Let $S \in \mathcal{R}^d$ be a random variable independent of $\E_i$ and with finite $4$th moment. Let $\SL \in \mathcal{R}^{n \times d}$ be $n$ samples $S_1, \cdots, S_n$ of $S$. By the definition of $R^2$, 
\begin{align*}
 R_{\SL}^2 = \frac{\|Q\SC \|^2 - \min_{\hbeta} \|Q\SC - Q\SL \hbeta \|^2}{\|Q\SC\|^2} .
\end{align*}

We will show that $\lim_{n\to \infty}R_{\tilde{\X}}^2 \geq \lim_{n\to \infty}R_{\SL}^2$ almost surely for any $S$. It is sufficient to show $\lim \min_{\hbeta} \|Q\SC - Q\SX \hbeta \|^2 \leq \lim \min_{\hbeta} \|Q\SC - Q\SL \hbeta \|^2$ almost surely. From Lemma~\ref{lem:Qhelper}, note that for any matrix $\s \in \mathcal{R}^{n \times d}$ with $n$ rows,  $\frac{n-1}{n}Q\s = \s - \text{avg}(\s)$ where $\text{avg}(\s) \in \mathcal{R}^d$ is the average of rows of $\s$. Let $\beta^* = \arg\min_{\hbeta} \lim_{n \rightarrow \infty}   \frac{1}{n} \|Q\SC - Q\SL \hbeta \|^2$ and $\tilde{\beta}   = \arg\min_{\hbeta}  \frac{1}{n} \|Q\SC - Q\SL \hbeta \|^2$. If $S_i$ and $\tilde{C}_i$ have finite $4$th moment, by strong law of large number $\lim_{n \rightarrow \infty} \tbeta = \beta^*$ almost surely. 
We have: 
\begin{align*}
    &\frac{1}{n} \lim_{n \rightarrow \infty}   \min_{\hbeta} \|Q\SC - Q\SL \hbeta \|^2 \\ 
    & = \frac{1}{n} \lim_{n \rightarrow \infty}    \|Q\SC - Q\SL \beta^* \|^2 + 2 (Q\SC - Q\SL \beta^*)^T ( Q\SL \beta^* - Q\SL \tbeta) + \norm{Q\SL \beta^* - Q\SL \tbeta}^2 \\ 
    &= \frac{1}{n} \lim_{n \rightarrow \infty}    \|Q\SC - Q\SL \beta^* \|^2 + 2 \lim_{n \rightarrow \infty} (Q\SC - Q\SL \beta^*)^T Q\SL \lim_{n \rightarrow \infty} (\beta^* -  \tbeta) + \lim_{n \rightarrow \infty} (\beta^* -  \tbeta)^T \lim_{n \rightarrow \infty} \SL^TQ\SL \lim_{n \rightarrow \infty} (\beta^* -  \tbeta) \\&\text{ because $\tilde{C}_i$ and $S_i$ having finite $4$th moment implies $\lim_{n \rightarrow \infty} (Q\SC - Q\SL \beta^*)^T Q\SL$ and $\lim_{n \rightarrow \infty} \SL^TQ\SL$ are finite} \\
    &= \lim_{n \rightarrow \infty}   \frac{1}{n} \|Q\SC - Q\SL \beta^* \|^2 \text{ almost surely } \\
    &= \min_{\hbeta}\lim_{n \rightarrow \infty}   \frac{1}{n} \|Q\SC - Q\SL \hbeta \|^2 \\
     & =   \min_{\hbeta}\lim_{n \rightarrow \infty}   \frac{1}{n} \frac{n^2}{(n-1)^2} \norm{\frac{n-1}{n}Q\SC - \frac{n-1}{n}Q\SL \hbeta }^2 \\
    & =   \min_{\hbeta}\lim_{n \rightarrow \infty}   \frac{1}{n} \|(\SC - \SL\hbeta) - (\text{avg}(\SC) - \text{avg}(\SL) \hbeta )\|^2  \\
         & = \min_{\hbeta} \var(\TC - S^T\hbeta) \text{ almost surely if $\tilde{C}_i$ and $S_i$ have finite $4$th moment} \\
         &= \min_{\hbeta} \EE[(\TC - S^T\hbeta)^2] - \left(\EE[\TC - S^T\hbeta]\right)^2 \\
         &= \min_{\hbeta}   \EE[ \tilde{X}^T\beta - S^T \hbeta ]^2 + \EE[\SE^2] - \left(\EE[\tilde{X}^T\beta - S^T\hbeta]\right)^2 \text{ because } \EE[\SE] =0 \text{ and $\E$ is independent of $\tilde{X}$ and $S$}\\
         &= \min_{\hbeta} \var(\tilde{X}^T\beta - S^T\hbeta)+ \EE[\SE^2]
         \ge \EE[\SE^2]
\end{align*}

When $S =\tilde{X},$ this is minimized, therefore:
\begin{align*}
    \lim_{n \rightarrow \infty} R^2_{\SX} \ge \lim_{n \rightarrow \infty} R^2_{\SL} \text{ almost surely.} 
\end{align*}
 Substituting $\SL = \X$: 
\begin{align*}
    \lim_{n \rightarrow \infty} R^2_{\SX} \ge \lim_{n \rightarrow \infty} R^2_{\X} \text{ almost surely.} 
\end{align*}
 Recall that: 
\begin{align*}
& \text{as-var}_{\Z} \left( \hat{\tau}^T_Y  | \x, \y,  M\left(\frac{2}{n}\Z^T\s\right) \leq a \right) = \lim_{n \rightarrow \infty} \var(\hat{\tau}^T_Y | \x, \y) (1 - (1-v_{d,a}) R_{\s}^2 ),
 \end{align*}
 where $\text{as-var}$ is the variance of the asymptotic sampling distribution. Let $s(a)$ denote the rejection probability $\PP(\phi_S = 0 |\x) = 0$ when using threshold $a$ in Source Balance, and $t(a)$ denote the rejection probability $\PP(\phi_T = 0 |\x) = 0$ when using threshold $a$ in Target Balance. 
 We have:
 \begin{align*}
\text{as-var}_{\Z} \left( \hat{\tau}^T_Y  | \X, \Y,  \phi^{s(a)}_S =1 \right) &=
    \text{as-var}_{\Z} \left( \hat{\tau}^T_Y  | \X, \Y,  M\left(\frac{2}{n}\Z^T\X\right) \leq a \right) \\&= \lim_{n \rightarrow \infty} \var(\hat{\tau}^T_Y | \X, \Y) (1 - (1-v_{d,a}) R_{\X}^2 ) \\ &\ge \lim_{n \rightarrow \infty} \var(\hat{\tau}^T_Y | \X, \Y) (1 - (1-v_{d,a}) R_{\SX}^2 ) \text{ almost surely} \\
     &=  \text{as-var}_{\Z}\left( \hat{\tau}^T_Y  | \X, \Y,  M\left(\frac{2}{n}\Z^T\SX \right) \leq a \right) \\
     &=  \text{as-var}_{\Z}\left( \hat{\tau}^T_Y  | \X, \Y,  \phi^{t(a)}_T = 1 \right)
 \end{align*}
Now we will show that for any $\x$ and $\sx$, $\lim_{n \rightarrow \infty} s(a) = \lim_{n \rightarrow \infty} t(a)$. Let $U \in \mathcal{R}^d$ be a standard multivariate random variable. We have:
\begin{align*}
    \lim_{n \rightarrow \infty} s(a) 
    &= \lim_{n \rightarrow \infty} \PP( M\left(\frac{2}{n}\Z^T\x\right) \leq a) \\
    &= \lim_{n \rightarrow \infty}\PP( \|B_S\|^2 < a ) \text{ where } B_S = \frac{2}{n}\Z^T\x Cov(\frac{2}{n}\Z^T\x)^{-1/2} \\
    &= \PP( \| U \|^2 < a ) \text{ because $B_S$ converges in distribution to $U$ by finite central limit theorem} 
\end{align*}
Similarly we have:
\begin{align*}
    \lim_{n \rightarrow \infty} t(a) 
    &= \lim_{n \rightarrow \infty} \PP( M\left(\frac{2}{n}\Z^T\sx\right) \leq a) \\
    &= \lim_{n \rightarrow \infty}\PP( \|B_T\|^2 < a ) \text{ where } B_T := \frac{2}{n}\Z^T\sx Cov(\frac{2}{n}\Z^T\sx)^{-1/2} \\
    &= \PP( \| U \|^2 < a ) \text{ because $B_T$ converges in distribution to $U$ by finite central limit theorem} 
\end{align*}
Therefore $\lim_{n \rightarrow \infty} t(a) = \lim_{n \rightarrow \infty} s(a)$. When the sample size is large, with the same rejection probability, using Target Balance results in a smaller asymptotic variance than Source Balance . 
\end{proof}

\section{Proofs of Section \ref{apx:additional_results}}
\label{apx:additional_results_proofs}

In this Section we present the proof of  Lemma~\ref{lem:variance_decomposition}, Lemma~\ref{lem:expected_beta}, Lemma~\ref{lem:trace_optimality}, Theorem~\ref{thm:expected_beta_optimal} and Corollary~\ref{col:var_d_greater_1}. 

In order to prove Lemma~\ref{lem:variance_decomposition}, we first prove the following lemma. 
\begin{lemma}[minor changes to Lemma 1 in \citep{harshaw2019balancing}]
\label{lem:variance_decomposition_condition_on_y}
Let $\sepsilon_i = \tilde{c}_i - \beta^T \tilde{x}_i$ and $\se = (\sepsilon_1 , \cdots, \sepsilon_n)$. For any function $\rho(\x, \Z) \in \{0,1\}$ satisfying  $\rho(\x, \Z) = \rho(\x, -\Z)$:
\begin{align}
\frac{n^2}{4} \var_{\Z} (\hat{\tau}^T_Y |  \x, \y, \rho = 1) &=  Cov(\scc^T \Z| \rho = 1) \label{eq:var_decomp1}\\
&= \beta^T  Cov(\sx^T\Z | \rho = 1]  \beta +   Cov(\se^T \Z | \rho = 1)    + 2 \beta^T Cov( \sx^T\Z, \se^T\Z | \rho = 1) \label{eq:var_decomp2}
    \end{align}
\end{lemma}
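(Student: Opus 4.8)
The plan is to collapse everything onto the single scalar random variable $\scc^T\Z$ via the linear-balance decomposition of \citet{harshaw2019balancing} (Lemma~\ref{lem:decomposition1}), and then expand its variance using the assumed linear model. The key observation is that, once we condition on $\x$ and $\y$, the estimator $\htau^T_Y$ differs from $\frac{2}{n}\scc^T\Z$ only by a $\Z$-free constant, so its conditional variance is just that of $\scc^T\Z$.

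First I would apply Lemma~\ref{lem:decomposition1} to the \emph{importance-weighted} outcomes. Since $n_0=n_1=n/2$, the estimator reads $\htau^T_Y=\frac{2}{n}\big(\sum_{Z_i=1} w_i y^*_i-\sum_{Z_i=-1} w_i y^*_i\big)$, so substituting $\tilde{y}^a_i=w_i y^a_i$ (hence $\tilde{c}_i=\tfrac12(\tilde{y}^1_i+\tilde{y}^0_i)$) into Lemma~\ref{lem:decomposition1} gives the algebraic identity
\[
\htau^T_Y=\frac{2}{n}\scc^T\Z+\frac{1}{n}\1^T\sbdelta,
\]
where $\sbdelta$ has entries $\tilde{y}^1_i-\tilde{y}^0_i$. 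This holds for every fixed $\z$. Conditioning on $\x,\y$ fixes both $\scc$ and $\sbdelta$, so $\frac{1}{n}\1^T\sbdelta$ is constant and all randomness in $\htau^T_Y$ comes from $\Z$. Taking $\var_{\Z}(\cdot\mid\x,\y,\rho=1)$ therefore annihilates the constant term and yields $\frac{n^2}{4}\var_{\Z}(\htau^T_Y\mid\x,\y,\rho=1)=\var_{\Z}(\scc^T\Z\mid\rho=1)$; because $\scc^T\Z$ is a scalar, this variance equals $Cov(\scc^T\Z\mid\rho=1)$, which is the first claimed equality.

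For the second equality I would substitute the model. Writing the rows of $\sx$ as $\tilde{x}_i^T$ and using $\sepsilon_i=\tilde{c}_i-\beta^T\tilde{x}_i$, we have $\scc=\sx\beta+\se$, hence $\scc^T\Z=\beta^T(\sx^T\Z)+\se^T\Z$, a sum of two scalar random variables. Expanding the variance of this sum as $\var(A)+\var(B)+2\,Cov(A,B)$ and applying the identities $\var(\beta^T U)=\beta^T Cov(U)\beta$ and $Cov(\beta^T U,B)=\beta^T Cov(U,B)$ with $U=\sx^T\Z\in\mathcal{R}^d$ produces exactly
\[
\beta^T Cov(\sx^T\Z\mid\rho=1)\beta+Cov(\se^T\Z\mid\rho=1)+2\beta^T Cov(\sx^T\Z,\se^T\Z\mid\rho=1).
\]

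The computation is essentially bookkeeping. The only points requiring care are (i) translating Lemma~\ref{lem:decomposition1} to the weighted quantities and confirming the residual term is $\Z$-free after conditioning on $\x,\y$, and (ii) keeping the vector/scalar covariance conventions straight, so that $\sx^T\Z$ is handled as a random vector in $\mathcal{R}^d$ while $\se^T\Z$ and $\beta^T\sx^T\Z$ are scalars. I do not expect a substantive obstacle; in particular, since every term here is a genuine (centered) covariance rather than a raw second moment, the symmetry hypothesis $\rho(\x,\Z)=\rho(\x,-\Z)$ is not actually needed for these two identities — it is only used downstream (e.g.\ to conclude $\EE[\scc^T\Z\mid\rho=1]=0$), so the derivation goes through verbatim for any $\rho$.
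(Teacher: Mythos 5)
Your proof is correct and rests on the same two pillars as the paper's: the algebraic identity of Lemma~\ref{lem:decomposition1} applied to the weighted quantities, and bilinearity of covariance after substituting $\scc = \sx\beta + \se$. The one place you genuinely diverge is the passage from $\var_{\Z}(\htau^T_Y\mid\x,\y,\rho=1)$ to $\frac{4}{n^2}Cov(\scc^T\Z\mid\rho=1)$. The paper gets there in two steps that each invoke the symmetry of $\rho$: it first computes $\EE_{\Z}[\htau^T_Y\mid\x,\y,\rho=1]=\frac{1}{n}\sum_i w_i(y^1_i-y^0_i)$ via Lemma~\ref{lem:helper}, writes the variance as the raw second moment $\frac{4}{n^2}\EE[(\scc^T\Z)^2\mid\rho=1]$, and then identifies that with the centered covariance using $\EE[\scc^T\Z\mid\rho=1]=0$ (Lemma~\ref{lem:helper} again). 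You instead observe that $\htau^T_Y=\frac{2}{n}\scc^T\Z+\frac{1}{n}\1^T\sbdelta$ with the second term $\Z$-free once $\x,\y$ are fixed, so translation invariance of the variance gives the first identity directly. This is a legitimate streamlining, and your side remark is accurate: with $Cov$ read as the centered covariance (which is how the paper uses it), these two identities hold for an arbitrary conditioning event, and the symmetry hypothesis is only needed downstream — e.g.\ in Lemma~\ref{lem:variance_decomposition}, where $\EE[\scc^T\Z\mid\rho=1]=0$ and $\EE[A_i\mid\rho=1]=\tfrac12$ are actually used. The paper's route has the small advantage of producing the explicit conditional mean as a by-product, which it reuses elsewhere; yours has the advantage of isolating exactly which hypotheses each conclusion needs.
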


\begin{proof}[Proof of Lemma~\ref{lem:variance_decomposition_condition_on_y}]
By definition:
\begin{align*}
\var_{\Z}(\hat{\tau}^T_Y |  \x, \y, \rho = 1) 
&= \EE_{\Z} \left[(\hat{\tau}^T_Y - \mathbb{E}_{\Z}[\hat{\tau}^T_Y | \x, \y, \rho = 1 ] )^2 | \x, \y, \rho = 1 \right] 
\end{align*}
We have:
\begin{align*}
\EE_{\Z} [\hat{\tau}^T_Y | \x , \y, \rho = 1] &= \frac{2}{n}\EE_{\Z} \left[ \sum_{Z_i = 1} w_i y^*_i - \sum_{Z_i = -1} w_i y^*_i \bigg| \rho = 1 \right] \\
&= \frac{2}{n}\EE \left[ \sum_{i=1}^n A_i w_i y^1_i  - \sum_{i=1}^n (1-A_i) w_i y^0_i   \bigg| \rho = 1 \right]\\
&= \frac{2}{n} \left( \sum_{i=1}^n \EE[A_i| \rho = 1] w_i y^1_i - \sum_{i=1}^n \EE[1-A_i| \rho=1] w_i y^0_i \right)\\
&= \frac{1}{n} \left(\sum_{i=1}^n  w_i y^1_i - \sum_{i=1}^n w_i y^0_i \right) \text{ because $\EE[A_i| \rho = 1] = 1/2$ by Lemma~\ref{lem:helper}}
\end{align*}
Therefore using Lemma~\ref{lem:decomposition1}:
\begin{align*}
\var_{\Z}(\hat{\tau}^T_Y |  \x, \y, \rho = 1)  &= \EE_{\Z} \left[ \left( \frac{2}{n}(\sum_{Z_i = 1} w_i y^*_i - \sum_{Z_i = -1} w_i y^*_i) - \frac{1}{n} \sum_{i=1}^n w_i (y^1_i - y^0_i) \right)^2 \bigg|\x, \y, \rho = 1 \right] \\
&= \frac{4}{n^2}  \EE[\scc^T \Z\Z^T \scc| \x, \y, \rho = 1]\\
&= \frac{4}{n^2} Cov(\scc^T \Z | \x, \y, \rho = 1) \text{ because } \EE[\scc^T \Z | \x, \y, \rho =1] = 0 \text{ from Lemma~\ref{lem:helper}} \\
&=  \frac{4}{n^2} Cov((\sx\beta + \se)^T \Z | \x, \y, \rho = 1) \\
&= \beta^T  Cov(\sx^T\Z | \x, \y, \rho = 1]  \beta +   Cov(\se^T \Z | \x, \y, \rho = 1)    + 2 \beta^T Cov( \sx^T\Z, \se^T\Z | \x, \y, \rho = 1)
\end{align*}
\end{proof}

\begin{proof}[Proof of Lemma~\ref{lem:variance_decomposition}]
By law of total variance:
\begin{align*}
\var^S_{\Y, \Z_{\rho} } (\hat{\tau}^T_Y | \x )  
&= \EE^S_{\Y} \left[ \var_{\Z} (\hat{\tau}^T_Y | \x, \Y, \rho = 1)  | \x \right] + \var^S_{\Y} ( \EE_{\Z} [\hat{\tau}^T_Y | \x, \Y, \rho = 1] |\x) \\
&=  \EE^S_{\Y} \left[ \var_{\Z} (\hat{\tau}^T_Y | \x, \Y, \rho = 1)  | \x \right]  + \var^S_{\Y} \left( \frac{1}{n} \sum_{i=1}^n w_i (Y^1_i  - Y^0_i) | \x  \right) \\
&=  \EE^S_{\Y} \left[ \var_{\Z} (\hat{\tau}^T_Y | \x, \Y, \rho = 1)  | \x \right]  + \frac{1}{n^2} \sum_{i=1}^n w_i^2 \var (\E_{1}  - \E_{0})  \\
&=  \EE^S_{\Y} \left[ \var_{\Z} (\hat{\tau}^T_Y | \x, \Y, \rho = 1)  | \x \right]  + \frac{2}{n^2}\sigma_{\E}^2 \sum_{i=1}^n w_i^2 
\end{align*}
From Lemma~\ref{lem:variance_decomposition_condition_on_y}:
\begin{align*}
   \frac{n^2}{4} \var_{\Z} (\hat{\tau}^T_Y |  \x, \y, \rho = 1) &=  \beta^T  Cov(\sx^T\Z |\x, \y, \rho = 1]  \beta +   Cov(\se^T \Z | \x, \y,\rho = 1)    + 2 \beta^T Cov( \sx^T\Z, \se^T\Z |\x, \y, \rho = 1) \\
   &= \beta^T  Cov(\sx^T\Z | \x, \y,\rho = 1]  \beta +   \se^T Cov(\Z | \x, \y,\rho = 1) \se   + 2 \beta^T Cov( \sx^T\Z, \Z | \x, \y,\rho = 1) \se
\end{align*}
Recall that $Y^1_i = \beta_1^T X_i + \E^1_i$ and $Y^0_i = \beta_1^T X_i + \E^0_i$. 
Let $\E_i = \frac{\E^1_i + \E^0_i}{2}$ and $\SBE = (\E_1, \cdots, \E_n)$. Since $\se$ is the value of $\SBE$ we have: 
\begin{align*}
    &\frac{n^2}{4} \EE^S_{\Y} \left[ \var_{\Z} (\hat{\tau}^T_Y | \x, \Y, \rho = 1)  | \x\right]  \\
    &=   \beta^T  Cov(\sx^T\Z | \x, \y,\rho = 1]  \beta +   \EE^S_{\Y} [\SBE^T Cov(\Z | \x, \Y,\rho = 1) \SBE | \x]   + 2 \beta^T Cov( \sx^T\Z, \Z | \x, \y, \rho = 1) \EE[\SBE | \x] \\
    &=\beta^T  Cov(\sx^T\Z | \x, \rho = 1]  \beta +   \EE^S_{\Y} [ Cov(\SBE^T\Z | \x, \rho = 1) | \x] \text{ because }\EE[\SBE | \x] = 0 
\end{align*}

The second term: 
\begin{align*}
      &\EE^S_{\Y} [ Cov(\SBE^T\Z |\x,  \rho = 1) |  \x] \\
      &= \EE_{\SBE} [ \EE_{\Z}[\SBE^T\Z \Z^T \SBE | \x,\rho = 1]  |  \x] \\
     &= \EE_{\SBE}\left[\sum_{i=1}^n \sum_{j=1}^n \EE_{\Z}[w_i \E_i Z_i Z_j \E_j w_j | \rho =1] |\x\right]\\
     &= \EE_{\SBE}\left[\sum_{i=1}^n \EE[w_i^2 \E_i^2 Z_i^2  | \rho = 1] + \sum_{i \neq j} \EE[w_i \E_i Z_i Z_j \E_j w_j | \rho = 1] | \x \right]\\
     &= \EE_{\SBE}\left[ \sum_{i=1}^n \EE[w_i^2 \E_i^2  1 | \rho = 1] + \sum_{i \neq j} \EE[w_i  Z_i Z_j \E_j w_j | \rho = 1] \EE[\E_i | \rho=1]| \x \right] \text{ because }Z_i^2=1\\
     &= \sum_{i=1}^n \EE[w_i^2 \E_i^2  ] \text{ because }  \EE[\E_i | \rho =1]=0\\
     &= \sum_{i=1}^n w_i^2 \sigma_{\E}^2 
\end{align*}
Putting all together:
\begin{align*}
   \var^S_{\Y, \Z_{\rho} } (\hat{\tau}^T_Y | \x) &= \frac{4}{n^2} \left( \beta^T  Cov(\sx^T\Z | \x, \rho = 1]  \beta + \sum_{i=1}^n w_i^2\sigma^2_{\E} \right) + + \frac{2}{n^2}\sigma_{\E}^2 \sum_{i=1}^n w_i^2  \\
    &= \frac{4}{n^2} \beta^T  Cov_{\Z}(\sx^T\Z | \x, \rho = 1]  \beta + \frac{6}{n^2}\sigma_{\E}^2 \sum_{i=1}^n w_i^2  \\
\end{align*}
\end{proof}
\begin{proof}[Proof of Lemma~\ref{lem:expected_beta}]
We use the same decomposition of $\beta^T Cov_{\Z}(V|\x, \Omega) \beta$ as in~\citep{harshaw2019balancing}.  Let $\e_1, ..., \e_n$ and $\lambda_1, .., \lambda_n$ be the normalized eigenvectors and corresponding eigenvalues of  matrix $Cov_{\Z}(V | \x, \E) $. Since $Cov_{\Z}(V | \x, \E)$ is symmetric, the eigenvectors form an orthonormal basis so we can write $\beta$ as a linear combination of $\e_1, .., \e_n$ and get:
\begin{align*}
   \beta = \norm{\beta}\sum_{i=1}^n \eta_i \e_i 
\end{align*}
where $\eta_i = \langle \beta, \e_i \rangle/\norm{\beta}$ is the coefficient that captures the alignment of the weighted outcome $\beta$ with respect to the eigenvector $\e_i$. 
Therefore:
\begin{align*}
   \beta^T Cov_{\Z}(V | \x, \Omega) \beta = \norm{\beta}^2\sum_{i=1}^n \eta_i^2 \lambda_i
\end{align*}
Then:
\begin{align*}
  \EE_{\beta} \left[\beta^T Cov_{\Z}(V |\x, \Omega) \beta \right]&= \EE_{\beta} \left[\norm{\beta}^2\sum_{i=1}^n \eta_i^2 \lambda_i \right]\\
  &= l^2 \sum_{i=1}^n \lambda_i \EE_{\beta} [\eta_i^2] \\
  &= l^2 \sum_{i=1}^n \lambda_i \EE_{\theta} cos^2(\theta) \;\parbox{30em}{ where $\theta$  is the angle between $\beta$ and $\e_i$. Since $\beta$ points to any direction with equal probability, $\theta$ is uniformly distributed in $[0,2\pi]$.} \\
  &= \frac{l^2}{2} \sum_{i=1}^n \lambda_i \\
  &= \frac{l^2}{2} \text{Trace}(Cov_{\Z}(V |\x, \Omega) ).
\end{align*}
\end{proof}
\begin{proof}[Proof of Lemma~\ref{lem:trace_optimality}]
Let $p(u)$ be the pdf of $U$. Define $f(u)$ as follow:
\begin{align*}
    f(u) =p(U = u ,  \Omega) 
\end{align*}
Then:
\begin{align*}
   p(u | \Omega) = \frac{p(U=u, \Omega)}{\PP(\Omega)}  = \frac{f(u)}{1-\alpha}
\end{align*}
Since $\PP(\Omega) = 1-\alpha$ we have: 
\begin{align*}
    \int_u f(u) du = 1 - \alpha
\end{align*}
We have:
\begin{align*}
  Trace( Cov(U | \Omega)) = Trace( \EE[UU^T | \Omega])  = Trace( \EE[UU^T | \Omega] = Trace( \EE[U^TU | \Omega] = \int_u u^Tu \frac{f(u)}{1-\alpha} du
\end{align*}
We want to minimize $Trace( Cov(U | \Omega)) $:
\begin{align*}
   \int_u u^Tu \frac{f(u)}{1-\alpha} du
\end{align*}
subject to:
\begin{align*}
   & 0 \le f(u) \le p(u) ~\forall u \\
& \int_u f(u) du = 1 - \alpha
\end{align*}
This can be done by maximize $f(u)$ so that $f(u) = p(u)$ for the smallest $u^T u$, which is equal to set $\Omega$ to be the event $\norm{U}^2 < u_{\alpha}$. 
\end{proof}
\begin{proof}[Proof of Theorem~\ref{thm:expected_beta_optimal}]
Let $\eta : = 1- \PP(\rho =1 | \x)$. Then $\eta \le \alpha$. Let $v_{\eta}$ be such that $\PP( \|V\|^2 <v_{\eta} | \x) = 1- \eta$. 
From Lemma~\ref{lem:variance_decomposition}:
\begin{align}
    \EE_{\beta} \var^S_{\Y, \Z_{\rho} }(\hat{\tau}^T_Y | \x) &= \frac{4}{n^2}  \EE_{\beta} \beta^T Cov(V|\x, \rho = 1) \beta  + \frac{6}{n^2}\sigma^2 \sum_{i=1}^n w_i^2 \\
    &= \frac{4}{n^2}  \frac{l^2}{2} \text{Trace}(Cov(V |\x, \rho = 1) )  + \frac{6}{n^2}\sigma^2 \sum_{i=1}^n w_i^2 \\
    &\ge \frac{4}{n^2}  \frac{l^2}{2} \text{Trace}(Cov(V |\x, \|V\|^2 < v_{\eta}) )  + \frac{6}{n^2}\sigma^2 \sum_{i=1}^n w_i^2 \\
    &\ge \frac{4}{n^2}  \frac{l^2}{2} \text{Trace}(Cov(V |\x, \|V\|^2 < v_{\alpha}) )  + \frac{6}{n^2}\sigma^2 \sum_{i=1}^n w_i^2 \text{ because $v_{\eta} \ge v_{\alpha}$}\\
    &\ge \frac{4}{n^2}  \frac{l^2}{2} \text{Trace}(Cov(V |\x, {\phi^{\alpha}_T}' = 1) )  + \frac{6}{n^2}\sigma^2 \sum_{i=1}^n w_i^2 \\
    &\ge \frac{4}{n^2}  \EE_{\beta} \beta^T Cov(V|\x, {\phi^{\alpha}_T}' = 1) \beta  + \frac{6}{n^2}\sigma^2 \sum_{i=1}^n w_i^2 \\
    &\ge  \EE_{\beta} \var^S_{\Y, \Z_{{\phi^{\alpha}_T}'}  }(\hat{\tau}^T_Y | \x )
\end{align}

\end{proof}
\begin{proof}[Proof of Corollary~\ref{col:var_d_greater_1}]
Let $\rho$ being the constant function $\rho(\x,\Z)=1$ for all $\x,\Z$. Then:
\begin{align*}
    \var^S_{ \Z_{\rho}, \Y} (\hat{\tau}^T_Y | \x) =  \var^S_{ \Z, \Y} (\hat{\tau}^T_Y | \x)
\end{align*}
From Theorem~\ref{thm:expected_beta_optimal} we have: 
\begin{align*}
     \EE_{\beta} \var^S_{ \Z_{\phi'_T} , \Y} (\hat{\tau}^T_Y | \x) \le \EE_{\beta} \var^S_{ \Z_{\rho}, \Y} (\hat{\tau}^T_Y | \x) = \EE_{\beta} \var^S_{ \Z, \Y} (\hat{\tau}^T_Y | \x)
\end{align*}
 \end{proof}

\end{document}


%

%

\onecolumn
\aistatstitle{Instructions for Paper Submissions to AISTATS 2021: \\
Supplementary Materials}

\section{FORMATTING INSTRUCTIONS}

To prepare a supplementary pdf file, we ask the authors to use \texttt{aistats2021.sty} as a style file and to follow the same formatting instructions as in the main paper.
The only difference is that the supplementary material must be in a \emph{single-column} format.
You can use \texttt{supplement.tex} in our starter pack as a starting point, or append the supplementary content to the main paper and split the final PDF into two separate files.

Note that reviewers are under no obligation to examine your supplementary material.

\section{MISSING PROOFS}

The supplementary materials may contain detailed proofs of the results that are missing in the main paper.

\subsection{Proof of Lemma 3}

\textit{In this section, we present the detailed proof of Lemma 3 and then [ ... ]}

\section{ADDITIONAL EXPERIMENTS}

If you have additional experimental results, you may include them in the supplementary materials.

\subsection{The Effect of Regularization Parameter}

\textit{Our algorithm depends on the regularization parameter $\lambda$. Figure 1 below illustrates the effect of this parameter on the performance of our algorithm. As we can see, [ ... ]}

\vfill